\newcommand{\quitar}[1]{}
\newcommand{\myneg}{\overline}
\newcommand{\F}{{\cal F}}
\newcommand{\G}{{\cal G}}
\newcommand{\R}{{\cal H}}
\newcommand{\K}{{\cal K}}
\newcommand{\B}{{\cal B}}
\newcommand{\A}{{\cal A}}
\newcommand{\C}{{\cal C}}
\newtheorem{property}{Property}
\newtheorem{example}{Example}
\newtheorem{lemma}{Lemma}
\newtheorem{theorem}{Theorem}
\newtheorem{definition}{Definition}
\newtheorem{corollary}{Corollary}
\newenvironment{proof}{\noindent{\bf Proof.}}{$\quad$\newline}
\title{Towards a Better Understanding of (Partial Weighted) MaxSAT Proof Systems
\thanks{Research funded by FEDER/Ministerio de Ciencia e Innovaci{\'{o}}n $-$ Agencia Estatal de Investigaci{\'{o}}n, Spain, with project RTI2018-094403-B-C33,
}
}
\author{Javier Larrosa \and Emma Rollon}
\author{\name{Javier Larrosa} \email larrosa@cs.upc.edu \\
        \addr Universitat Polit$\grave{e}$cnica de Catalunya \\
              Barcelona, Spain
        \AND
        \name Emma Rollon \email erollon@cs.upc.edu \\
        \addr Universitat Polit$\grave{e}$cnica de Catalunya \\
              Barcelona, Spain
}
\date{}
\begin{document}

\maketitle

\begin{abstract}
    MaxSAT, the optimization version of the well-known SAT problem, has attracted a lot of research interest in the last decade. Motivated by the many important applications and
    inspired by the success of modern SAT solvers,  researchers have developed many MaxSAT solvers. Since most research is algorithmic, its significance is mostly evaluated empirically.
    In this paper we want to address MaxSAT from the more formal point of view of Proof Complexity. With that aim we start providing basic definitions and proving some basic results. Then
    we analyze the effect of adding split and virtual, two original inference rules, to MaxSAT resolution. We show that each addition makes the resulting proof system stronger, with the virtual rule capturing the recently proposed concept of circular proof.
\end{abstract}

\section{Introduction}
\textit{Proof complexity} is the field aiming to understand the computational cost required to prove or refute statements. Different proof systems may provide different proofs for the same formula and some proof systems are provably more efficient than others.  When that happens, proof complexity cares about which elements of the more powerful proof system really make the difference.
 
 In propositional logic,  resolution-based proof systems that work with CNF formulas have attracted the interest of researchers for several decades \cite{DBLP:journals/jsyml/Buresh-OppenheimP07}. One of the reasons is that CNF is the working language of the extremely successful SAT solvers and the most essential ingredients of these algorithms (e.g, conflict analysis) can be understood and analyzed as proofs \cite{handbookSAT}.
 
 (Partial Weighted) \textit{MaxSAT} is the optimization version of SAT. Since many discrete optimization problems are naturally represented as MaxSAT problems, 
 the design of MaxSAT solvers has attracted the interest of researchers. Interestingly, while some of the first efficient MaxSAT solvers were strongly influenced by MaxSAT inference \cite{DBLP:journals/ai/LarrosaHG08}, this influence has diminished along time. The currently most efficient algorithms solve MaxSAT by sophisticated sequences of calls to SAT solvers\cite{DBLP:journals/constraints/MorgadoHLPM13,DBLP:journals/ai/AnsoteguiBL13,DBLP:conf/ijcai/BacchusHJS18}. 
 
 The purpose of this paper is to improve our understanding of resolution-based MaxSAT proof systems. This is important at least for two reasons. One is to understand if there is some fundamental explanation for SAT-based MaxSAT solvers being superior to MaxSAT-resolution-based MaxSAT solvers. Another reason is to better understand the advantages and disadvantages of different inference rules which, in turn, can shed some light on the power of MaxSAT resolution and help produce better solvers.
 
 This paper contributes in both directions\footnote{A preliminary version of this paper appear in \cite{DBLP:conf/aaai/LarrosaR20} and \cite{SATLarrosa20}.}. First, we extend some classic proof complexity concepts (i.e, entailment, completeness, etc) to facilitate a proof complexity approach to MaxSAT. One interesting result is that, similarly to what happens in SAT, refutational completeness makes completeness somehow redundant or, in other words, that a MaxSAT solver can be used to prove or disprove entailment. We also introduce \textit{split} and \textit{virtual}, two new MaxSAT inference rules that complement MaxSAT resolution. We show that each add-on makes a 
 stronger system. More precisely, we show that: the proof system containing only resolution (\textbf{Res}) is sound and refutationally complete; adding the split rule (\textbf{ResS}) we get completeness and (unlike what happens in SAT) exponential speed-up in certain refutations; further adding the virtual rule (\textbf{ResSV}), which allows to keep negative weights during proofs, we get further exponential speed-up by capturing the concept of \textit{circular proofs} \cite{DBLP:conf/sat/AtseriasL19}. It is known that SAT circular proofs can efficiently refute the Pigeon Hole Principle. We show that \textbf{ResSV} can refute \textit{hard} and \textit{soft} versions of the Pigeon Hole Principle. From our work we also get the interesting and unexpected result that in some cases rephrasing a SAT refutation as a MaxSAT entailment may transform the problem from exponentially hard to polynomial when using \textbf{ResSV}.
 
 Figure \ref{fig:resumen} summarizes the main contributions of the paper in terms of comparing the different proof systems. The top row considers the general case of MaxSAT proofs and the bottom row considers the particular case of SAT refutations (i.e, refutation of MaxSAT formulas with hard clauses, only) as considered in \cite{DBLP:conf/sat/IgnatievMM17,DBLP:journals/eccc/FilmusMSV20,DBLP:conf/aaai/BonetBIMM18,DBLP:conf/sat/BonetL20}.

The structure of the paper is as follows: in Section~\ref{Sec-Background} we provide preliminaries on SAT and MaxSAT. In Section~\ref{Sec-PHP} we define some variations of the Pigeon Hole Problem that we need for the proofs of the theorems. In Section~\ref{Sec-SATPS} we review some concepts on SAT proof systems in order to facilitate their extension to MaxSAT, which is provided in Section~\ref{Sec-MaxPS}. 
In Section~\ref{Sec-MaxRules} we present, discuss and analyze the three proof systems: \textbf{Res}, \textbf{ResS} and \textbf{ResSV}.
In Section~\ref{Sec-Circular} we show how the strongest proof system \textbf{ResSV} captures the notion of Circular Proof. In Section \ref{Sec-Related} we contextualize our work with some previous related works and finally, in Section~\ref{Sec-Conclusions}, we give some conclusions.

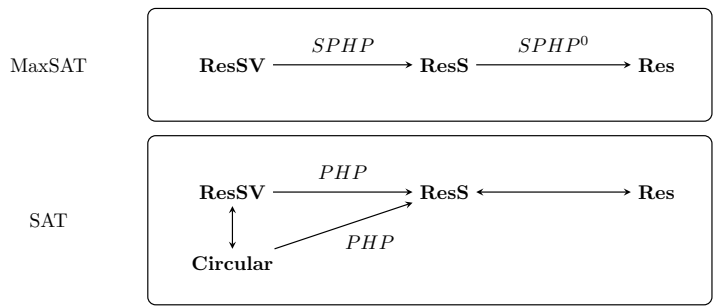
\begin{figure}[t]
\begin{center}
\scalebox{0.75}{\begin{tikzpicture}

\tikzset{%
every node/.style={fill=white, rounded corners, node distance=5em},%
every path/.style={->, >=stealth, line width=.06em}%
}

\tikzstyle{op}=[draw, fill=black!5!white]

\node [draw,rectangle, minimum width=10cm, minimum height=2cm, fill=none] (maxsat) {};
\node[left of = maxsat, xshift=-5cm] {MaxSAT};

\node [below of=maxsat, yshift=-1cm,draw,rectangle, minimum width=10cm, minimum height=3cm, fill=none] (sat) {};
\node[left of = sat, xshift=-5cm] {SAT};

\node[overlay, yshift=0cm, xshift=-3.5cm](ressv){\textbf{ResSV}};
\node[overlay,right of=ressv, xshift=2cm](ress){\textbf{ResS}};
\node[overlay,right of=ress, xshift=2cm](res){\textbf{Res}};

\draw[overlay] (ressv) -- node[overlay,above, midway, yshift=0.1cm, align=center] {$SPHP$} (ress) ;
\draw[overlay] (ress) -- node[overlay,above, midway, yshift=0.1cm, align=center] {$SPHP^0$} (res) ;

\node[overlay, below of=ressv, yshift=-0.5cm](satressv){\textbf{ResSV}};
\node[overlay, below of=ress, yshift=-0.5cm](satress){\textbf{ResS}};
\node[below of=res, yshift=-0.5cm](satres){\textbf{Res}};
\node[overlay, below of=satressv, yshift=0.5cm](circular){\textbf{Circular}};
\draw[overlay,<->] (satressv) to (circular) ;

\draw[overlay] (satressv) -- node[overlay, above, midway, yshift=0.1cm, align=center] {$PHP$} (satress) ;
\draw[overlay, <->] (satress) to (satres) ;
\draw[overlay] (circular) -- node[overlay, right, yshift=-0.3cm, xshift=-0.1cm,align=center] {$PHP$} (satress) ;

\end{tikzpicture}}
    \caption{Comparison among different proof systems. An $\textbf{S}\longrightarrow \textbf{S'}$ means that $\textbf{S}$ p-simulates $\textbf{S'}$, but $\textbf{S'}$ does not p-simulate $\textbf{S}$. The problem that proves that the p-simulation is not in both directions is indicated over the arrow. In the lower rectangle the comparison is restricted to SAT refutations (i.e, input formulas with hard clauses only).
    }
    \label{fig:resumen}
\end{center}
\end{figure}

\section{Background}\label{Sec-Background}

\subsection{SAT Problem}
A \textit{boolean variable} $x$ takes values on the set $\{0,1\}$. A \textit{literal} is a variable $x$ (positive literal) or its negation $\myneg x$ (negative literal). 
We will use sets of literals to denote variable assignments (a.k.a. truth assignments) with literals with $x$ (respectively $\myneg x)$ representing that variable $x$ is instantiated with $1$ (respectively $0$).  
A \textit{clause} is a disjunction of literals. A clause $C$ is satisfied by a truth assignment $X$ if $X$ contains at least one of the literals in $C$. The \textit{empty clause} is denoted $\Box$ and cannot be satisfied. 

A \textit{CNF formula} $\F$ is a set of clauses (taken as a conjunction). A truth assignment satisfies a formula if it satisfies all its clauses. If such an assignment exists, we say that the assignment is a \textit{model} and the formula is \textit{satisfiable}. We say that formula $\F$ \textit{entails} formula $\G$, noted $\F\models \G$, if every model of $\F$ is also a model of $\G$. Two formulas $\F$ and $\G$ are \textit{equivalent}, noted $\F\equiv \G$, if they entail each other. 

Given a formula $\F$, the SAT problem, noted $SAT(\F)$, is to determine if $\F$ is satisfiable or not.
The negation of a clause $C=l_1\lor l_2 \lor \ldots \lor l_p$ is satisfied if all its literals are falsified and this can be trivially expressed in CNF as the set of unit clause $\myneg C= \{\myneg l_1, \myneg l_2, \ldots, \myneg l_p\}$.

\subsection{MaxSAT Problem}
A \textit{weight} $w$ is a non-negative integer or $\infty$ (i.e, $w\in \mathbb{N}\cup \{\infty\}$). We extend addition and subtraction to weights defining $\infty + w =\infty$ and $\infty - w=\infty$ for all $w$. Note that $v-w$ is only defined when $w\leq v$. 

A \textit{weighted clause} is a pair $(C,w)$ where $C$ is a clause and $w$ is a weight associated to its falsification. If $w=\infty$ we say that the clause is \textit{hard}, else it is \textit{soft}.

A \textit{weighted MaxSAT CNF formula} is a multiset of weighted clauses $\F= \{(C_1, w_1), \ldots, (C_p, w_p)\}$. If all the clauses are hard, we say that the formula is \textit{hard}. If all the clauses are soft, we say that the formula is \textit{soft}. Otherwise the formula is \textit{mixed}. Unless we explicitly say otherwise, we will assume mixed formulas. This definition of MaxSAT including soft and hard clauses is sometimes referred to as Partial Weighted MaxSAT \cite{DBLP:journals/constraints/MorgadoHLPM13} and corresponds to the most general MaxSAT language.


Given a formula $\F$, we define the cost of a truth assignment $X$, noted $\F(X)$, as the sum of weights over the clauses that are falsified by $X$. We say that formula $\F$ \textit{entails} formula $\G$, noted $\F\models \G$, if for all $X$, $\G(X)$ is a lower bound of $\F(X)$ (i.e., $\forall X$, $\F(X)\geq \G(X)$). We say that two formulas $\F$ and $\G$ are \textit{equivalent}, noted $\F\equiv \G$, if they entail each other (i.e., $\forall X$, $\F(X)=\G(X)$).  


Given a formula $\F$, the MaxSAT problem, noted $MaxSAT(\F)$,  is to find the minimum cost over the set of all truth assignments,
$$MaxSAT(\F)=\min_X \F(X)$$
Note that if the hard clauses of the formula make it unsatisfiable then $MaxSAT(\F)=\infty$.

In the following sections we will find useful to deal with negated weighted clauses. Hence, the corresponding definitions and useful property. Let $A$ and $B$ be arbitrary disjunctions of literals. Let $(A \lor \myneg B, w)$ mean that falsifying $A \lor \myneg B$ incurs a cost of $w$. Although $A \lor \myneg B$ is not a clause, the following property shows that it can be efficiently transformed into a weighted CNF equivalent.

\begin{property}\label{propNegCl}
$\{(A \lor \myneg{l_1\lor l_2 \lor \ldots \lor l_p} , w)\}\equiv \{(A \lor \myneg l_1, w), (A \lor l_1 \lor \myneg l_2, w), \ldots, (A \lor l_1 \lor \ldots \lor l_{p-1} \lor \myneg l_p, w) \}$.
\end{property}{}

The negation of a MaxSAT formula $\F$ is the negation of all its clauses,
$$\myneg\F=\{(\myneg C, w) \mid (C,w)\in \F\}$$
For example, the negation of
formula $\F=\{(x\lor y,\infty), (\myneg x \lor \myneg y,3)\}$
 is $\myneg \F=\{(\myneg x,\infty), (x\lor \myneg y,\infty), (x,3), (\myneg x \lor y,3)\}$.

\section{Pigeon Hole Problem and Variations}\label{Sec-PHP} 
We define the well-known Pigeon Hole Problem $\mathit{PHP}$ and three MaxSAT soft versions $\mathit{SPHP}$, $\mathit{SPHP}^0$ and $\mathit{SPHP}^1$, that we will be using in the proof of  our results.

In the
 \textit{Pigeon Hole Problem} $\mathit{PHP}$ the goal is to assign $m+1$ pigeons to $m$ holes without any pair of pigeons sharing their hole. In the usual SAT encoding there is a boolean variable $x_{ij}$ (with $1\leq i \leq m+1,$ and $1\leq j \leq m$) which is true if pigeon $i$ is in hole $j$. There are two groups of clauses. For each pigeon $i$, we have the clause,
$${\cal P}_i= \{x_{i1}\lor x_{i2}\lor \ldots \lor x_{im}\}$$
indicating that pigeon $i$ must be assigned to at least one hole. For each hole $j$ we have the set of clauses,
$${\cal H}_j=\{\myneg x_{ij} \lor \myneg x_{i'j} \mid 1\leq i<i'\leq m+1\}$$
indicating that hole $j$ is occupied by at most one pigeon. Let $\K$ be the union  of all these sets of clauses
$\K=\cup_{1\leq i \leq m+1} {\cal P}_i ~\cup_{1\leq j \leq m} {\cal H}_j$. 
It is obvious that $\K$ is an unsatisfiable CNF formula.
In MaxSAT notation the pigeon hole problem is,
$$\mathit{PHP}=\{(C,\infty) \mid C\in \K\}$$
\noindent and clearly $MaxSAT(\mathit{PHP})=\infty$.

In the \textit{soft Pigeon Hole Problem} $\mathit{SPHP}$  the goal is to find the assignment that falsifies the minimum number of clauses. In MaxSAT language it is encoded as,
$$\mathit{SPHP}=\{(C,1) \mid C\in \K\}$$
\noindent and it is obvious that $MaxSAT(\mathit{SPHP}) = 1$.

The $\mathit{SPHP^0}$ problem is like the soft pigeon hole problem  but augmented with one more clause $(\Box, m^2+m)$ where $m$ is the number of holes. 
Note that $MaxSAT(\mathit{SPHP^0}) = m^2+m+1$.

Finally, the $\mathit{SPHP^1}$ problem is like the soft pigeon hole problem  but augmented with a set of unit clauses $\{(x_{ij},1),(\myneg x_{ij},1) \mid 1\leq i\leq m+1, 1\leq j\leq m\}$. 
Note that $MaxSAT(\mathit{SPHP^1}) = m^2+m+1$.

\section{SAT Proof Systems}\label{Sec-SATPS}

A \textit{SAT proof system} $\textbf{S}$ is a set of inference rules. An \textit{inference rule} is given by a set of antecedent clauses and a set of consequent clauses. In SAT, an inference rule means that if the antecedents are members of the formula, the consequents can be \textit{added}. The rule is \textit{sound} if every truth assignment that satisfies the antecedents also satisfies the consequents. 

 A \textit{proof}, or \textit{derivation}, under a proof system $\textbf{S}$ is a finite sequence $C_1, C_2, \ldots , C_e$ where the start of the sequence, $C_1,\ldots , C_p$, is the original formula ${\cal F}$ and each $C_i$ (with $i>p$) is obtained by applying an inference rule from $\textbf{S}$ with earlier antecedents (i.e., $C_j$ with $j<i$). The \textit{length} of the proof is $e-p$. A \textit{polynomial size} proof is a proof whose length can be bounded by a polynomial on $|{\cal F}|$. 
 
 We will write $\F\vdash_{S} \G$ to denote an arbitrary proof $\Pi= (C_1, C_2, \ldots , C_e)$ with $\F=\{C_1,\ldots, C_p\}$ and $\G \subseteq \cup_{i=1}^e \{C_i\}$ (abusing notation, in the following we will note $\G \subseteq \cup_{i=1}^e \{C_i\}$ as $\G \subseteq \Pi$ ). When the proof system is irrelevant or implicit from the context we will just write $\F\vdash \G$. A \textit{refutation} of $\F$ is a proof $\F \vdash_{S} \Box$. Refutations are important because they prove unsatisfiability.
 
 A proof $\Pi=(C_1, C_2, \ldots , C_e)$ can be graphically represented as an acyclic directed bi-partite graph $G(\Pi) = (J \cup I, E)$ such that in $J=\{C_1,...,C_e\}$ and each node in $I$ represents an inference step. Consider the inference step with antecedents ${\cal A}\subset \Pi$ and consequents $\C \subset \Pi$. Node $R\in I$ has ${\cal A}$ in-neighbours and $\C$ out-neighbours. 
 Since the same clause can be derived several times, different nodes in $J$ may correspond to the same clause. 
 Note that clauses in the original formula $\F$  do not have in-neighbors. The rest of the clauses have exactly one in-neighbour. All clauses may have several out-neighbors since they may be used as an antecedent several times during the proof.
 
 \begin{figure}
     \centering
\scalebox{0.75}{\begin{tikzpicture}
\tikzset{%
every node/.style={fill=white, rounded corners, node distance=5em},%
every path/.style={->, >=stealth, line width=.06em}%
}

\tikzstyle{op}=[draw, fill=black!5!white]

\node(xy){$x \lor y$};
\node[right of = xy](-x){$\myneg x$};
\node[right of = -x](-y){$\myneg y$};

\node[op, below of= xy, yshift=0.7cm, xshift=0.8cm](res1){res};
\draw (-x) to (res1);
\draw (xy) to (res1);
\node[below of= res1, yshift=0.7cm](y){y};
\draw (res1) to (y);
\node[op, below of= y, yshift=1cm, xshift=1.5cm](res2){res};
\draw (-y) to (res2);
\draw (y) to (res2);
\node[below of= res2, yshift=0.7cm](box){$\Box$};
\draw (res2) to (box);

\node[below of = box, yshift=-0.5cm](dummy){}; 
\end{tikzpicture}
\hspace{3cm}
\begin{tikzpicture}
\tikzset{%
every node/.style={fill=white, rounded corners, node distance=5em},%
every path/.style={->, >=stealth, line width=.06em}%
}

\tikzstyle{op}=[draw, fill=black!5!white]

\node(xy){$x \lor y$};
\node[right of = xy, xshift = 1cm](-x){$\myneg x$};
\node[right of = -x, xshift = 1cm](-y){$\myneg y$};

\node[op, below of= -x, yshift=0.7cm](split1){split};
\draw (-x) to (split1);
\node[below=of split1, yshift=1.3cm, xshift=-0.7cm](-xy){$\myneg x \lor y$};
\node[below=of split1, yshift=1.3cm, xshift=0.7cm](-x-y){$\myneg x \lor \myneg y$};
\draw (split1) to (-xy);
\draw (split1) to (-x-y);
\node[op, below of= -xy, yshift=0.7cm, xshift=-0.7cm](res1){sym res};
\draw (-xy) to (res1);
\draw (xy) to (res1);
\node[below of= res1, yshift=0.7cm](y){y};
\draw (res1) to (y);
\node[op, below of= y, yshift=1cm, xshift=2cm](res2){sym res};
\draw (-y) to (res2);
\draw (y) to (res2);
\node[below of= res2, yshift=0.7cm](box){$\Box$};
\draw (res2) to (box);
\end{tikzpicture}}
     \caption{Refutation graph for $\{x\lor y, \myneg x, \myneg y\}$ using the resolution rule (left) and symmetric resolution with split (right). }
     \label{fig:satgraph}
 \end{figure}
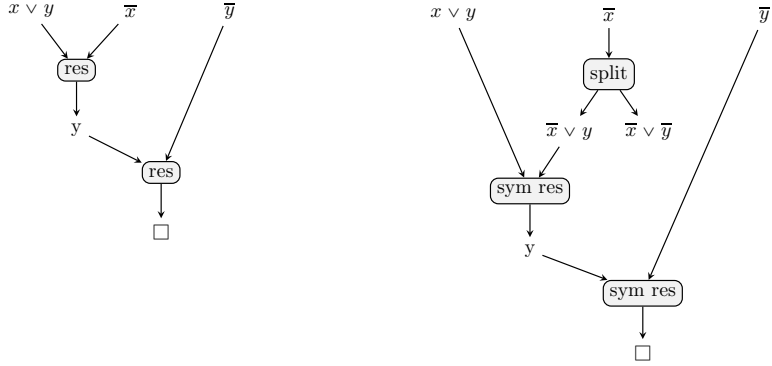
 
A proof system $\textbf{S}$ is \textit{sound} if $\F \vdash_{S} \G$ implies $\F \models \G$. It is \textit{complete} if $\F\models \G$ implies 
$\F\vdash_{S}\G$. 
Although completeness is a natural and elegant property, it has limited practical interest. For that reason a weaker version of completeness has been defined. A proof system $\textbf{S}$ is \textit{refutationally complete} if $\F \models \Box$ implies  $\F\vdash_{S} \Box$. In words, for every unsatisfiable formula $\F$ there is a refutation $\F\vdash_{S} \Box$ (i.e, completeness is required only for refutations). It is usually believed that refutational completeness is enough for practical purposes. The reason is that $\F \models \G$ if and only if $\F\cup \myneg \G \models \Box$ (i.e., $\F\cup \myneg \G$ is unsatisfiable), so any refutationally complete proof system can prove the entailment by deriving $\Box$ from a CNF formula equivalent to $\F\cup \myneg \G$.

The most usual way to compare the strength of different proof systems is with the concept of $p$-simulation. We say that proof system $\textbf{S}$ $p$-simulates proof system $\textbf{S'}$ if there is a polynomially computable function $f$ such that for every $\textbf{S}$-refutation $\Pi$ of formula $\F$, $f(\Pi)$ is an $\textbf{S'}$-refutation of the same formula $\F$. If $\textbf{S}$ $p$-simulates $\textbf{S'}$ and $\textbf{S'}$ does not $p$-simulate $\textbf{S}$ we say that $\textbf{S}$ is stronger or more powerful than $\textbf{S'}$.

Consider the following three sound inference rules \cite{Robinson:1965:MLB:321250.321253} \cite{DBLP:conf/sat/AtseriasL19},

\[
\begin{array}{ccccc}
x\lor A \hspace{1cm} \myneg x \lor B & & x\lor A \hspace{1cm} \myneg x \lor A & &A\\
\cline{1-1} 
\cline{3-3} 
\cline{5-5} 
A\lor B  & & A & & A \lor x \hspace{.5cm} A \lor \myneg x  \\
\texttt{(resolution)} & & \texttt{(symmetric resolution)} & &\texttt{(split)} \\
\end{array}
\]

\noindent where $A$ and $B$ are arbitrary (possibly empty) disjunctions of literals and $x$ is an arbitrary variable. In propositional logic it is customary to define rules with just one consequent because one rule with $s$ consequents can be obtained from $s$ one-consequent rules. As we will see, this is not the case in MaxSAT. For this reason, here we prefer to introduce the two-consequents split rule instead of the equivalent weakening rule \cite{DBLP:conf/sat/AtseriasL19} to keep the parallelism with MaxSAT more evident.

It is well-known that the proof system made exclusively of resolution is refutationally complete and adding the split rule makes the system complete.  However, the following property says that adding the split rule does not give any advantage to resolution in terms of refutational power,

\begin{property}\label{sat-res-ress}[(see Lemma 7 in \cite{DBLP:journals/jacm/Atserias04}]
A proof system with resolution and split as inference rules cannot make shorter refutations than a proof system with only resolution.
\end{property}

It is easy to see that resolution can be simulated by split and symmetric resolution, so the resulting proof system is also complete.
Figure \ref{fig:satgraph} shows a refutation graph of $\{x\lor y, \myneg x, \myneg y\}$ 
using the resolution rule (left) and symmetric resolution with split (right).

\section{MaxSAT Proof Systems and Completeness}
\label{Sec-MaxPS}

A \textit{MaxSAT proof system} $\textbf{S}$ is a set of MaxSAT inference rules. A \textit{MaxSAT inference rule} is given by a set of antecedent clauses and a set of consequent clauses. In MaxSAT, the application of an inference rule is to \textit{replace} the antecedents by the consequents. The process of applying an inference rule to a formula $\F$ is noted $\F;\F'$. The rule is \textit{sound} if it preserves the equivalence of the formula i.e, $\F \equiv \F'$.

A \textit{proof}, or \textit{derivation}, with proof system $\textbf{S}$ is a sequence ${\cal F}_0; {\cal F}_1; \ldots; {\cal F}_e$ where ${\cal F}_0$ is the original formula $\F$ and each ${\cal F}_i$ is obtained by applying an inference rule from $\textbf{S}$. The length of the proof is $e$. Note that MaxSAT proofs are sequences of formulas while SAT proofs are sequences of clauses. We use the semi-colon to emphasize this distinction. The reason is that MaxSAT inference rules modify clauses already in the formula in order to derive new ones, so each step of the proof must carry along the whole formula. Note that a SAT proof $C_1,C_2,\ldots, C_e$ with $\F_0=\{C_1,...,C_p\}$ with comma notation can easily be transformed to the semi-colon notation as $\F_0;\F_1;\ldots;\F_{e-p}$ where each formula $\F_i$ contains the new clauses and all the previous clauses, $\F_i=\{C_1,...,C_{p+i}\}$.

We will write $\F\vdash_{S} \G$ to denote an arbitrary proof ${\cal F}_0; {\cal F}_1; \ldots; {\cal F}_e$ with $\F=\F_0$ and $\G\subseteq \F_e$.
A proof system $\textbf{S}$ is \textit{sound} if $\F \vdash_{S} \G$ implies $\F \models \G$. It is \textit{complete} if $\F\models \G$ implies  $\F\vdash_{S} \G$.
A $k$-\textit{refutation} of $\F$ is a proof $\F \vdash_{S} (\Box,k)$. 
A proof system is \textit{refutationally complete} if there is a proof $\F \vdash_{S} (\Box,k)$ for every formula $\F$ and every $k\leq MaxSAT(\F)$.
Until Section~\ref{Sec-Circular} we will only consider $k$-refutations with $k=MaxSAT(\F)$ and we will refer to them simply as refutations. Section~\ref{Sec-Circular} will consider the special case of $1$-refutations of hard formulas.

\begin{figure}[t]
     \centering
\scalebox{0.75}{\begin{tikzpicture}
\tikzset{%
every node/.style={fill=white, rounded corners, node distance=5em},%
every path/.style={->, >=stealth, line width=.06em}%
}

\tikzstyle{op}=[draw, fill=black!5!white]

\node(xy){$(x \lor y, \infty)$};
\node[right of = xy](-x){$(\myneg x, \infty)$};
\node[right of = -x](-y){$(\myneg y, 1)$};

\node[op, below of= xy, yshift=0.7cm, xshift=0.8cm](res1){res};
\draw (-x) to (res1);
\draw (xy) to (res1);
\node[below of= res1, yshift=0.7cm](-x2){$(\myneg x, \infty)$};
\node[right of= -x2, xshift=-0.6cm](y){$(y, \infty)$};
\node[left of= -x2, xshift=0.4cm](xy2){$(x \lor y, \infty)$};
\draw (res1) to (y);
\draw (res1) to (-x2);
\draw (res1) to (xy2);

\node[op, below of= y, yshift=0.7cm](unmerge){unmerge};
\draw (y) to (unmerge);
\node[below of= unmerge, yshift=0.7cm, xshift=-0.6cm](y2){$(y, \infty)$};
\node[below of= unmerge, yshift=0.7cm, xshift=0.6cm](y3){$(y, 1)$};
\draw (unmerge) to (y2);
\draw (unmerge) to (y3);

\node[op, below of= y3, yshift=1cm, xshift=1cm](res2){res};
\draw (-y) to (res2);
\draw (y3) to (res2);

\node[below of= res2, yshift=0.7cm](box){$(\Box, 1)$};
\draw (res2) to (box);

\node[below of=box, yshift=-0.4cm](dummy){}; 
\end{tikzpicture}
\hspace{3cm}
\begin{tikzpicture}
\tikzset{%
every node/.style={fill=white, rounded corners, node distance=5em},%
every path/.style={->, >=stealth, line width=.06em}%
}

\tikzstyle{op}=[draw, fill=black!5!white]

\node(xy){$(x \lor y, \infty)$};
\node[right of = xy, xshift=0cm](-x){$(\myneg x, \infty)$};
\node[right of = -x](xvirtual){$(x, 1)$};
\node[right of = xvirtual](-y){$(\myneg y, 1)$};
\node[left of = xy, xshift=-0.9cm](x-1){$(x, -1)$};

\node[op, above of = xy, yshift=-0.5cm, xshift=0.8cm](virtual){virtual};
\draw (virtual) to (x-1);
\draw (virtual) to (xvirtual);

\node[op, below of = -x, yshift=0.7cm](unmergex){unmerge};
\draw (-x) to (unmergex);
\node[below of = unmergex, yshift=0.7cm, xshift=-0.6cm](-xinf){$(\myneg x, \infty)$};
\node[below of = unmergex, yshift=0.7cm, xshift=0.6cm](-x1){$(\myneg x, 1)$};
\draw (unmergex) to (-xinf);
\draw (unmergex) to (-x1);

\node[op, below of = -x1, yshift=0.7cm, xshift=0.8cm](resx){res};
\draw (-x1) to (resx);
\draw (xvirtual) to (resx);

\node[below of= resx, yshift=0.8cm](boxx){$(\Box, 1)$};
\draw (resx) to (boxx);

\node[op, below of= boxx, yshift=0.7cm](splitbox){split};
\draw (boxx) to (splitbox);
\node[below of= splitbox, yshift=0.8cm, xshift=-0.6cm](-y1){$(\myneg y, 1)$};
\node[below of= splitbox, yshift=0.8cm, xshift=0.6cm](y1){$(y, 1)$};
\draw (splitbox) to (-y1);
\draw (splitbox) to (y1);

\node[op, right of= y1, xshift=-0.2cm](resbox){res};
\draw (y1) to (resbox);
\draw (-y) to (resbox);
\node[below of= resbox, yshift=0.7cm](r){$(\Box, 1)$};
\draw (resbox) to (r);

\node[op, below of= xy, yshift=-1.4cm](unmergexy){unmerge};
\draw (xy) to (unmergexy);
\node[below of= unmergexy, yshift=0.8cm, xshift=-0.9cm](xyinf){$(x \lor y, \infty)$};
\node[below of= unmergexy, yshift=0.8cm, xshift=0.9cm](xy1){$(x \lor y, 1)$};
\draw (unmergexy) to (xyinf);
\draw (unmergexy) to (xy1);

\node[op, below of= -y1, yshift=0.9cm, xshift=-1cm](resy){res};
\draw (-y1) to (resy);
\draw (xy1) to (resy);

\node[below of= resy, yshift=0.7cm](xy2){$(x \lor y, \infty)$};
\node[left of= xy2, xshift=0.3cm](x1){$(x, 1)$};
\node[right of= xy2, xshift=-0.1cm](-x-y){$(\myneg x \lor \myneg y, 1)$};
\draw (resy) to (xy2);
\draw (resy) to (x1);
\draw (resy) to (-x-y);

\node[op, left of= x1, xshift=-1cm](merge){merge};
\draw (x-1) to (merge);
\draw (x1) to (merge);

\end{tikzpicture}}
    \caption{A MaxSAT refutation graph for MaxSAT formula $\{(x\lor y, \infty), (\myneg x, \infty), (\myneg y, 1)\}$ using \textbf{Res} (left) and \textbf{ResSV} (right), respectively.}
    \label{fig:graphMaxSAT}
\end{figure}

As in the SAT case, a MaxSAT proof $\Pi=(\F_0; \F_1; \ldots; F_e)$ can be graphically represented as an acyclic directed bi-partite graph $G(\Pi) = (J \cup I, E)$ where there is one node in $J$ for each clause and one node in $I$ for each inference step. Clauses in $\F_0$ do not have in-neighbours. Consider proof step $\F_{i-1};\F_i$ where the antecedents of the inference are ${\cal A}\subseteq \F_{i-1}$ and the consequents are $\C\subseteq \F_i$. The inference node has ${\cal A}$ as in-neighbors and $\C$ as out-neighbors.

There are two differences with respect to the SAT case: nodes in $J$ contain a clause and a weight, and they have at most one out-neighbor. 
Figure \ref{fig:graphMaxSAT} shows two refutation graphs for $\{(x\lor y, \infty), (\myneg z, \infty), (\myneg y, 1)\}$ using two different proof systems, to be defined later.

Now we show that, similarly to what happens in SAT, refutationally completeness is sufficient for practical purposes. The reason is that it can also be used to prove  or disprove general entailment, making completeness somehow redundant.
Let the \textit{roof} of a formula $\F$, noted $\mathit{rf}(\F)$, be the sum of its weights,
$$\mathit{rf}(\F)=\sum_{(C,w) \in \F} w$$
The following property shows the effect of negating a soft formula.
\begin{property}\label{prop-negado}
If $\F$ is a soft MaxSAT formula then 
$$\myneg \F(X)= \mathit{rf}(\F) - \F(X)$$
\end{property}

\begin{proof} 
Given a clause $(C,w)$, any truth assignment always falsifies either $(C,w)$ or $(\myneg C,w)$, but not both. Therefore, for each clause $(C,w)\in \F$, any truth assignment $X$ will incur a cost $w$ in $\F \cup \myneg \F$. Consequently, $\F(X) +\myneg  \F(X) = \sum_{(w,C)\in \F} w = \mathit{rf}(\F)$, which proves the property.

\end{proof}

Next, we show that an entailment $\F \models \G$ can be rephrased as MaxSAT lower bound,

\begin{theorem} \label{TH-Maxsat-completeness}
Let $\F$ and $\G$ be two MaxSAT formulas, possibly with soft and hard clauses.
Then,
$$\F \models \G \text{ iff } MaxSAT(\F \cup {\myneg \G^{\gamma}})\geq \mathit{rf}({\G}^{\gamma})$$

\noindent where $\G^{\gamma}$ is similar to $\G$ but its infinity weights are replaced by a value $\gamma$ higher than  the maximum finite cost of $\F$,
$$\gamma > \max_{X \mid \F(X)<\infty} \F(X)$$
with $\gamma>0$ if $\F$ is unsatisfiable. 
\end{theorem}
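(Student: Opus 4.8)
The plan is to turn the theorem into the purely arithmetic statement ``$\F(X)\ge\G^{\gamma}(X)$ for every truth assignment $X$'', and then to show separately that this statement is equivalent to $\F\models\G$.

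First I would compute the cost function of $\F\cup\myneg{\G^{\gamma}}$. By construction $\G^{\gamma}$ is a \emph{soft} formula (all its weights are finite), so Property~\ref{prop-negado} applies to it and gives $\myneg{\G^{\gamma}}(X)=\mathit{rf}(\G^{\gamma})-\G^{\gamma}(X)$ for every $X$. Since the cost of a formula is additive over the union of its (weighted) clauses, $(\F\cup\myneg{\G^{\gamma}})(X)=\F(X)+\myneg{\G^{\gamma}}(X)=\F(X)+\mathit{rf}(\G^{\gamma})-\G^{\gamma}(X)$; note that $\mathit{rf}(\G^{\gamma})$ is a finite constant and $\mathit{rf}(\G^{\gamma})-\G^{\gamma}(X)\ge 0$, so every subtraction here is well defined even when $\F(X)=\infty$. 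Minimising over the (finitely many) assignments $X$ yields
\[ MaxSAT(\F\cup\myneg{\G^{\gamma}})=\mathit{rf}(\G^{\gamma})+\min_X\bigl(\F(X)-\G^{\gamma}(X)\bigr), \]
and hence $MaxSAT(\F\cup\myneg{\G^{\gamma}})\ge\mathit{rf}(\G^{\gamma})$ holds if and only if $\F(X)\ge\G^{\gamma}(X)$ for all $X$.

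It then remains to prove that ``$\forall X:\ \F(X)\ge\G^{\gamma}(X)$'' is equivalent to $\F\models\G$, i.e.\ to ``$\forall X:\ \F(X)\ge\G(X)$''. I would argue assignment by assignment, splitting on whether $X$ satisfies all the hard clauses of $\G$. If it does, then $\G(X)=\G^{\gamma}(X)$, because $\G$ and $\G^{\gamma}$ differ only in the weights of hard clauses and none of those is falsified by $X$; so the two inequalities coincide for this $X$. If $X$ falsifies some hard clause of $\G$, then $\G(X)=\infty$ whereas $\G^{\gamma}(X)\ge\gamma$. For this $X$ the forward implication ($\F\models\G\Rightarrow\F(X)\ge\G^{\gamma}(X)$) is immediate, since $\F(X)\ge\G(X)=\infty$ forces $\F(X)=\infty\ge\G^{\gamma}(X)$. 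The converse implication is where the hypothesis on $\gamma$ enters: if $\F(X)\ge\G^{\gamma}(X)\ge\gamma$ and $\F(X)$ were finite, then $\F(X)$ would be at most $\max_{X'\mid\F(X')<\infty}\F(X')<\gamma$, a contradiction; hence $\F(X)=\infty=\G(X)$, as required. The degenerate case where $\F$ is unsatisfiable is trivial — then $\F(X)=\infty$ for all $X$ and both sides of the equivalence hold — and the stipulation $\gamma>0$ is present only so that $\G^{\gamma}$ is a legal MaxSAT formula.

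The main obstacle is not any single computation but the careful bookkeeping with infinite weights throughout: one must check that each subtraction is applied only when the subtrahend is finite and no larger than the minuend, and, more importantly, one must see clearly that the defining property of $\gamma$ — strictly exceeding every finite cost of $\F$ — is exactly what allows a falsified hard clause of $\G$ to ``cost more than $\F$ can ever cost finitely'', which is the crux that lets the soft surrogate $\G^{\gamma}$ faithfully stand in for $\G$.
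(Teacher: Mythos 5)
Your proof is correct and follows essentially the same route as the paper: it uses Property~\ref{prop-negado} to rewrite the cost of $\F\cup\myneg{\G^{\gamma}}$ and then exploits the defining property of $\gamma$ (via a case split on whether $X$ falsifies a hard clause of $\G$, which matches the paper's split on $\G^{\gamma}(X)<\gamma$ versus $\G^{\gamma}(X)\geq\gamma$) to pass between $\G^{\gamma}$ and $\G$. The only difference is organizational — you factor the biconditional into two pointwise equivalences rather than arguing the two directions separately — and your bookkeeping of the infinite-weight cases is sound.
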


\begin{proof}
Let us prove the if direction. $\F \models \G$ means that 
$\F(X) \geq \G(X)$ for all $X$. We know, by construction of $G^{\gamma}$ that 
$\G(X) \geq \G^{\gamma}(X)$. 
Therefore, $\F(X) \geq \G^{\gamma}(X)$ for all $X$.  Because $\G^{\gamma}$ does not contain hard clauses, $\G^{\gamma}(X)<\infty$, which means that,
$\F(X) - \G^{\gamma}(X) \geq 0$.
Adding $rf(\G^{\gamma})$ to both sides of the inequality we get,
$\F(X) + rf(\G^{\gamma}) - \G^{\gamma}(X) \geq rf(\G^{\gamma})$.
By Property~\ref{prop-negado}, we have,
$\F(X) + {\myneg \G}^{\gamma}(X) \geq rf(\G^{\gamma})$
which clearly means that,
$MaxSAT(\F \cup {\myneg \G}^{\gamma}) \geq rf(\G^{\gamma})$.

Let us prove now the only if direction.
$MaxSAT(\F \cup {\myneg \G}^{\gamma}) \geq rf({\G}^{\gamma})$ implies that $\F(X) + {\myneg \G}^{\gamma}(X) \geq rf({\G}^{\gamma})$ for all $X$. Moreover, since ${\myneg \G}^{\gamma}$ does not have hard clauses, from Property~\ref{prop-negado} we know that,
$\F(X) + rf({\G}^{\gamma}) - {\G}^{\gamma}(X) \geq rf({\G}^{\gamma})$
so we have that
$\F(X) \geq \G^{\gamma}(X)$
and we need to prove that, $\F(X) \geq \G(X)$. There are two possibilities for $\G^{\gamma}(X)$,
\begin{enumerate}
\item If $\G^{\gamma}(X) < \gamma$ it means that $X$ does not falsify any of the clauses that are hard in $\G$. Therefore, $\G^{\gamma}(X) = \G(X)$, which means that $\F(X) \geq \G(X)$.

\item If $\G^{\gamma}(X) \geq \gamma$, since $\F(X) \geq G^{\gamma}(X)$, then $\F(X) \geq \gamma$ which, by definition of $\gamma$, means that $\F(X)=\infty$. Therefore, $\F(X) \geq \G(X)$.
\end{enumerate}

\noindent which proves the theorem.
\end{proof}

\begin{example}
Consider formulas $\F=\{(z,2), (x,5), (y,\infty)\}$ and $\G=\{(x\lor z,u),(y\lor z,\infty\}$ with $u$ being a finite weight. We can apply Theorem \ref{TH-Maxsat-completeness} to find out whether $\F\models \G$. 

Clearly  $\gamma = 8 > \max_{X \mid \F(X)<\infty} \F(X)$, so we define $\G^{\gamma}=\{(x\lor z,u),(y\lor z,8)\}$, $rf(\G^{\gamma})=u+8$ and  $\myneg \G^{\gamma}=\{(\myneg x, u), (x\lor \myneg z,u), (\myneg y, 8), (y\lor \myneg z,8)\}$. With $u=5$ we have that $MaxSAT(\F \cup \myneg \G^{\gamma})=13$ and $rf(\G^{\gamma})=13$ which implies that  $\F\models \G$. However, with $u=8$ we have $MaxSAT(\F \cup \myneg \G^{\gamma})=15$ and $rf(\G^{\gamma})=16$ which implies that  $\F \nvDash \G$.

\end{example}

The following corollary will be useful in sections 6.3 and 7.2.
\begin{corollary}\label{coro-1}
A hard CNF formula $\F$ entails a hard clause $(C,\infty)$, that is $\F\models \{(C,\infty)\}$, iff $MaxSAT(\F\cup \{(\myneg C, 1)\})\geq 1$.
\end{corollary}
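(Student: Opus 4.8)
The plan is to obtain this statement as a direct specialization of Theorem~\ref{TH-Maxsat-completeness} to the case $\G=\{(C,\infty)\}$. First I would unfold the MaxSAT notion of entailment for this particular $\G$: since $\G(X)=\infty$ when $X$ falsifies $C$ and $\G(X)=0$ otherwise, the condition $\forall X,\ \F(X)\geq\G(X)$ says exactly that every $X$ falsifying $C$ has $\F(X)=\infty$, i.e.\ every model of the hard formula $\F$ satisfies $C$, which is the usual propositional entailment $\F\models\{(C,\infty)\}$. This is just a sanity check that the corollary says what it should.

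Next I would pin down the parameter $\gamma$ from the theorem. Because $\F$ is a hard CNF formula, $\F(X)\in\{0,\infty\}$ for every $X$, so the maximum finite cost of $\F$ is $0$ when $\F$ is satisfiable and there is no finite cost at all when $\F$ is unsatisfiable; in either case the value $\gamma=1$ satisfies the requirement $\gamma>\max_{X\mid\F(X)<\infty}\F(X)$ together with the side condition $\gamma>0$ for the unsatisfiable case. With $\gamma=1$ we get $\G^{\gamma}=\{(C,1)\}$, hence $\mathit{rf}(\G^{\gamma})=1$ and $\myneg\G^{\gamma}=\{(\myneg C,1)\}$, understood as everywhere in the paper via the CNF transformation of Property~\ref{propNegCl}, so that $\F\cup\myneg\G^{\gamma}$ is precisely $\F\cup\{(\myneg C,1)\}$.

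Then I would simply invoke Theorem~\ref{TH-Maxsat-completeness}: $\F\models\{(C,\infty)\}$ iff $MaxSAT(\F\cup\myneg\G^{\gamma})\geq\mathit{rf}(\G^{\gamma})$, which after the substitutions above reads $MaxSAT(\F\cup\{(\myneg C,1)\})\geq1$. This completes the argument.

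I do not expect any genuine obstacle here; the only points that need a little care are verifying that $\gamma=1$ is a legal choice in the degenerate case where $\F$ is unsatisfiable (so the maximum is over the empty set and one falls back on the $\gamma>0$ clause of the theorem), and noting that the ``clause'' $(\myneg C,1)$ in the statement is to be read as its weighted CNF equivalent, exactly as in the definition of $\myneg\G$.
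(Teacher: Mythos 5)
Your proposal is correct and follows exactly the paper's own route: both apply Theorem~\ref{TH-Maxsat-completeness} with $\gamma=1$, obtaining $\G^{\gamma}=\{(C,1)\}$, $\mathit{rf}(\G^{\gamma})=1$ and $\myneg\G^{\gamma}=\{(\myneg C,1)\}$. Your extra check that $\gamma=1$ is legal because a hard formula has maximum finite cost $0$ (or none, in the unsatisfiable case) is a welcome detail the paper leaves implicit.
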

\begin{proof}
We can apply Theorem \ref{TH-Maxsat-completeness} with $\gamma=1$ no matter whether $\F$ is satisfiable or unsatisfiable. Then $\G^{\gamma}=\{(C,1)\}$,  $\mathit{rf}(\G^{\gamma})=1$ and $\myneg \G^{\gamma}=\{(\myneg C, 1)\}$. Hence the corollary holds.
\end{proof}

\section{MaxSAT resolution-based Proof Systems} \label{Sec-MaxRules}

MaxSAT proof systems implicitly assume the following two self-explained inference rules:


\[
\begin{array}{ccc}
(C, v) \hspace{0.5cm} (C, w) & \hspace{1cm}& (C, w)\\
\cline{1-1} 
\cline{3-3} 
 (C, v+w)& & (C, v) \hspace{0.5cm} (C, w-v)\\
\texttt{(merge)} & & \texttt{(unmerge)}\\
\end{array}
\]
where in the unmerge rule $v$ must be less than $w$.

In the following, we introduce and analyze the impact of three MaxSAT inference rules: \textit{resolution}, \textit{split} and \textit{virtual}. After the definition of each rule, we discuss the level of completeness that it adds to the proof system and what type of PHP problems it solves, which shows the incremental power of each proof system. 



\subsection{Resolution}

The MaxSAT \textit{resolution} rule~\cite{DBLP:conf/ijcai/LarrosaH05} is

\begin{center}
\[
\begin{array}{c}
(x\lor A, w) \hspace{.5cm} (\myneg x \lor B, w) \\ 
\cline{1-1} 
(A\lor B,w)  \\
(x\lor A, w-w)\ \ \ (\myneg x\lor B, w-w)\\
(x\lor A \lor \myneg B,w)\ \ \ (\myneg x \lor B \lor \myneg A, w)\\
\end{array}
\]
\end{center}

\noindent where $A$ and $B$ are arbitrary (possibly empty) disjunctions of literals, $w> 0$. When $A$ (resp. $B$) is empty, $\myneg A$ (resp. $\myneg B$) is constant true, so $x\lor \myneg A \lor B$ (resp. $x\lor A \lor \myneg B$) is tautological. When $w\neq \infty$ the antecedents will disappear because $w-w=0$. When $w=\infty$ the antecedents are replaced by themselves because $\infty-\infty=\infty$ or, in other words, hard clauses remain throughout the proof as they do in classical SAT resolution, which means that they can be used as antecedents any number of times (see the refutation graph of Figure  \ref{fig:graphMaxSAT} (left)).

\begin{example} 
The application of MaxSAT resolution to $(x\lor y\lor z, 1)$ and $(\neg x \lor y\lor p,1)$ corresponds to,

 \[ \begin{array}{c}
(x\lor y\lor z, 1)\ \ \ (\neg x \lor y\lor p,1)  \\
\hline
(y\lor z \lor p,1)\\
(x\lor y\lor z, 0)\ \ \ (\neg x \lor y\lor p,0)\\
(x\lor y \lor z \lor \neg y,1)\ \ (\neg x \lor \neg y \lor y \lor p,1)\\
(x\lor y\lor z \lor y \lor \neg p,1)\ \ \ (\neg x \lor y \lor \neg z \lor y \lor p,1)\\
\end{array}\]

\noindent Removing zero-cost clauses, tautologies and repeated literals, the resulting set of clauses is $\{(y\lor z \lor p,1),
(x\lor y\lor z \lor \neg p,1),
(\neg x \lor y \lor \neg z \lor p,1)\}$.
\end{example}

It is known that the proof system \textbf{Res} made exclusively of the resolution rule is sound and refutationally complete \cite{DBLP:journals/ai/BonetLM07,DBLP:journals/ai/LarrosaHG08}.
However, as we show next, it is not complete.
\begin{theorem}\label{res-implicat}
Proof system \textbf{Res} is not complete.
\end{theorem}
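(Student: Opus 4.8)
The plan is to exhibit a single concrete entailment $\F \models \G$ that the resolution rule alone cannot produce, thereby showing \textbf{Res} fails completeness. A natural candidate is a case where $\G$ contains a clause that is a genuine logical consequence of $\F$ but is strictly weaker (a superset of literals) than anything resolution can place in the formula. Since MaxSAT resolution \emph{replaces} its antecedents, and since every clause it introduces is obtained by resolving on a shared variable, the set of clauses reachable from a given $\F$ is quite restricted; in particular, starting from a formula over very few variables, one should be able to enumerate everything \textbf{Res} can derive.

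Concretely, I would take something like $\F = \{(x, \infty)\}$ (or a small soft variant) and $\G = \{(x \lor y, \infty)\}$. Clearly $\F \models \G$, since any assignment satisfying $x$ satisfies $x \lor y$, and the cost condition $\F(X) \ge \G(X)$ holds trivially. The key observation is that resolution applied to $\F$ alone does nothing: there is no pair of clauses clashing on a variable, so the only derivation is the trivial one, and $\F_e = \F$ for every proof. Hence no proof $\F \vdash_{\textbf{Res}} \G$ exists because $(x \lor y, \infty) \notin \F$ and $(x\lor y,\infty)$ is not a subset-witness of $(x,\infty)$ in the sense of the paper's $\subseteq$ relation (a longer clause with the same weight is not dominated by a shorter one). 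This establishes the theorem. I would also remark why this does not contradict refutational completeness: the obstruction is precisely the absence of a ``weakening'' step — resolution can shrink clauses but never enlarge them — which is exactly the gap that the split rule (introduced next) is designed to close.

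The main thing to be careful about is making the chosen instance airtight on \emph{both} the semantic side (verify $\F(X) \ge \G(X)$ for all $X$, including the hard-clause/$\infty$ bookkeeping) and the syntactic side (argue rigorously that \emph{no} sequence of resolution steps from $\F$ can ever contain a clause $(C,w)$ with $C \supseteq \{x,y\}$ or $C = \{x\lor y\}$). The cleanest way to nail the syntactic part is the monotonicity-of-literals remark: every consequent of the resolution rule, after removing tautologies, has its underlying clause contained in the union of the antecedent clauses' variables and, more to the point, resolution on a formula whose clauses pairwise share no complementary literal is vacuous. I expect this enumeration/closure argument to be the only real obstacle, and for the examples above it is essentially immediate, so the proof should be short.
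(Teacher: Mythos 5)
Your proposal is correct and takes essentially the same approach as the paper: exhibit a tiny formula on which resolution is vacuous (no two clauses clash on a variable) yet which entails a strictly longer clause not present in the formula; the paper uses $\F=\{(x,1),(y,1)\}\models(x\lor y,1)$ where you use $\F=\{(x,\infty)\}\models(x\lor y,\infty)$, but the argument is identical.
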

\begin{proof} Consider formula $\F=\{(x,1), (y,1)\}$. It is clear that $\F\models (x\lor y,1)$ which cannot be derived with \textbf{Res}.
\end{proof}

 It is known that \textbf{Res} cannot compute polynomial size refutations for \textit{PHP} \cite{HAKEN1985297} or \textit{SPHP} \cite{DBLP:journals/ai/BonetLM07}. However, we show next that it can efficiently refute $SPHP^1$. We write it as a property because it will be instrumental in the proof of several results in the rest of this section. The refutation graph (which is a straightforward adaptation of what was proved in \cite{DBLP:conf/sat/IgnatievMM17} and \cite{DBLP:conf/aaai/LarrosaR20}) appears in Figure~\ref{fig:sphp1}. The refutation uses the following Lemma. 

\begin{figure}
\begin{center}
\scalebox{0.75}{\begin{tikzpicture}

\tikzset{%
every node/.style={fill=white, rounded corners, node distance=5em},%
every path/.style={->, >=stealth, line width=.06em}%
}

\tikzstyle{op}=[draw, fill=black!5!white]

\node(-x){$\myneg x_n$};

\node[op, above of=-x, yshift=-0.7cm](op-x){res};
\draw (op-x) to (-x);

\node[right of=op-x, xshift=0.5cm, yshift=0.3cm](-k1-k){$\myneg x_{n-1}\lor \myneg x_n$};
\node[above of=op-x, node distance=3em](k1-k){$x_{n-1} \lor \lnot x_n$};
\node[right of=k1-k, node distance=14em](k1-k-r){$\myneg x_{n-2} \lor \myneg x_n \lor \myneg x_{n-1} $};
\draw (-k1-k) to (op-x);
\draw (k1-k) to (op-x);

\node[op, above of=k1-k, yshift=-0.7cm](opk1-k){res};
\draw (opk1-k) to (k1-k);
\draw (opk1-k) to (k1-k-r);

\node[right of=opk1-k, xshift=0.5cm, yshift=0.3cm](-k2-k){$\myneg x_{n-2}\lor \myneg x_n$};
\node[above of=opk1-k, node distance=3em](k2-k){$x_{n-2} \lor x_{n-1} \lor \myneg x_n$};
\node[right of=k2-k, node distance=15em](k2-k-r){$\myneg x_{n-3} \lor \myneg x_n \lor \myneg{x_{n-2} \lor x_{n-1}}$};
\draw (k2-k) to (opk1-k);
\draw (-k2-k) to (opk1-k);

\node[op, above of=k2-k, yshift=-0.7cm](opk2-k){res};
\draw (opk2-k) to (k2-k);
\draw (opk2-k) to (k2-k-r);

\node[right of= opk2-k, xshift=0.5cm, yshift=0.3cm] (-k3-k) {$\myneg x_{n-3} \lor \myneg x_n$};
\node[above of=opk2-k, node distance=4.5em](3-k){$x_3 \lor \ldots \lor x_{n-1} \lor \myneg x_n$};
\node[right of=3-k, node distance=15em](3-k-r){$\myneg x_2 \lor \myneg x_n \lor \myneg{x_3 \lor \ldots \lor x_{n-1}}$};
\draw[dashed] (3-k) to (opk2-k);
\draw (-k3-k) to (opk2-k);

\node[op, above of=3-k, yshift=-0.7cm](op3-k){res};
\draw (op3-k) to (3-k);
\draw (op3-k) to (3-k-r);

\node[above of=op3-k, node distance=3em](2-k){$x_2 \lor \ldots \lor x_{n-1} \lor \myneg x_n$};
\node[right of=op3-k, xshift=0.5cm, yshift=0.3cm](-2-k){$\myneg x_{2}\lor \myneg x_n$};
\node[right of=2-k, node distance=13em](2-k-r){$\myneg x_1 \lor \myneg x_n \lor \myneg{x_2 \lor \ldots \lor x_{n-1}}$};
\node[right of=2-k-r, node distance=12em](2-k-rr){$x_1 \lor \ldots \lor x_n$};
\draw (-2-k) to (op3-k);
\draw (2-k) to (op3-k);

\node[op, above of=2-k, yshift=-0.7cm](op2-k){res};
\draw (op2-k) to (2-k);
\draw (op2-k) to (2-k-r);
\draw (op2-k) to (2-k-rr);

\node[above of=op2-k, node distance=3em](1k1){$x_1 \lor \ldots \lor x_{n-1}$};
\node[right of=1k1, node distance=7em, xshift=0.5cm](-1-k){$\myneg x_{1}\lor \myneg x_n$};
\draw (-1-k) to (op2-k);
\draw (1k1) to (op2-k);

\end{tikzpicture}}
\end{center}
\caption{Proof of Lemma \ref{le1}. All clauses have cost 1.}
\label{fig-lemma1}
\end{figure}
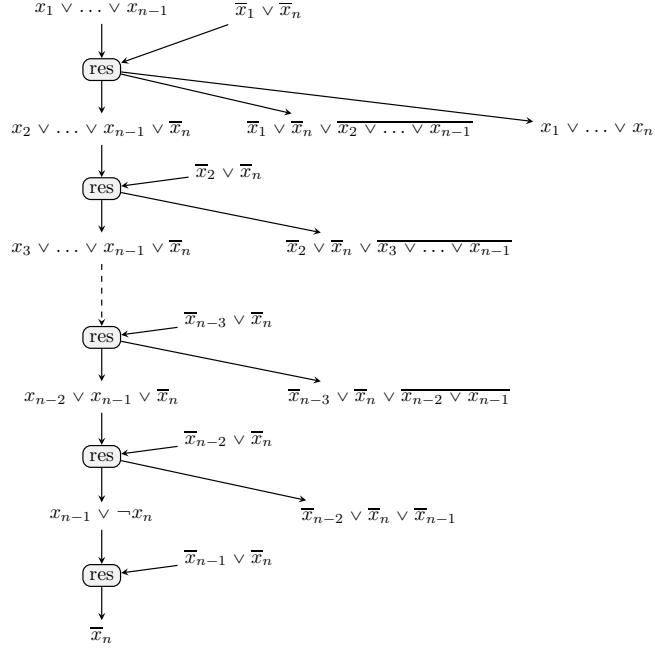

\begin{lemma} 
\label{le1}
Consider a MaxSAT formula ${\cal F} = \{(x_1\lor \ldots \lor x_{n-1}, 1)\} \cup \{(\myneg x_i \lor \myneg x_n, 1) \mid 1 \leq i < n\}$. There is a proof 
\begin{multline*}
{\cal F} \vdash_{Res} \{(\myneg x_n, 1)\} \cup \{(x_1 \lor \ldots \lor x_n, 1)\}~\cup \{(\myneg x_i \lor \myneg x_n \lor \myneg {x_{i+1} \lor \ldots \lor x_{n-1}}, 1)  \mid 1 \leq i < n-1  \}
\end{multline*}
of length $n-1$.
\end{lemma}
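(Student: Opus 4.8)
The plan is to build the derivation as a chain of $n-1$ ordinary MaxSAT resolution steps, resolving on the variables $x_1,x_2,\dots,x_{n-1}$ in that order, while carrying along a single ``rolling'' clause. Set $R_1=(x_1\lor\ldots\lor x_{n-1},1)$, and at step $k$ resolve $R_k$ against the original clause $(\myneg x_k\lor\myneg x_n,1)$ on the variable $x_k$. Since every weight is $1$ we have $m=1$ at each step, so both antecedents get residual weight $0$ and disappear, and the step produces the resolvent together with at most two side consequents. I would verify by a short induction that $R_k=(x_k\lor\ldots\lor x_{n-1}\lor\myneg x_n,1)$ for $2\le k\le n-1$ (the repeated literal $\myneg x_n$ coming from $A\lor B$ being merged), with $R_1$ the base case that has no $\myneg x_n$; consequently the last step $k=n-1$ resolves $(x_{n-1}\lor\myneg x_n,1)$ against $(\myneg x_{n-1}\lor\myneg x_n,1)$ and its resolvent collapses to $(\myneg x_n,1)$, the first clause of the claimed conclusion.

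Next I would track the side consequents. At step $1$, with $A=x_2\lor\ldots\lor x_{n-1}$ and $B=\myneg x_n$, the consequent $(x_1\lor A\lor\myneg B,1)$ is $(x_1\lor\ldots\lor x_n,1)$ and the consequent $(\myneg x_1\lor B\lor\myneg A,1)$ is $(\myneg x_1\lor\myneg x_n\lor\myneg{x_2\lor\ldots\lor x_{n-1}},1)$ — the second clause of the conclusion and the $i=1$ member of the third family, both of which survive. For $2\le k\le n-2$, now $A=x_{k+1}\lor\ldots\lor x_{n-1}\lor\myneg x_n$ already contains $\myneg x_n$ and $B=\myneg x_n$, so $(x_k\lor A\lor\myneg B,1)$ contains both $x_n$ and $\myneg x_n$ and is tautological, while $(\myneg x_k\lor B\lor\myneg A,1)=(\myneg x_k\lor\myneg x_n\lor\myneg{x_{k+1}\lor\ldots\lor x_{n-1}\lor\myneg x_n},1)$. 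Here I would invoke Property~\ref{propNegCl}: expanding this negated-disjunction clause into CNF gives exactly the CNF of $(\myneg x_k\lor\myneg x_n\lor\myneg{x_{k+1}\lor\ldots\lor x_{n-1}},1)$ plus one extra clause whose last literal is $x_n$ added to a clause already containing $\myneg x_n$, i.e.\ a tautology that is discarded; so the surviving kept clause is precisely the $i=k$ member of the third family. At step $n-1$ both side consequents are tautologies, so nothing new survives there besides $(\myneg x_n,1)$.

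Finally I would collect the clauses still present after step $n-1$: $(\myneg x_n,1)$, $(x_1\lor\ldots\lor x_n,1)$, and $(\myneg x_i\lor\myneg x_n\lor\myneg{x_{i+1}\lor\ldots\lor x_{n-1}},1)$ for $1\le i\le n-2$ — exactly the set $\G$ in the statement — while the derivation consisted of $n-1$ applications of the resolution rule, giving the claimed length; the tiny cases $n=2,3$ can be checked by hand to confirm the pattern and the empty ranges. The one place where care is genuinely needed — and where a careless argument would fail — is the normalization after each step: discarding the $0$-weight antecedent copies, merging the duplicated $\myneg x_n$ in the rolling clause, dropping the tautological side consequents, and using Property~\ref{propNegCl} to see that the ``extra'' $\myneg x_n$ inside the negated subclause of the kept consequent contributes only a tautology. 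Once this bookkeeping is pinned down, the structure of the proof and the length bound $n-1$ are immediate.
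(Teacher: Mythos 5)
Your proposal is correct and is essentially the paper's own proof: the paper simply points to Figure~\ref{fig-lemma1}, which encodes exactly your chain of $n-1$ resolutions on $x_1,\dots,x_{n-1}$ with the rolling clause $x_k\lor\ldots\lor x_{n-1}\lor\myneg x_n$, the tautological side consequents discarded, and Property~\ref{propNegCl} handling the $\myneg x_n$ inside the negated subclause. Your write-up just makes explicit the bookkeeping that the figure leaves implicit.
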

\begin{proof} 
The resolution proceeds as shown in Figure~\ref{fig-lemma1}.
\end{proof}

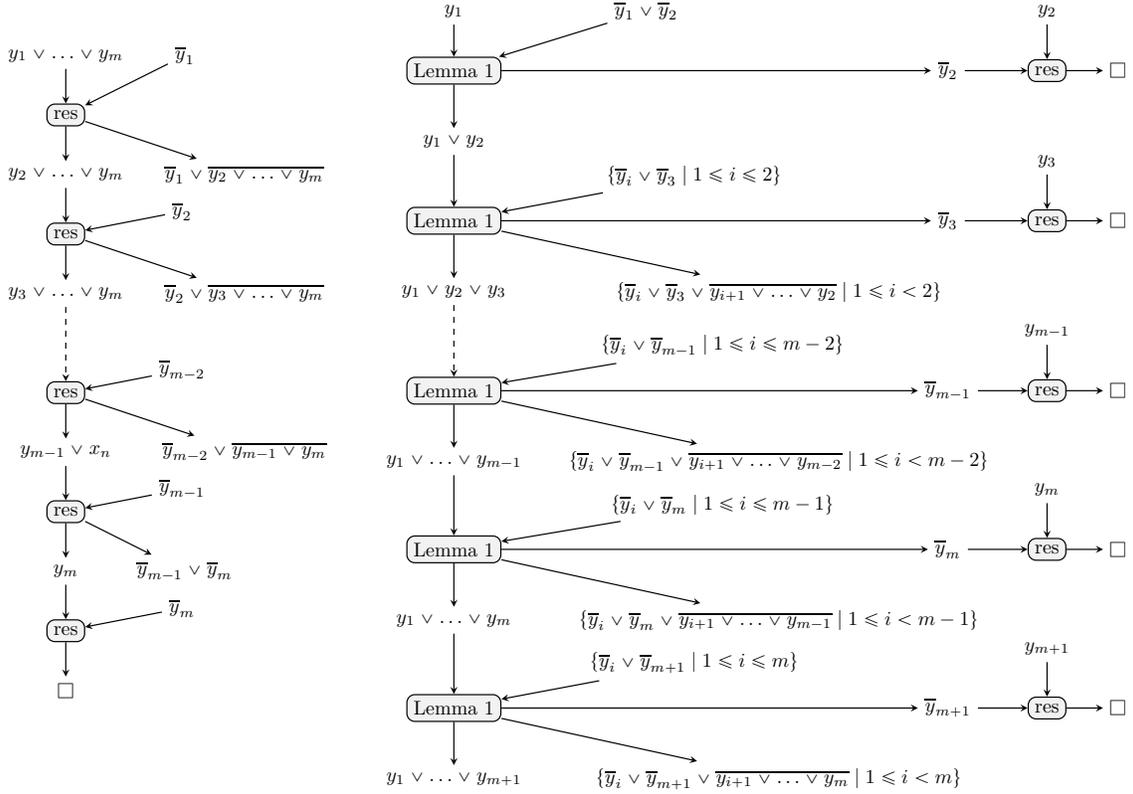
\begin{figure}[t]
\begin{center}
\scalebox{0.75}{\begin{tikzpicture}

\tikzset{%
every node/.style={fill=white, rounded corners, node distance=5em},%
every path/.style={->, >=stealth, line width=.06em}%
}

\tikzstyle{op}=[draw, fill=black!5!white]

\node(dummy){}; 
\node[above of=dummy](-x){$\Box$};

\node[op, above of=-x, yshift=-0.7cm](op-x){res};
\draw (op-x) to (-x);

\node[right of=op-x, xshift=0.3cm, yshift=0.4cm](-k1-k){$\myneg y_{m}$};
\node[above of=op-x, node distance=3em](k1-k){$y_{m}$};
\node[right of=k1-k, node distance=6em](k1-k-r){$\myneg y_{m-1} \lor \myneg y_m $};
\draw (-k1-k) to (op-x);
\draw (k1-k) to (op-x);

\node[op, above of=k1-k, yshift=-0.7cm](opk1-k){res};
\draw (opk1-k) to (k1-k);
\draw (opk1-k) to (k1-k-r);

\node[right of=opk1-k, xshift=0.3cm, yshift=0.4cm](-k2-k){$\myneg y_{m-1}$};
\node[above of=opk1-k, node distance=3em](k2-k){$y_{m-1} \lor x_n$};
\node[right of=k2-k, node distance=9em](k2-k-r){$\myneg y_{m-2} \lor \myneg{y_{m-1} \lor y_{m}}$};
\draw (k2-k) to (opk1-k);
\draw (-k2-k) to (opk1-k);

\node[op, above of=k2-k, yshift=-0.7cm](opk2-k){res};
\draw (opk2-k) to (k2-k);
\draw (opk2-k) to (k2-k-r);

\node[right of= opk2-k, xshift=0.3cm, yshift=0.4cm] (-k3-k) {$\myneg y_{m-2}$};
\node[above of=opk2-k, node distance=5em](3-k){$y_3 \lor \ldots \lor y_m$};
\node[right of=3-k, node distance=9em](3-k-r){$\myneg y_2 \lor \myneg{y_3 \lor \ldots \lor y_m}$};
\draw[dashed] (3-k) to (opk2-k);
\draw (-k3-k) to (opk2-k);

\node[op, above of=3-k, yshift=-0.7cm](op3-k){res};
\draw (op3-k) to (3-k);
\draw (op3-k) to (3-k-r);

\node[right of=op3-k, xshift=0.3cm, yshift=0.4cm](-2-k){$\myneg y_{2}$};
\node[above of=op3-k, node distance=3em](2-k){$y_2 \lor \ldots \lor y_m$};
\node[right of=2-k, node distance=9em](2-k-r){$\myneg y_1 \lor \myneg{y_2 \lor \ldots \lor y_m}$};
\draw (-2-k) to (op3-k);
\draw (2-k) to (op3-k);

\node[op, above of=2-k, yshift=-0.7cm](op2-k){res};
\draw (op2-k) to (2-k);
\draw (op2-k) to (2-k-r);

\node[above of=op2-k, node distance=3em](1k1){$y_1 \lor \ldots \lor y_m$};
\node[right of=1k1, node distance=6em](-1-k){$\myneg y_1$};
\draw (-1-k) to (op2-k);
\draw (1k1) to (op2-k);

\end{tikzpicture}
\hspace{0.6cm}
\begin{tikzpicture}

\tikzset{%
every node/.style={fill=white, rounded corners, node distance=5em},%
every path/.style={->, >=stealth, line width=.06em}%
}

\tikzstyle{op}=[draw, fill=black!5!white]

\node(-x){$y_1\lor \ldots \lor y_{m+1}$};
\node[right of=-x,xshift=4cm](-nr){$\{ \myneg y_i \lor \myneg y_{m+1} \lor \myneg{y_{i+1} \lor \ldots \lor y_{m}} \mid 1 \leq i < m\} $};

\node[op, above of=-x, yshift=-0.5cm](op-x){Lemma~\ref{le1}};
\node[right of=op-x,xshift=7cm](-n){$\myneg y_{m+1} $};
\node[op, right of=-n](resxn){res};
\node[above of=resxn, yshift=-0.7cm](xn){$y_{m+1}$};
\draw (xn) to (resxn);
\draw (-n) to (resxn);
\node[right of=resxn, xshift=-0.5cm](boxxn){$\Box$};
\draw (resxn) to (boxxn);

\draw (op-x) to (-x);
\draw (op-x) to (-n);
\draw (op-x) to (-nr);

\node[right of=op-x, xshift=2.5cm, yshift=0.8cm](-k1-k){$\{ \myneg y_i \lor \myneg y_{m+1} \mid 1 \leq i \leq m\}$};
\node[above of=op-x, yshift=-0.2cm](k1-k){$y_1 \lor \ldots \lor y_{m}$};
\node[right of=k1-k,xshift=4cm](k1-k-rr){$\{ \myneg y_i \lor \myneg y_{m} \lor \myneg{y_{i+1} \lor \ldots \lor y_{m-1}} \mid 1 \leq i < m- 1\} $};
\draw (-k1-k) to (op-x);
\draw (k1-k) to (op-x);

\node[op, above of=k1-k, yshift=-0.5cm](opk1-k-r){Lemma~\ref{le1}};
\node[right of=opk1-k-r,xshift=7cm](k1-k-r){$\myneg y_{m} $};
\node[op, right of=k1-k-r](resxn-1){res};
\node[above of=resxn-1, yshift=-0.7cm](xn-1){$y_{m}$};
\draw (xn-1) to (resxn-1);
\draw (k1-k-r) to (resxn-1);
\node[right of=resxn-1, xshift=-0.5cm](boxxn-1){$\Box$};
\draw (resxn-1) to (boxxn-1);
\draw (opk1-k-r) to (k1-k);
\draw (opk1-k-r) to (k1-k-r);
\draw (opk1-k-r) to (k1-k-rr);

\node[right of=opk1-k-r, xshift=3cm, yshift=0.8cm](-k2-k){$\{ \myneg y_i \lor \myneg y_{m} \mid 1 \leq i \leq m - 1\}$};
\node[above of=opk1-k-r, yshift=-0.2cm](k2-k){$y_1 \lor \ldots \lor y_{m-1}$};
\node[right of=k2-k,xshift=4cm](k2-k-rr){$\{ \myneg y_i \lor \myneg y_{m-1} \lor \myneg{y_{i+1} \lor \ldots \lor y_{m-2}} \mid 1 \leq i < m- 2\} $};
\draw (k2-k) to (opk1-k-r);
\draw (-k2-k) to (opk1-k-r);

\node[op, above of=k2-k, yshift=-0.5cm](opk2-k){Lemma~\ref{le1}};
\node[right of=opk2-k, xshift=7cm](k2-k-r){$\myneg y_{m-1}$};
\node[op, right of=k2-k-r](resxn-2){res};
\node[above of=resxn-2, yshift=-0.7cm](xn-2){$y_{m-1}$};
\draw (xn-2) to (resxn-2);
\draw (k2-k-r) to (resxn-2);
\node[right of=resxn-2, xshift=-0.5cm](boxxn-2){$\Box$};
\draw (resxn-2) to (boxxn-2);
\draw (opk2-k) to (k2-k);
\draw (opk2-k) to (k2-k-r);
\draw (opk2-k) to (k2-k-rr);

\node[right of= opk2-k, xshift=3cm, yshift=0.8cm] (-k3-k) {$\{ \myneg y_i \lor \myneg y_{m-1} \mid 1 \leq i \leq m-2\}$};
\node[above of=opk2-k](3-k){$y_1 \lor y_2 \lor y_3$};
\node[right of=3-k,xshift=4cm](3-k-rr){$\{ \myneg y_i \lor \myneg y_{3} \lor \myneg{y_{i+1} \lor \ldots \lor y_{2}} \mid 1 \leq i < 2\} $};
\draw[dashed] (3-k) to (opk2-k);
\draw (-k3-k) to (opk2-k);

\node[op, above of=3-k, yshift=-0.5cm](op3-k){Lemma~\ref{le1}};
\node[right of=op3-k, xshift=7cm](3-k-r){$\myneg y_3$};
\node[op, right of=3-k-r](resx3){res};
\node[above of=resx3, yshift=-0.7cm](x3){$y_3$};
\draw (x3) to (resx3);
\draw (3-k-r) to (resx3);
\node[right of=resx3, xshift=-0.5cm](boxx3){$\Box$};
\draw (resx3) to (boxx3);
\draw (op3-k) to (3-k);
\draw (op3-k) to (3-k-r);
\draw (op3-k) to (3-k-rr);

\node[right of=op3-k, xshift=2.5cm, yshift=0.8cm](-2-k){$\{ \myneg y_i \lor \myneg y_3 \mid 1 \leq i \leq 2\}$};
\node[above of=op3-k, node distance=4em](2-k){$y_1 \lor y_2$};
\draw (-2-k) to (op3-k);
\draw (2-k) to (op3-k);

\node[op, above of=2-k, yshift=-0.5cm](op2-k){Lemma~\ref{le1}};
\node[right of=op2-k, xshift=7cm](2-k-r){$\myneg y_2$};
\node[op, right of=2-k-r](resx2){res};
\node[above of=resx2, yshift=-0.7cm](x2){$y_2$};
\draw (x2) to (resx2);
\draw (2-k-r) to (resx2);
\node[right of=resx2, xshift=-0.5cm](boxx2){$\Box$};
\draw (resx2) to (boxx2);
\draw (op2-k) to (2-k);
\draw (op2-k) to (2-k-r);

\node[above of=op2-k, node distance=3em](1k1){$y_{1}$};
\node[right of=1k1, node distance=4em, xshift=2cm](-1-k){$\myneg y_1 \lor \myneg y_2$};
\draw (-1-k) to (op2-k);
\draw (1k1) to (op2-k);

\end{tikzpicture}}
\end{center}
\caption{Left: derivation graph corresponding to pigeon $i$ (for clarity purposes, we rename each $x_{ij}$, $1 \leq j \leq m$, to $y_j$). Right: derivation graph corresponding to hole $j$ (for clarity purposes, we rename each variable $x_{ij}$, $1\leq i \leq m + 1$, to $y_i$.). All clauses have cost 1.}
\label{fig:sphp1}
\end{figure}

 \begin{property}
 \label{prop-Res-sphp1}
There is a polynomial size \textbf{Res} refutation of $SPHP^1$.
 \end{property}{}
\begin{proof}
The refutation is divided in two parts. First, for each one of the $m+1$ pigeons there is a derivation
$$\{(x_{i1}\lor x_{i2}\lor \ldots \lor x_{im},1)\} \cup \{(\myneg x_{ij},1)|\ 1\leq j \leq m\}\vdash_{Res} (\Box, 1)$$
Figure~\ref{fig:sphp1} (left) shows the derivation graph that corresponds to an arbitrary pigeon $i$.
Second, for each one of the $m$ holes there is a derivation 
$$\{(\myneg x_{ij} \lor \myneg x_{i'j},1) \mid 1\leq i<i'\leq m+1\} \cup \{(x_{ij},1) \mid 1\leq i \leq m+1\}\vdash_{Res} \{(\Box, m)\}$$
Figure~\ref{fig:sphp1} (right) shows the derivation graph that corresponds to an arbitrary hole $j$.

Because each derivation is independent of the other, they can be done one after another, aggregating all the empty clauses, which produces 
$$\mathit{SPHP}^1 \vdash_{Res} \{(\Box,m^2+m+1)\}$$
which is a refutation of $\mathit{SPHP}^1$.
Observe that each pigeon proof has length $O(m)$ and each hole proof has length $O(m^2)$. Therefore, the length of the refutation is $O(m^3)$.
\end{proof}

\begin{property}
There is no polynomial size \textbf{Res} refutation of $SPHP^0$.
\end{property}
\begin{proof}
\textbf{Res} cannot produce a polynomial size refutation for $SPHP^0$ because the resolution rule cannot be applied to the empty clause $(\Box, w)$, so it must remain unaltered during any derivation. If \textbf{Res} could refute $\mathit{SPHP^0}$ in polynomial time it would also refute $\mathit{SPHP}$ in polynomial time, which is not the case~\cite{DBLP:journals/ai/BonetLM07}.
\end{proof}

\subsection{Split}

The \textit{split} rule,
\begin{center}
\[
\begin{array}{c}
(A, w) \\ 
\cline{1-1} 
(A\lor x,w) \hspace{.5cm} (A\lor \myneg x, w)
\end{array}
\]
\end{center}

\noindent is the natural extension of its SAT counterpart. 

\begin{theorem}
The split rule is sound.
\end{theorem}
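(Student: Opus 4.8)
The plan is to verify the claim directly from the definition of soundness for a MaxSAT inference rule, namely that replacing the antecedents by the consequents preserves the cost function $\F(X)$ on every truth assignment $X$. Since the split rule only removes the single weighted clause $(A,w)$ and adds $(A\lor x,w)$ and $(A\lor\myneg x,w)$, while all other clauses of $\F$ are carried along unchanged, it suffices to show that for every $X$ the cost charged by $(A,w)$ equals the cost charged jointly by $(A\lor x,w)$ and $(A\lor\myneg x,w)$. Adding the (identical) contributions of the untouched clauses then gives $\F(X)=\F'(X)$ for all $X$, i.e. $\F\equiv\F'$.

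The core is a two-case analysis on whether $X$ satisfies $A$. If $X$ satisfies $A$, then $X$ satisfies $A$, $A\lor x$, and $A\lor\myneg x$ all at once, so both sides charge $0$. If $X$ falsifies $A$, then $(A,w)$ charges $w$; on the other side, exactly one of the literals $x$, $\myneg x$ is true under $X$, hence exactly one of $A\lor x$, $A\lor\myneg x$ is satisfied, and the other has both its $A$-part and its literal falsified, so it is falsified and charges $w$. Thus the pair charges exactly $w$ as well, matching $(A,w)$ in every case.

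I do not expect a genuine obstacle here; the statement reduces to this short case distinction. The only points requiring a little care are the conventions already fixed in the paper: that a weighted clause may be freely separated and merged (so $w$ need not be the total weight attached to $A$ in $\F$, and the argument applies to whatever weight $w$ the rule is instantiated with), and that soundness for MaxSAT rules is equivalence $\F\equiv\F'$ rather than mere entailment, which is exactly what the equality of per-assignment costs established above delivers.
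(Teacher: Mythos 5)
Your proof is correct and follows essentially the same route as the paper's: a two-case analysis on whether the assignment satisfies $A$, observing that when $A$ is falsified exactly one of $A\lor x$, $A\lor\myneg x$ is falsified and charges the same weight $w$. The extra remarks about weight separation/merging and about soundness meaning equivalence rather than entailment are consistent with the paper's conventions and do not change the argument.
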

\begin{proof}
We have to prove that $\F \cup \{(A,w)\} \equiv \F \cup \{(A\lor x,w), (A\lor \myneg x, w)\}$
Consider an arbitrary truth assignment.
If it satisfies $A$, then it also satisfies $A\lor x$ and $A\lor \myneg x$ so the the cost of the truth assignment is the same before and after the split. If the truth assignment does not satisfy $A$, then there is a cost of $w$ caused by $A$. After the application of the split the same cost will be caused either by  $A \lor x$ or by $A \lor \myneg x$ depending on whether the truth assignment satisfies $x$ or not.
\end{proof} 

The proof system \textbf{ResS}, made of resolution and split,
is sound and complete.

\begin{theorem}
Proof system \textbf{ResS} is sound.
\end{theorem}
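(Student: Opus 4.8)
The plan is to reduce soundness of the whole proof system to soundness of its individual inference rules, via a routine induction on the length of the derivation. Recall that in MaxSAT a rule is \emph{sound} when replacing its antecedents by its consequents preserves equivalence of the formula (i.e.\ leaves $\F(X)$ unchanged for every $X$), and that \textbf{ResS} consists of exactly two rules: MaxSAT resolution and split. The soundness of split has just been proved, and the soundness of MaxSAT resolution is already known \cite{DBLP:journals/ai/BonetLM07,DBLP:journals/ai/LarrosaHG08}. So both building blocks are in place before we start.

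Concretely, let $\Pi = \F_0; \F_1; \ldots; \F_e$ be an arbitrary \textbf{ResS} derivation with $\F_0 = \F$ and $\G \subseteq \F_e$ (this is exactly what $\F \vdash_{\mathit{ResS}} \G$ means). I would prove by induction on $i$ that $\F \equiv \F_i$. The base case $i=0$ is trivial since $\F_0 = \F$. For the inductive step, the step $\F_{i-1}; \F_i$ is an application of resolution or of split; in either case the applied rule is sound, hence $\F_{i-1} \equiv \F_i$, and combining this with the induction hypothesis $\F \equiv \F_{i-1}$ and transitivity of $\equiv$ yields $\F \equiv \F_i$. In particular $\F \equiv \F_e$.

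To finish, I would invoke the elementary observation recorded in Section~\ref{Sec-Background} that $\G \subseteq \F_e$ implies $\F_e \models \G$. Together with $\F \equiv \F_e$ (which in particular gives $\F \models \F_e$) and transitivity of $\models$, this gives $\F \models \G$, i.e.\ exactly soundness of \textbf{ResS}.

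I do not expect a genuine obstacle here: the argument is purely modular and nothing new has to be proved about either rule. The only point that deserves a line of care is checking that both rules really are equivalence-preserving \emph{as MaxSAT rules} — that is, that the weight bookkeeping in the consequents makes $\F_{i-1}(X) = \F_i(X)$ for every assignment $X$, not merely that satisfiability is preserved. For split this was just verified, and for MaxSAT resolution it is the known soundness statement, so this reduces to citing the appropriate facts rather than doing fresh work.
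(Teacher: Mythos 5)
Your argument is correct and is essentially the paper's own proof, just with the induction on derivation length made explicit: the paper likewise reduces soundness of \textbf{ResS} to soundness of the two rules, concludes $\F(X)=\G(X)+\R(X)$ for the residual $\R$ with $\F_e=\G\cup\R$, and infers $\F\models\G$. No gap.
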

\begin{proof}
We have to prove that $\F \vdash_{ResS} \G$ implies $\F \models \G$. Because resolution and split are sound, $\F \vdash_{ResS} \G$ implies that there is a derivation $\F \vdash_{ResS} \G \cup \R$ for some $\R$ such that $\forall X, \F(X) = \G(X)+\R(X)$. Therefore, $\F \models \G$, which completes the proof.
\end{proof}

\begin{theorem}
Proof system \textbf{ResS} is complete.
\end{theorem}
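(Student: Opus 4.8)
The plan is to prove the nontrivial direction directly: assuming $\F \models \G$, I exhibit a \textbf{ResS} derivation ${\cal F}_0;{\cal F}_1;\ldots;{\cal F}_e$ with ${\cal F}_0=\F$ and $\G\subseteq {\cal F}_e$. Since split, resolution and merge all preserve equivalence, such a derivation automatically yields $\F\equiv {\cal F}_e$ (matching the paper's remark that then $\F\equiv \G\cup\R$ for some $\R$), but the only thing completeness requires is $\G\subseteq {\cal F}_e$. The strategy has three phases: (i) use split to rewrite $\F$ into a canonical ``complete clause'' form that makes the cost function $\F(\cdot)$ syntactically explicit; (ii) observe that $\F\models\G$ forces the full expansion of $\G$ to sit inside that form; (iii) recover the original clauses of $\G$ by running split backwards, a step that is simulated by a single MaxSAT resolution.

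\textbf{Phase (i).} Let $V$ be the variables occurring in $\F$ or $\G$, and for an assignment $X$ over $V$ let $C_X$ be the unique clause over $V$ falsified exactly by $X$ (it contains, for each $v\in V$, the literal of $v$ that $X$ falsifies). Applying split to a clause $(D,w)$ on a variable $v$ not occurring in $D$ replaces it by $(D\lor v,w),(D\lor \myneg v,w)$; iterating over every variable of $V$ not occurring in $D$ turns $(D,w)$ into $\{(C_X,w)\mid X\text{ falsifies }D\}$. Doing this to every clause of $\F$ and merging equal clauses gives $\F\vdash_{ResS}\F'$ with $\F'=\{(C_X,\F(X))\mid \F(X)>0\}$, where infinite weights are kept using $\infty-w=\infty$. (If $\G$ has variables absent from $\F$ this is harmless: $\F(X)$ simply does not depend on them, so those patterns of $X$ just produce several copies of the same weighted clause.)

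\textbf{Phases (ii)--(iii).} Write $\G=\{(D_1,w_1),\ldots,(D_q,w_q)\}$, and let $E_j=\{(C_X,w_j)\mid X\text{ falsifies }D_j\}$ be the full expansion of $(D_j,w_j)$; by definition $\sum_{j:\,X\text{ falsifies }D_j} w_j=\G(X)$. Since $\F\models\G$ means $\F(X)\ge\G(X)$ for all $X$, I may separate each $(C_X,\F(X))\in\F'$ as $\F(X)=\G(X)+(\F(X)-\G(X))$ and then split the $\G(X)$ part into the contributions $w_j$ coming from the various $D_j$ that $X$ falsifies — this is exactly the paper's convention that weighted clauses are ``separated and merged as needed'' (and the $\infty$ arithmetic makes it legal when hard clauses are involved). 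Hence $\F'$ can be written as $\bigl(\bigcup_{j=1}^{q} E_j\bigr)\cup\R$ for some remainder $\R$. It remains to rebuild each $(D_j,w_j)$ out of $E_j$, i.e.\ to undo the splitting. But if $A\lor v$ and $A\lor\myneg v$ both carry weight $w$, MaxSAT resolution on $v$ (so $m=w$) produces $(A,w)$ together only with zero-weight clauses and tautologies, so $(A\lor v,w),(A\lor\myneg v,w)\vdash_{ResS}(A,w)$. Pairing up the $2^{|V\setminus D_j|}$ clauses of $E_j$ and eliminating the variables of $V\setminus D_j$ one at a time collapses $E_j$ to $(D_j,w_j)$; doing this for every $j$ leaves a final formula containing $\G$.

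The main obstacle is Phase (iii): \textbf{ResS} contains split only in the forward direction, so one has to verify that its inverse is derivable, and the natural candidate — resolving the two siblings $(A\lor v,w)$ and $(A\lor\myneg v,w)$ — works precisely because the MaxSAT resolution rule applied with equal weights leaves nothing but $(A,w)$ after tautologies and zero-weight clauses are discarded (this is the ``symmetric resolution'' pattern already seen in Figure~\ref{fig:satgraph}). Everything else is bookkeeping: treating $\infty$-weight clauses through the paper's arithmetic conventions, absorbing extra variables of $\G$ as noted above, and invoking the separate/merge convention to justify the weight splitting in Phase (ii). Note that the $E_j$ have exponential size in general, so this argument establishes completeness but says nothing about proof length — consistent with the fact (Property~\ref{sat-res-ress}, and its MaxSAT analogue discussed later) that split's value here is not in shortening proofs but in reaching previously underivable consequences.
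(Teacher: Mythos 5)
Your proposal is correct and follows essentially the same route as the paper: expand $\F$ by splits into the canonical form over all variables (the paper's $\F^e$), use $\F\models\G$ and the separate/merge convention to carve out the expansion of $\G$, and then collapse it back by resolving sibling pairs $(A\lor v,w),(A\lor\myneg v,w)$ into $(A,w)$ — which is exactly the paper's observation that a split-only derivation can be reversed by resolutions. Your write-up is somewhat more explicit on the details the paper leaves implicit (the equal-weight resolution yielding only $(A,w)$, variables of $\G$ absent from $\F$, and the $\infty$ arithmetic), but the underlying argument is the same.
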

\begin{proof}
We have to prove that if $\F \models \G$ then there is derivation $\F \vdash_{ResS} \G$.
The proof is based on the following two facts:
\begin{enumerate}
    \item For every formula $\F$ there is a derivation $\F\vdash_{ResS} \F^{ext}$ made exclusively of splits and merges such that: $i$) $\F \equiv \F^{ext}$, $ii$) all the clauses of $\F^{ext}$ contain all the variables in the formula and $iii$) there are no repeated clauses. In the derivation each clause $(C, w) \in \F$ can be expanded to a new variable not in $C$ using the split rule. The process is repeated until all clauses in the current formula contain all the variables in the formula. Then, pairs of equal clauses $(C', u)$, $(C', v)$ are merged and, thus, $\F^{ext}$ does not contain repeated clauses. As a result,  $\F^{ext}$ contains one clause $(C,w)$ for each $\F^{ext}(X) = w>0$, where $C$ is falsified exactly by $X$.
    \item If there is a derivation $\F \vdash_{ResS} \F^{ext}$ made exclusively of splits and merges, then there is a derivation $\F^{ext} \vdash_{ResS} \F$ made exclusively of resolutions and unmerges. Let $\Pi=(\F_0 = \F; \F_1; \ldots; \F_p = \F_{ext})$ be the first derivation. Then, the later derivation is $\Gamma = (\F_{ext} = \F_p; \F_{p-1}; \ldots; \F_0)$ where $\F_{i}; \F_{i-1} \in \Gamma$ is an unmerge if $\F_{i- 1}; \F_{i} \in \Pi$ is a merge; and $\F_{i}; \F_{i-1} \in \Gamma$ is a resolution if $\F_{i-1}; \F_{i} \in \Pi$ is an split. 
    

\end{enumerate}

From fact (1) we know that $\F \vdash_{ResS} \F^{ext}$. 
Since $\F \models \G$ we know $\F^{ext} \models \G^{ext}$. We can separate $\F^{ext}$ as $\F^{ext}=\G^{ext} \cup \R$. From fact (1) and (2) we know that $\G^{ext} \vdash_{ResS} \G$. Joining the two derivations we have $\F \vdash_{ResS} \F^{ext} \vdash_{ResS} \R \cup \G$, which proves the theorem.

\end{proof}

We show now which pigeon problems \textbf{ResS} can and cannot solve.

\begin{property}
There is a polynomial size \textbf{ResS} refutation for $SPHP^0$.
\end{property}
\begin{proof}
\textbf{ResS} can produce a polynomial size refutation for $SPHP^0$ because it can transform $SPHP^0$ into $SPHP^1$ and then apply Property~\ref{prop-Res-sphp1}. The transformation is done by a sequence of splits,
\begin{center}
\[
\begin{array}{c}
(\Box, 1) \\ 
\cline{1-1} 
(x_{ij}, 1) \hspace{.5cm} (\myneg x_{ij}, 1)
\end{array}
\]
\end{center}

\noindent that move one unit of weight from the  empty clause to every variable in the formula and its negation.
\end{proof}

\begin{property}\label{propabc}
There is no polynomial size \textbf{ResS} refutation  for $PHP$.
\end{property}
\begin{proof}
ResS with hard formulas corresponds to the SAT proof system containing SAT resolution and SAT split. From Property~\ref{sat-res-ress}, we know that it is equivalent to the SAT proof system containing only resolution.
Therefore, the existence of a polynomial  size \textbf{ResS} refutation  for $PHP$ would imply the existence of a polynomial size refutation with SAT resolution, which is not possible \cite{Robinson:1965:MLB:321250.321253}.
\end{proof}

\begin{property}
There is no polynomial size \textbf{ResS} refutation for $SPHP$.
\end{property}
\begin{proof}
We show that we can build a  \textbf{ResS} refutation for $PHP$ from a  \textbf{ResS} refutation for $SPHP$ without increasing its length. Therefore, a polynomial size refutation for $SPHP$ would imply a polynomial size refutation for $PHP$, which is a contradiction to Property~\ref{propabc}.

Let $\Pi=(\F_0; \F_1; \ldots; \F_e)$ with $SPHP=\F_0$ and $(\Box,1)\in \F_e$ be the refutation and $G(\Pi)$ its associated graph. We are going to transform $G(\Pi)$ into a $PHP$ refutation following the derivation steps. First, replace weight $1$ by $\infty$ in all the zero in-neighbors clauses (namely, original clauses). Then follow the refutation step by step. If the inference step is a split, just replace the weight of the consequents by infinity. If the inference is a resolution between $x\lor A$ and $\myneg x \lor B$, merge nodes $\{A\lor B, x\lor A \lor \myneg B, \myneg x \lor \myneg A \lor B\}$ into $A\lor B$, and replace the weight of all the consequents by infinity. By construction, when considering any inference step all its in-neighbors will already have infinity weight making the graph correct. At the last step, node $(\Box,1)$ will be transformed into $(\Box,\infty)$ making the graph a $PHP$ refutation.
\end{proof}

A consequence of the previous results is that, unlike what happens in the SAT case (see Property \ref{sat-res-ress}), \textbf{ResS} is stronger than \textbf{Res},
\begin{theorem}
\textbf{ResS} is stronger than \textbf{Res}.
\end{theorem}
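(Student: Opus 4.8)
The plan is to unpack the definition of \textbf{``stronger''}: I must show both that \textbf{ResS} $p$-simulates \textbf{Res} and that \textbf{Res} does \emph{not} $p$-simulate \textbf{ResS}. The first direction is immediate from the fact that the inference rules of \textbf{Res} are a subset of those of \textbf{ResS} (resolution belongs to both systems). Hence every \textbf{Res} derivation is, verbatim, a \textbf{ResS} derivation, and the identity function witnesses the $p$-simulation.

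For the second direction I would take $SPHP^0$ as the separating family. By the property proved just above, $SPHP^0$ admits a polynomial \textbf{ResS} refutation (first move one unit of weight off the empty clause to each variable and its negation with the split rule, turning $SPHP^0$ into $SPHP^1$, and then invoke Property~\ref{prop-Res-sphp1}). I would then argue by contradiction: if \textbf{Res} $p$-simulated \textbf{ResS} via a polynomially computable function $f$, applying $f$ to this polynomial \textbf{ResS} refutation would produce a polynomial-size \textbf{Res} refutation of $SPHP^0$. This contradicts the earlier property stating that no polynomial \textbf{Res} refutation of $SPHP^0$ exists (which in turn follows because such a refutation would yield a polynomial \textbf{Res} refutation of $SPHP$, impossible by the known lower bound). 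Therefore \textbf{Res} does not $p$-simulate \textbf{ResS}.

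Combining the two directions gives the theorem. I expect no real obstacle here, since both halves reduce directly to results already established in the paper; the only point requiring care is that $p$-simulation is defined in terms of refutations, so the whole argument should be phrased through refutations of $SPHP^0$ rather than through general entailments or through $PHP$ (the latter being ruled out for \textbf{ResS} as well, and hence useless as a separator).
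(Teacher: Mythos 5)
Your proposal is correct and follows essentially the same route as the paper: the $p$-simulation of \textbf{Res} by \textbf{ResS} is immediate because the rule set is a superset, and the separation comes from $SPHP^0$, which has a polynomial \textbf{ResS} refutation but no polynomial \textbf{Res} refutation by the properties established just before the theorem. Your closing remark about phrasing everything through refutations of $SPHP^0$ (and not $PHP$) is a sensible precaution but does not change the substance of the argument.
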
{}
\begin{proof}
On the one hand, it is clear that \textbf{ResS} can $p$-simulate any proof of \textbf{Res} since it is a superset of \textbf{Res}. On the other hand \textbf{Res} cannot $p$-simulate \textbf{ResS} because there is a polynomial size \textbf{ResS} refutation of $SPHP^0$ which cannot exist for \textbf{Res}.
\end{proof}

\noindent
Next we show that, similarly to what happens in the SAT case, the split rule allows to restrict the use of resolution to its symmetric form (this result will be useful in Section~ \ref{Sec-Circular}).
The \textit{symmetric resolution} rule,
\begin{center}
\[
\begin{array}{c}
(A\lor x,w) \hspace{.5cm} (A\lor \myneg x, w) \\ 
\cline{1-1} 
(A,w)\\
(A\lor x,w - w) \hspace{.5cm} (A\lor \myneg x, w - w) \\ 
\end{array}
\]
\end{center}

\noindent is the natural extension of its SAT counterpart. In combination with split, symmetric resolution already guarantees completeness.

\begin{property}\label{prop-symres}
The MaxSAT resolution rule can be replaced by $O(n)$ splits and one symmetric resolution, where $n$ is the number of variables in the formula.


\end{property}

\begin{proof}
 Consider clauses $(x \lor A, u)$ and $(\myneg x \lor B, u)$. $|C-A|$ splits transform the first clause into  $\{(x \lor A \lor B, u), (x \lor A \lor \myneg B, u)\}$. Similarly, $|C-B|$ splits transform the second clause into  $\{(\myneg x \lor A \lor B, u), (\myneg x \lor B \lor \myneg A, u)\}$. Finally, it is possible to apply symmetric resolution between $(x \lor A \lor B, u)$ and $(\myneg x \lor A \lor B, u)$, which proves our claim.

\end{proof}

\subsection{Virtual}
Now we introduce our third and last rule, \textit{virtual}, and show that it can further speed-up refutations. Roughly speaking, it allows to anticipate weighted clauses that will be derived later on and use them right away. Any derivation obtained from this anticipated clauses will be sound as long as the anticipation turns out to be true.
The \textit{virtual} rule is,

\begin{center}
\[\begin{array}{c}
\\
\hline
(A,w) \hspace{.5cm} (A,-w)
\end{array}
\]
\end{center}
with $w\neq \infty$. It allows to introduce a fresh clause $(A,w)$ into the formula. To preserve soundness (i.e, cancel out the effect of the addition) it also adds $(A,-w)$. The use of virtual requires to allow clauses with negative finite weights \footnote{Note that the virtual rule can be seen as a generalization of the unmerge rule. Here we prefer to define it as an independent rule for clarity purposes.}.

\begin{theorem}
The virtual inference rule is sound.
\end{theorem}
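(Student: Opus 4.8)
The plan is to show that applying the virtual rule preserves equivalence of the formula, i.e., that $\F \equiv \F \cup \{(A,w),(A,-w)\}$ for any disjunction of literals $A$ and any finite weight $w$. Since soundness of a MaxSAT inference rule was defined precisely as preservation of $\equiv$ (which in turn means $\forall X,\ \F(X) = \F'(X)$), this is exactly what needs to be checked.

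First I would recall that a weighted clause $(A,w)$ contributes $w$ to the cost of a truth assignment $X$ exactly when $X$ falsifies $A$, and $0$ otherwise; this reading is the one used implicitly throughout the paper, and it extends verbatim to negative weights since we are only asked to track the numerical cost. Then I would fix an arbitrary truth assignment $X$ and compute the cost of $\F \cup \{(A,w),(A,-w)\}$ on $X$. If $X$ satisfies $A$, neither new clause is falsified, so the two new clauses together contribute $0$. If $X$ falsifies $A$, then $(A,w)$ contributes $w$ and $(A,-w)$ contributes $-w$, again a total of $0$. In either case the cost of $\F \cup \{(A,w),(A,-w)\}$ on $X$ equals $\F(X)$. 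Since $X$ was arbitrary, $\F \equiv \F \cup \{(A,w),(A,-w)\}$, which is what soundness requires.

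There is essentially no obstacle here: the only subtlety worth a sentence is that the rule is explicitly restricted to $w \neq \infty$, so both $w$ and $-w$ are well-defined finite numbers and the cancellation $w + (-w) = 0$ makes sense — whereas $\infty + (-\infty)$ would not. I would also note in passing that this is the reason the paper must now admit clauses with negative finite weights in derivations: the intermediate formula genuinely contains such a clause, even though its net contribution to every assignment's cost is neutral. A one-line case analysis on whether $X$ satisfies $A$ is the entire argument, mirroring the structure of the soundness proof already given for the split rule.

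\begin{proof}
We must show that $\F \equiv \F \cup \{(A,w),(A,-w)\}$ for every disjunction of literals $A$ and every finite weight $w$. Consider an arbitrary truth assignment $X$. If $X$ satisfies $A$, then $X$ falsifies neither $(A,w)$ nor $(A,-w)$, so the two added clauses contribute $0$ to the cost of $X$. If $X$ does not satisfy $A$, then $X$ falsifies both added clauses, incurring a cost of $w$ from $(A,w)$ and $-w$ from $(A,-w)$, for a net contribution of $w + (-w) = 0$; this is well defined precisely because $w\neq \infty$. In both cases the cost of $X$ under $\F \cup \{(A,w),(A,-w)\}$ equals $\F(X)$. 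Since $X$ was arbitrary, the two formulas are equivalent, and hence the virtual rule is sound.
\end{proof}
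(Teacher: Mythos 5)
Your proof is correct and follows essentially the same argument as the paper's: a case analysis on whether the truth assignment satisfies $A$, with the two added clauses contributing $0$ in either case. The extra remark about $w\neq\infty$ making the cancellation well defined is a harmless (and accurate) addition.
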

\begin{proof}
We have to prove that the cost of any truth assignment is the same for $\F$ and $\F \cup \{(A,w),(A,-w)\}$.
If the truth assignment satisfies $A$, then the new clauses are also satisfied and they do not affect its cost. If the truth assignment does not satisfy $A$, the cost will be increased by $w$ because of the first clause and decreased by $w$ because of the second clause, which leaves the total cost unaltered.
\end{proof}

Let \textbf{ResSV} be the proof system made of resolution, split and virtual. Recall that resolution and split were only defined for antecedents with positive weights and we keep this restriction in the \textbf{ResSV} proof system. Therefore, they can use as an antecedent positive clauses introduced by virtual, but not the negative clauses. 

The following theorem indicates that proof system \textbf{ResSV} is sound, but the definition of soundness requires a technical redefinition of $\vdash$.
In Section \ref{Sec-MaxPS} we introduced $\F\vdash_{S} \G$ to denote an arbitrary proof $\F;\F_1;\ldots;\R$ with $\G\subseteq \R$ under proof system $\textbf{S}$, and defined the soundness of $\textbf{S}$ using that notation. Because the virtual rule introduces negative weights, this definition needs to be revised. To see why, consider a one step derivation $\{\};\{(\Box,-1), (\Box,1\}$ that only applies the virtual rule. Clearly, $\{\}\vdash_{ResSV} (\Box,1)$. However, $\{\}$ corresponds to constant zero and $(\Box,1)$ corresponds to constant $1$ and it is false that $0\geq 1$ (i.e.,  $\{\} \nvDash (\Box, 1)$). We solve this problem by redefining $\vdash_S$. 

\begin{definition}[$\vdash$]
$\F\vdash_{S} \G$ denotes an arbitrary $\textbf{S}$ proof $\F;\ldots;\R$ with $\G\subseteq \R$ \textit{and all the clauses in} $\R$ \textit{having positive weights}
\end{definition}

Note that this new definition does not affect proof systems \textbf{Res} and \textbf{ResS} because they always deal with positive weights.

\begin{theorem}
Proof system \textbf{ResSV} is sound.
\end{theorem}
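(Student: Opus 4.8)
The plan is to show that any \textbf{ResSV} derivation $\F;\F_1;\ldots;\R$ with $\G\subseteq\R$ and all clauses of $\R$ having positive weight satisfies $\F\models\G$. Since $\G\subseteq\R$ means $\R$ contains each $(C,w)\in\G$ with some $(C,w')$, $w'\geq w$, so $\R\models\G$ by the observation in the background section; it therefore suffices to prove $\F\equiv\R$, i.e. $\F(X)=\R(X)$ for every truth assignment $X$. Here I must be careful about what "cost" means once negative weights appear: I extend $\R(X)$ to be the signed sum $\sum_{(C,w)\in\R, X\not\models C} w$, which can temporarily be negative or involve cancellation, exactly as the virtual-rule soundness proof already implicitly does.

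The key steps, in order: First I would argue that each of the three rules, \emph{viewed as a local rewrite on the whole formula}, preserves the signed cost function $X\mapsto\F_i(X)$. For resolution and split this is exactly the soundness theorems already proved (those proofs only use that antecedents have positive weight, which \textbf{ResSV} guarantees by the stated restriction), and for virtual it is the soundness theorem just proved; none of these arguments breaks when other clauses in the formula carry negative weights, because the cost function is additive over clauses and each rule touches only its own antecedents/consequents. Hence by induction on the length of the derivation, $\F_i(X)=\F(X)$ for all $i$ and all $X$; in particular $\R(X)=\F(X)$ for all $X$. Second, I invoke the new definition of $\vdash$: because every clause in $\R$ has \emph{positive} weight, the signed cost $\R(X)$ coincides with the ordinary MaxSAT cost of $\R$, so $\F(X)=\R(X)\geq\G(X)$ for all $X$, which is precisely $\F\models\G$. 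This is also where the motivating bad example $\{\};\{(\Box,-1),(\Box,1)\}$ is correctly excluded: that derivation's final formula contains a negative-weight clause, so it does not witness $\{\}\vdash_{ResSV}(\Box,1)$ under the refined definition.

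The main obstacle — really the only subtle point — is being scrupulous about the restriction that resolution and split may only be applied to positive-weight antecedents, and checking that this restriction is genuinely enough to keep their soundness proofs valid verbatim in the presence of negative weights elsewhere in the formula. I would state explicitly that the invariant "the signed cost function is preserved" is what carries through the induction, rather than the stronger (and false) claim that all intermediate formulas have nonnegative weights. Once that invariant is isolated, the rest is a one-line induction plus an appeal to the definition of $\vdash$, so I would keep the write-up short and point back to Theorems on soundness of split, resolution, and virtual for the per-step claims.
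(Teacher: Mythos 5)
Your proof is correct and follows essentially the same route as the paper's: both arguments reduce soundness to the per-rule soundness theorems (which preserve the signed cost function step by step), and then use the refined definition of $\vdash$ — all clauses in the final formula have positive weight — to conclude $\F(X)=\G(X)+{\cal R}(X)\geq\G(X)$. Your write-up merely makes explicit the induction and the "signed cost" invariant that the paper leaves implicit.
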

\begin{proof}
We have to prove that $\F\vdash_{ResSV} \G$ implies $\F \models \G$.
Consider an arbitrary derivation $\F\vdash_{ResSV} \G$.
By definition of $\vdash_{ResSV}$, $\F\vdash_{ResSV} \G$ means $\F; \ldots; {\cal H}$ where $\G \subseteq {\cal H}$ and all clauses in ${\cal H}$ have positive weight. 
Because resolution, split and virtual are sound, we have that $\F(X)= \G(X) + {\cal R}(X)$, where ${\cal H} = \G \cup {\cal R}$. Therefore $\F(X)\geq \G(X)$, which completes the proof. 
\end{proof}

Figure \ref{fig:graphMaxSAT} (right) shows a refutation graph with \textbf{ResSV}. Note that the refutation is correct since all nodes with no out-neighbours have positive weight.

The intuition behind the virtual rule and its soundness theorem is that the rule introduces hypothetical clauses that can be temporarily used to derive new knowledge, but this new knowledge is valid only if the proof manages to cancel out the clauses with negative weight. Since negative clauses cannot be manipulated by inference rules, one way to interpret them is like a \textit{reminder} of what needs to be re-derived to make the proof sound.

Next, we discuss the completeness of \textbf{ResSV}. Note that completeness of \textbf{ResSV} is obvious since \textbf{ResS} is complete, so we can just ignore the virtual rule in any \textbf{ResSV} proof. However, a related and more interesting question is whether the use of the virtual rule can take an ongoing proof to a state from which the objective formula cannot be derived. If that was the case, the practical use of \textbf{ResSV} would be jeopardized. The following theorem shows that this is not the case. No matter which are the first inference steps, we can always proceed with the derivation, get rid of the negative clauses introduced by the virtual rule, and end up deriving any entailed formula. To prove that, we find useful the following lemma.

\begin{lemma}
 \label{l}
 There is a \textbf{ResSV} proof ${\cal F};\ldots; {\cal F} \cup \{(\Box, -w), (C,w),(\myneg C, w)\}$ for any formula ${\cal F}$, clause $C$ and weight $0<w$.
\end{lemma}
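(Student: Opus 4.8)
The goal is to show that, starting from any formula $\F$, we can reach a state of the form $\F \cup \{(\Box,-w),(C,w),(\myneg C,w)\}$ using the rules of \textbf{ResSV}. The natural idea is to first \emph{invoke the virtual rule} to introduce the pair $(C,w)$ and $(C,-w)$ into the formula, so that we now have $\F \cup \{(C,w),(C,-w)\}$. The clause $(C,w)$ has positive weight and can therefore be manipulated by resolution and split; the clause $(C,-w)$ is negative and must, by the time we stop, be turned into $(\Box,-w)$. So the crux is: using only positive-weight manipulations, derive $(\myneg C,w)$ while at the same time ``consuming'' the $(C,-w)$ reminder by folding it into $(\Box,-w)$.

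The key observation is that $(\Box,-w)$ and $(C,-w)$ differ exactly by the clause $C$: intuitively, $(\Box,w) \equiv (C,w) \cup (\myneg C, w)$ as a weighted identity (every assignment falsifies $\Box$ with weight $w$, and falsifies exactly one of $C$, $\myneg C$ with weight $w$). So I would proceed as follows. First apply virtual to get $(C,w)$ and $(C,-w)$. Then, from $(C,w)$, repeatedly apply split to fully expand it over all the variables of $C$ — writing $C = l_1 \lor \cdots \lor l_p$, split on the complementary literals so as to obtain both $(C,w)$ back and a copy of the ``negation'' $(\myneg C, w)$, which by Property~\ref{propNegCl} is itself a weighted CNF, namely the clauses $(\myneg l_1,w),(l_1\lor\myneg l_2,w),\dots$. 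Concretely, I would split $(C,w)$ on $l_1$ to get $(C, w)$ and, after more splits, peel off the pieces of $\myneg C$; carrying this out carefully yields $\F \cup \{(C,w), (C,-w)\} \vdash \F \cup \{(C,w), (C,-w), (\myneg C, w)\}$ where I have kept a copy of $(C,w)$ and produced $(\myneg C, w)$ as a genuine positive CNF, while still carrying the reminder $(C,-w)$.

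The remaining task is to convert $(C,-w)$ into $(\Box,-w)$. Here I use that $(\Box,-w)$ ``$=$'' $(C,-w)$ ``$+$'' $(\myneg C,-w)$; but negative clauses cannot be split, so instead I run the split/merge argument in the other direction on a fresh positive copy. The cleaner route: take the $(\myneg C, w)$ CNF we just derived, keep it aside as one of the targets, derive a \emph{second} copy of $(\myneg C,w)$ as above (splits are free to repeat), and then — exactly as in the completeness proof of \textbf{ResS}, fact (2) — use resolution to recombine $(C,w)$ with that second $\myneg C$-CNF. Performing symmetric-resolution-style steps between the positive pieces of $\myneg C$ and the expanded $C$ collapses them to $(\Box,w)$; merging that $(\Box,w)$ with the reminder $(C,-w)$... — no, the weights do not match, so instead I should arrange the bookkeeping so that the final merge is $(\Box, w)$ against $(\Box,-w)$-in-progress. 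The precise accounting is: after producing $(\myneg C,w)$ and a spare $(\Box,w)$, I merge $(\Box,w)$ with $(\Box,-w)$ only if $(\Box,-w)$ is already present — which it is not yet — so the order must be: introduce $(\Box,w),(\Box,-w)$ by a \emph{second} application of virtual, merge the derived $(\Box,w)$-from-resolution into nothing, and keep $(\Box,-w)$ as the output, having meanwhile destroyed the $(C,-w)$ reminder by combining the first virtual's $(C,-w)$ with a resolution-derived... .

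Stepping back, the honest statement of the plan is: apply virtual to obtain $(C,w),(C,-w)$; use splits to expand $(C,w)$ and thereby also obtain the positive CNF for $(\myneg C,w)$, while noting (Property~\ref{propNegCl}) what that CNF is; reverse the split expansion via resolution — exactly the move in fact~(2) of the \textbf{ResS} completeness proof — on a duplicated positive copy of $C$ together with $\myneg C$ to produce $(\Box, w)$; and finally merge this $(\Box,w)$ with the negative $(C,-w)$ after first bridging the weight gap by splitting a positive $(\Box,w)$ copy (obtained from the resolution) into $C$ and $\myneg C$ so that one of the pieces cancels $(C,-w)$ and leaves precisely $(\Box,-w)$ alongside $(C,w)$ and $(\myneg C,w)$. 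Throughout, one must check that at the final formula all clauses we claim as outputs — $(C,w)$, $(\myneg C,w)$ — are positive, and the only negative clause left is $(\Box,-w)$, so the derivation is legal under the refined definition of $\vdash_{ResSV}$. The main obstacle, and the place where care is genuinely required, is the weight bookkeeping in the last step: making the negative reminder migrate from $(C,-w)$ to $(\Box,-w)$ using only manipulations of positive clauses, which forces the split-then-cancel maneuver rather than a direct operation on the negative clause.
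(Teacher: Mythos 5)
There is a genuine gap, and it stems from your very first move: you apply the virtual rule to $C$, introducing $(C,w),(C,-w)$, and thereby saddle yourself with the task of converting the negative clause $(C,-w)$ into $(\Box,-w)$. Since no rule of \textbf{ResSV} may take a negative clause as antecedent, that conversion can never be carried out directly, and none of the maneuvers you sketch actually closes the gap: splitting $(C,w)$ on a literal $l_i\in C$ only returns $(C,w)$ together with a tautology, so it does not ``peel off the pieces of $\myneg C$''; and your final ``split-then-cancel'' description trails off precisely at the point where the weight bookkeeping has to be checked. The proposal itself concedes this twice (``no, the weights do not match'', the unfinished sentence ending in ``resolution-derived\dots''), so as written it is a plan that has not been verified, not a proof.

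The fix is to virtualize the \emph{empty} clause instead: introduce $(\Box,w),(\Box,-w)$, so the negative clause is already in its target form $(\Box,-w)$ and never needs to be touched again. Then split the positive $(\Box,w)$ on $l_1$, split the resulting $(l_1,w)$ on $l_2$, and so on through $l_r$; after $r$ splits the positive clauses produced are $(\myneg l_1,w),(l_1\lor\myneg l_2,w),\ldots,(l_1\lor\cdots\lor l_{r-1}\lor\myneg l_r,w)$ together with $(l_1\lor\cdots\lor l_r,w)=(C,w)$, and by Property~\ref{propNegCl} the former set is exactly the CNF of $(\myneg C,w)$. This is the paper's proof: one virtual step plus $r$ splits, with no negative clause ever needing manipulation. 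Your ``second virtual'' variant (add $(\Box,w),(\Box,-w)$ as well, expand $(\Box,w)$ into $(C,w)$ and $\myneg C$'s CNF, and cancel the derived $(C,w)$ against your $(C,-w)$) can be made to work, but then your first virtual application is entirely redundant --- you have rediscovered the direct proof with extra steps --- and in any case you would need to spell that accounting out, which the proposal does not do.
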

\begin{proof}
Let $C=l_1\lor l_2 \lor \cdots \lor l_r$.
The derivation is done by first introducing $(\Box, w)$ and $(\Box, -w)$ with the virtual rule, followed by a sequence of $r$ splits,
\[
\begin{array}{l}
{\cal F}; {\cal F}\cup \{(\Box,-w),(\Box,w)\};\\
{\cal F}\cup \{(\Box,-w),(\myneg l_1,w),(l_1,w)\};\\
{\cal F}\cup \{(\Box,-w),(\myneg l_1,w),(l_1\lor \myneg l_2,w),(l_1\lor l_2,w)\};\\
;\ldots;\\
{\cal F}\cup \{(\Box,-w),(\myneg l_1,w),(l_1\lor \myneg l_2,w),\ldots,\\
(l_1\lor l_2\lor \ldots \lor \myneg l_r ,w), (l_1\lor l_2\lor \ldots \lor l_r ,w)\}
\end{array}
\]

\noindent By Property~\ref{propNegCl}, the last element in the derivation is equivalent to
$$ {\cal F}\cup \{(\Box,-w),(\myneg C,w),(C, w)\}$$

\end{proof}

\begin{theorem}
Consider formulas ${\cal F}$ and $\G$ such that $\F\models \G$, and a \textbf{ResSV} proof ${\cal F};\F_1;\ldots;{\cal F}_i$. 
There is a proof  ${\cal F}_i \vdash_{ResSV}  \G$.
\end{theorem}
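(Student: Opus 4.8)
The plan is to eliminate the negatively weighted clauses that earlier applications of the virtual rule may have left in ${\cal F}_i$, consolidate all of this "negative debt" onto the empty clause, and then finish by invoking completeness of \textbf{ResS} on the positive remainder.

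First I would put ${\cal F}_i$ in the form ${\cal F}_i={\cal P}\cup{\cal N}$, where ${\cal P}$ gathers the clauses of positive weight and ${\cal N}=\{(D_1,-u_1),\ldots,(D_k,-u_k)\}$ the clauses of negative weight with each $u_j>0$; using the convention that clauses may be split and merged at will we may assume each clause occurs once. For every $D_j\neq\Box$ I would apply Lemma~\ref{l} with $C:=D_j$ and $w:=u_j$ to the current formula, which appends $(\Box,-u_j),(D_j,u_j),(\myneg{D_j},u_j)$, and then merge the fresh $(D_j,u_j)$ with the pre-existing $(D_j,-u_j)$, annihilating both. Negatively weighted \emph{empty} clauses in ${\cal N}$ are simply left in place. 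After processing all of ${\cal N}$ and merging the resulting negative empty clauses into one, I reach a formula ${\cal F}'={\cal P}'\cup\{(\Box,-U)\}$ with ${\cal P}'$ positive and $U=\sum_j u_j\ge 0$; since every rule used is sound, ${\cal F}'(X)={\cal F}_i(X)={\cal F}(X)$ for all $X$, hence ${\cal P}'(X)={\cal F}(X)+U$. (If $U=0$ this stage is vacuous and one proceeds directly from ${\cal F}_i$.)

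Next I would set $\G':=\G\cup\{(\Box,U)\}$, so that $\G'(X)=\G(X)+U\le {\cal F}(X)+U={\cal P}'(X)$ by $\F\models\G$; thus ${\cal P}'\models\G'$. Since ${\cal P}'$ is a positive formula, completeness of \textbf{ResS} provides a derivation ${\cal P}';\ldots;{\cal H}$ with $\G'\subseteq{\cal H}$, and ${\cal H}$ is positive because \textbf{ResS} never produces a negative weight from positive antecedents. Carrying the untouched clause $(\Box,-U)$ along this derivation turns it into a \textbf{ResSV} derivation ${\cal F}';\ldots;{\cal H}\cup\{(\Box,-U)\}$. Finally, because $\G'\subseteq{\cal H}$ forces the empty clause of ${\cal H}$ to have weight at least $U$, I would split off $(\Box,U)$ from it and merge it with $(\Box,-U)$, cancelling both; the result ${\cal H}''$ is positive and still contains $\G$ (only the weight of $\Box$ was decreased, and it had at least $U$ units above what $\G$ requires). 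Concatenating the two stages yields the desired proof ${\cal F}_i\vdash_{ResSV}\G$.

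The hard part will be recognizing, in the first stage, that Lemma~\ref{l} — which looks counterproductive, since it \emph{introduces} a negative empty clause — is exactly the right instrument: the positive copy of $D_j$ it produces annihilates the pre-existing $(D_j,-u_j)$, so its net effect is to migrate the negative weight from $D_j$ onto $\Box$. One must also check that every merge along the way keeps weights from unintentionally becoming negative, so that the positivity requirement in the definition of $\vdash_{ResSV}$ is satisfied precisely at the end of the constructed proof, which is the only place it matters.
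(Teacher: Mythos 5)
Your proposal is correct and follows essentially the same route as the paper: use Lemma~\ref{l} to convert each negative clause into negative weight on $\Box$, invoke completeness of \textbf{ResS} on the resulting positive part to derive $\G$ together with $(\Box,r)$, and cancel that against $(\Box,-r)$ at the end. Your handling of negatively weighted empty clauses and the $U=0$ case is slightly more explicit than the paper's, but the argument is the same.
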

\begin{proof}
Let ${\cal N}\subseteq {\cal F}_i$ be the set of clauses with negative weights. If ${\cal N} = \emptyset$ then completeness follows trivially from the completeness of \textbf{ResS}. Otherwise, for each $(C,-w) \in {\cal N}$ we add (using the previous lemma) $\{(\Box, -w),(C,w),(\myneg C,w)\}$. After clause merging, ${\cal F}_i; \ldots; \{(\Box,-r)\}\cup {\cal F}'_j$ 
with $-r=\sum_{(C,-w)\in {\cal N}}-w$ being a negative number and ${\cal F}'_j$ contains only positive weights because each $(C,-w)$ vanishes when aggregating $(C,w)$.

Since the three inference rules in \textbf{ResSV} are sound, we have that
$\F(X) = -r + {\cal F}'_j(X)$, which implies that $\F'_j(X)\geq\F(X)$. Together with $\F(X) \geq
\G(X)$, they imply $\F'_j(X) \geq \G(X) + r$, which means that $\F'_j
\models \G \cup \{(\Box, r)\}$. Since \textbf{ResS} is complete, 
$\F'_j
\vdash_{ResS} \G \cup \{(\Box, r)\}$ (i.e., $\F_j'; \ldots; \G \cup \{(\Box, r)\} \cup {\cal H}$ where all clauses in $\G \cup {\cal H}$ have positive weights). Joining the two
facts,

$$\F_i; \ldots; \{(\Box, -r)\} \cup \F'_j; \ldots; \{(\Box, -r)\} \cup  \G \cup \{(\Box, r)\} \cup {\cal H}$$
After merging $(\Box, -r)$ and $(\Box, r)$, the previous derivation can be written as $\F_i \vdash_{ResSV} \G$.
\end{proof}

\begin{property}
There is a polynomial size \textbf{ResSV} refutation of $SPHP$.
\end{property}
\begin{proof}
First, for each variable $x_{ij}$ in $SPHP$ we introduce clauses $(x_{ij}, 1)$, $(x_{ij}, -1)$ and $(\myneg x_{ij}, 1)$, $(\myneg x_{ij}, -1)$ thanks to the virtual rule. As a consequence, $$SPHP; \ldots; SPHP^1 \cup \{(x_{ij}, -1), (\myneg x_{ij}, -1) \mid 1 \leq i \leq m + 1, 1 \leq j \leq m\}$$

\noindent Since, by Property~\ref{prop-Res-sphp1}, there exists a proof $SPHP^1; \ldots; \G \cup \{(\Box, m^2+m+1)\}$ where all clauses in $\G$ have positive weights, then
$$SPHP ; \ldots; \G \cup \{(\Box, m^2+m+1)\} \cup \{(x_{ij}, -1), (\myneg x_{ij}, -1) \mid 1 \leq i \leq m + 1, 1 \leq j \leq m\}$$

\noindent Finally, clause $(\Box, m^2+m+1)$ is unmerged to $m^2 + m + 1$ clauses $(\Box, 1)$ and each $(\Box, 1)$ is split to one pair $(x_{ij}, 1), (\myneg x_{ij}, 1)$. Since there are $m^2 + m$ variables, one clause $(\Box, 1)$ still remains. That is,

$$SPHP; \ldots; \G \cup \{(\Box, 1)\} \cup \{(x_{ij}, 1), (\myneg x_{ij}, 1) \mid 1 \leq i \leq m + 1, 1 \leq j \leq m\} \cup $$ $$\hspace{4cm} \cup \{(x_{ij}, -1), (\myneg x_{ij}, -1) \mid 1 \leq i \leq m + 1, 1 \leq j \leq m\}$$

\noindent After merging clauses with positive and negative weights,

$$SPHP \vdash_{ResSV} \{(\Box, 1)\}$$

\noindent The length of the refutation is $O(m^3)$.

\end{proof}

The main consequence of the previous property is that \textbf{ResSV} is stronger than \textbf{ResS},
\begin{theorem}
 \textbf{ResSV} is stronger than \textbf{ResS}.
\end{theorem}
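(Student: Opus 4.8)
The plan is to establish the two directions of $p$-simulation separately, exactly as in the proof that \textbf{ResS} is stronger than \textbf{Res}. First, since \textbf{ResSV} contains all the inference rules of \textbf{ResS} (resolution, split, and additionally virtual), any \textbf{ResS} proof is already a \textbf{ResSV} proof, so \textbf{ResSV} trivially $p$-simulates \textbf{ResS} via the identity map. This direction is immediate and requires no work beyond noting that the rule set of \textbf{ResS} is a subset of that of \textbf{ResSV}.

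For the other direction, I would show that \textbf{ResS} cannot $p$-simulate \textbf{ResSV} by exhibiting a family of formulas with short \textbf{ResSV} refutations but no short \textbf{ResS} refutations. The natural candidate is $SPHP$: by the immediately preceding Property there is a short (polynomial in $m$) \textbf{ResSV} refutation of $SPHP$, while by the earlier Property there is no polynomial \textbf{ResS} refutation of $SPHP$. Hence no polynomially computable function can transform the short \textbf{ResSV} refutations into \textbf{ResS} refutations, since \textbf{ResS} refutations of $SPHP$ are necessarily superpolynomial in size. Combining the two directions, \textbf{ResSV} $p$-simulates \textbf{ResS} but not conversely, which is precisely the definition of \textbf{ResSV} being stronger than \textbf{ResS}.

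I do not expect any real obstacle here: both non-trivial ingredients (the short \textbf{ResSV} refutation of $SPHP$ and the non-existence of a short \textbf{ResS} refutation of $SPHP$) are already established in the excerpt, so the proof is essentially a two-line bookkeeping argument mirroring the \textbf{Res}/\textbf{ResS} separation. The only point deserving a word of care is the separating instance: one must use $SPHP$ rather than $SPHP^0$ or $PHP$, because $SPHP^0$ already has a short \textbf{ResS} refutation (so it separates \textbf{ResS} from \textbf{Res}, not \textbf{ResSV} from \textbf{ResS}), and it is the ``soft'' version $SPHP$ whose hardness for \textbf{ResS} was reduced from the hardness of $PHP$. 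This is also why $SPHP$ is the label on the corresponding arrow in Figure~\ref{fig:resumen}.
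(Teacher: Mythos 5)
Your proof is correct and follows exactly the paper's argument: \textbf{ResSV} $p$-simulates \textbf{ResS} because its rule set is a superset, and the separation in the other direction comes from $SPHP$, which has a polynomial \textbf{ResSV} refutation but no polynomial \textbf{ResS} refutation. Your remark about why $SPHP$ (rather than $SPHP^0$ or $PHP$) is the right separating instance is accurate and consistent with Figure~\ref{fig:resumen}.
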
{}
\begin{proof}
On the one hand, it is clear that \textbf{ResSV} $p$-simulates \textbf{ResS} since it is a superset of \textbf{ResS}. On the other hand, \textbf{ResSV} is can produce a polynomial size refutation of $SPHP$, while \textbf{ResS} cannot.
\end{proof}

We will finish this section showing that Theorem~\ref{TH-Maxsat-completeness} has an unexpected application in the context of \textbf{ResSV}. Consider the $PHP$ problem. In MaxSAT, proving its unsatisfiability means proving $MaxSAT(PHP)=\infty$. This can be done with a refutation  $PHP\vdash (\Box,\infty)$, or using Corollary~\ref{coro-1}, which tells that $\F\models (\Box,\infty)$ if and only if $MaxSAT(\F)\geq 1$, which corresponds to a weaker derivation $PHP\vdash (\Box,1)$. The following two theorems shows that \textbf{ResSV} cannot do efficiently the first approach, but can do efficiently the second.
\begin{theorem}
There is no polynomial size proof $PHP\vdash_{ResSV} (\Box,\infty)$.
\end{theorem}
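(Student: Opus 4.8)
The plan is to reduce a hypothetical short \textbf{ResSV} refutation $PHP\vdash_{ResSV}(\Box,\infty)$ to a short SAT resolution refutation of the Pigeon Hole Principle, contradicting the classical exponential lower bound \cite{Robinson:1965:MLB:321250.321253}. The key observation is that deriving $(\Box,\infty)$ with an \emph{infinite} final weight cannot be "helped" by the virtual rule, because virtual is explicitly restricted to finite weights $w\neq\infty$: the clause $(\Box,\infty)$ must therefore be produced by resolution or split acting on clauses that are already hard, and ultimately its infinite weight must trace back, through the proof DAG, entirely to the original hard clauses of $PHP$.

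First I would make this precise by analysing the refutation graph $G(\Pi)$. Since $PHP$ is a hard formula, all its original clauses carry weight $\infty$, and I would track which nodes in $G(\Pi)$ carry infinite weight. Because $\infty-w=\infty$ and $\infty+w=\infty$, once a hard antecedent enters resolution the corresponding hard consequents stay hard; split applied to a hard clause yields hard consequents; and the virtual rule can never create a hard clause. So the sub-DAG $G_\infty(\Pi)$ consisting of the clause-nodes with weight $\infty$ together with the inference-nodes all of whose relevant antecedents are hard is itself a valid derivation that only uses resolution and split over hard clauses, starts from $PHP$, and ends at $(\Box,\infty)$. In other words, any soft/virtual material in $\Pi$ is irrelevant to the derivation of $(\Box,\infty)$, and deleting it yields a \textbf{ResS} refutation of $PHP$ of length at most $|\Pi|$.

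Next I would invoke the results already in the paper: by Property~\ref{sat-res-ress} (Atserias' Lemma~7), a SAT proof system with resolution and split is no stronger, refutationally, than one with resolution alone, so a \textbf{ResS} refutation of the hard formula $PHP$ of length $L$ yields a SAT resolution refutation of $\K$ of length polynomial in $L$. Since SAT resolution has no polynomial-size refutation of the Pigeon Hole Principle \cite{Robinson:1965:MLB:321250.321253}, we conclude that $L$, and hence $|\Pi|$, cannot be polynomial in $m$. This is exactly the contrapositive of the theorem.

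The main obstacle is making the DAG-restriction argument airtight: one must argue carefully that the "extra" consequents produced by MaxSAT resolution on hard clauses (the compensation clauses $(x\lor A\lor\myneg B,m)$ etc., and the leftover $(x\lor A,v-m)$) do not need to be preserved for the derivation of $(\Box,\infty)$, and that merging nodes / discarding soft clauses genuinely leaves a well-formed SAT-style proof where every used antecedent is still present when needed. The cleanest way is probably to argue directly at the level of the graph: take the node labelled $(\Box,\infty)$, take its ancestor-closure in $G(\Pi)$, observe that every inference-node in this closure has at least one hard consequent (the one on the ancestor path) and therefore — by inspection of the three rules — has only hard antecedents among those it actually consumes, so the closure touches only hard clauses and only uses resolution and split in their SAT reading. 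Once that structural claim is established, the rest is a routine appeal to Property~\ref{sat-res-ress} and the Haken lower bound.
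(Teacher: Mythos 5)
Your proposal is correct and follows essentially the same route as the paper: observe that virtual cannot create hard clauses and that resolution and split only yield hard consequents from hard antecedents, so any derivation of $(\Box,\infty)$ lives entirely in the hard fragment and is a \textbf{ResS} refutation of $PHP$, which by Property~\ref{sat-res-ress} and the classical lower bound cannot be polynomial. The paper states this more tersely, while you additionally flesh out the ancestor-closure argument on the proof graph, but the underlying idea is identical.
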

\begin{proof}
By definition, the virtual rule cannot introduce hard clauses. Resolution and split only produce new hard consequents if their antecedents are hard. Therefore, $(\Box, \infty)$ can only be obtained by resolving or splitting hard clauses in $PHP$. Consequently, if there is a polynomial size refutation for $PHP\vdash_{ResSV} (\Box,\infty)$, then it is a polynomial size \textbf{ResS} refutation $PHP\vdash_{ResS} (\Box,\infty)$. Property \ref{sat-res-ress} tells that it would also imply the existence of a polynomial size \textbf{Res} refutation $PHP\vdash_{ResS} (\Box,\infty)$ which is impossible.
\end{proof}

\begin{theorem}\label{th-phphard}
There is a polynomial size \textbf{ResSV} proof of $(\Box,1)$ from $PHP$.
\end{theorem}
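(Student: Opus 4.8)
The plan is to downgrade the hard clauses of $PHP$ to weight one and then reuse the short \textbf{ResSV} refutation of $SPHP$ established in the previous Property. The enabling observation is that $\infty = 1 + \infty$, so by the weight‑separation convention each hard clause $(C,\infty)$ with $C\in\K$ may be separated into the pair $(C,1)$ and $(C,\infty)$; doing this for every $C\in\K$ turns $PHP$ into a formula equivalent to $PHP$ that contains the soft formula $SPHP=\{(C,1)\mid C\in\K\}$ as a sublayer, with the original hard clauses sitting beside it.

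First I would perform these $|\K| = O(m^3)$ separations. Next I would run, verbatim, the short \textbf{ResSV} refutation of $SPHP$ on the weight‑one layer: it introduces the virtual pairs $(x_{ij},1),(x_{ij},-1),(\myneg x_{ij},1),(\myneg x_{ij},-1)$ to reach $SPHP^1$ (together with the negative residue), applies the polynomial \textbf{Res} refutation of $SPHP^1$ from Property~\ref{prop-Res-sphp1} to derive $(\Box,m^2+m+1)$, and finally splits the empty clause over the pairs $x_{ij},\myneg x_{ij}$ and merges with the negative clauses to cancel them, leaving $(\Box,1)$. The point is that none of these steps ever uses a hard clause as an antecedent, and we never merge anything into a hard clause, so the hard layer stays inert. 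The formula reached at the end therefore contains $(\Box,1)$, and every clause in it has positive weight (the residue of the $SPHP$ refutation is positive by its construction, and the untouched hard clauses have weight $\infty>0$), so this is a legitimate derivation $PHP\vdash_{ResSV}(\Box,1)$ under the refined definition of $\vdash$.

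For the length, the separation phase costs $O(m^3)$ steps, the \textbf{Res} refutation of $SPHP^1$ costs $O(m^3)$ by Property~\ref{prop-Res-sphp1}, and the virtual‑transformation and cancellation phases each cost $O(m^2)$; since $|PHP| = \Theta(m^3)$, the whole derivation is polynomial in $|PHP|$.

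I expect the main obstacle to be the bookkeeping that makes the reuse of the $SPHP$ refutation rigorous: one must argue that treating a hard weight as $\infty = 1+\infty$ is a sound and admissible move, and that the weight‑one layer can be manipulated entirely independently of the persisting hard layer — in particular that no intermediate clause of the imported refutation is forced to be merged with a hard clause (which, since $w+\infty=\infty$, would destroy the finite weight we rely on). Once that is pinned down, checking that the final formula satisfies the ``all weights positive'' side condition of $\vdash_{ResSV}$ is immediate.
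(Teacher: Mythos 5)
Your proof is correct and polynomial, but it routes through the short \textbf{ResSV} refutation of $SPHP$ as a black box rather than re-deriving it the way the paper does, and the mechanism by which the hard clauses participate is different. The paper's proof applies virtual to obtain $(\Box,m^2+m)$ and $(\Box,-m^2-m)$, splits the positive empty clause into the units $(x_{ij},1),(\myneg x_{ij},1)$, and then resolves these soft units directly against the hard clauses of $PHP$, relying on $\infty-w=\infty$ so that each hard antecedent survives every resolution while weight $w$ flows into the resolvent; that is the ``trivial adaptation'' of Property~\ref{prop-Res-sphp1}. You instead exploit the weight-separation convention with $\infty=1+\infty$ to peel off a weight-one copy of every clause of $\K$, obtaining a formula equivalent to $PHP\cup SPHP$, and then run the $SPHP$ refutation verbatim on the soft layer while the hard layer stays inert. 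The two are equivalent at the level of weight flows --- resolving a weight-one clause against $(C,\infty)$ is exactly separating off $(C,1)$ first and resolving with that copy --- and your worry about forced merging is moot, since the paper's convention is that clauses are separated and merged ``as needed,'' i.e., merging is never compulsory. What your route buys is that Property~\ref{prop-Res-sphp1} and the $SPHP$ refutation are reused literally instead of being adapted to hard antecedents; what it costs is a reliance on applying the separation convention to infinite weights, which the paper's definitions license (indeed $SPHP\subseteq PHP$ under the paper's definition of $\subseteq$ because $1\leq\infty$, hence $PHP\equiv SPHP\cup\R$ for some $\R$) but which the paper itself never exercises explicitly. Both derivations have length $O(m^3)$.
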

\begin{proof}
We only need to apply the virtual rule,
\begin{center}
\[\begin{array}{c}
 \\ 
\cline{1-1} 
(\Box,m^2+m) \hspace{.5cm} (\Box,-m^2-m)
\end{array}
\]
\end{center}
and then split,
\begin{center}
\[
\begin{array}{c}
(\Box, 1) \\ 
\cline{1-1} 
(x_{ij}, 1) \hspace{.5cm} (\myneg x_{ij}, 1)
\end{array}
\]
\end{center}
for each $i,j$. Then, we can unmerge each (hard) clause of $PHP$ extracting weight one.  The resulting problem is $PHP \cup PHP^1$. At this point  
the proof of Property~\ref{prop-Res-sphp1} shows that we can derive $(\Box,m^2+m+1)$ which cancels out the negative weight while still retaining $(\Box,1)$.
\end{proof}

\section{ResSV and Circular Proofs}
\label{Sec-Circular}

In this section we study the relation between \textbf{ResSV} and the recently proposed concept of circular proofs  \cite{DBLP:conf/sat/AtseriasL19}. Circular proofs allow the addition of an arbitrary set of clauses to the original formula. It can be seen that conclusions are sound as long as the added clauses are \textit{re-derived} as many times as they are used. This condition is characterized as the existence of a flow in a graphical representation of the proof.
Since Circular proofs are defined in the context of hard formulas, the comparison has to be circumscribed to that context. Here we show that the \textbf{ResSV} proof system naturally captures the same idea with an arguably simpler notation. In particular, the virtual rule with its soundness theorem that requires that weights must be positive at the end of the derivation guarantees the existence of the flow.

\subsection{Circular Proofs}

 Given a CNF formula $\F$ and a SAT proof system $\textbf{S}$, a \textit{circular pre-proof} of $C_r$ from $\F$ is a SAT proof
$$\Pi= (C_1,C_2,\ldots,C_p,C_{p+1}, C_{p+2},\ldots, C_{p+q}, C_{p+q+1}, C_{p+q+2},\ldots,C_r)$$
such that 
$\F=\{C_1,C_2,\ldots,C_p\}$,
${\cal B}=\{C_{p+1}, C_{p+2},\ldots, C_{p+q}\}$ 
is an arbitrary set of clauses, and each $C_{i}$ ( with $i>p+q$) is obtained from previous clauses by applying an inference rule in $\textbf{S}$. Therefore, a pre-proof is no more than a proof where the original formula $\F$ is augmented with an arbitrary set of new clauses ${\cal B}$. 

 A \textit{circular pre-proof} $\Pi$ is associated with a (possibly cyclic) directed bi-partite graph $G(\Pi)$. To define such graph, consider first the acyclic graph as defined in Section~\ref{Sec-SATPS} using $\F \cup {\cal B}$ as the start of the proof. $G(\Pi)$ is the compactation of that graph by considering every clause in $C\in \B$ and merging all nodes whose associated clause is identical to it. After the compactation the graph may become cyclic due to the back-edges from derived clauses that were already in ${\cal B}$. 

A \textit{flow assignment} for a circular pre-proof is an assignment $f:I\longrightarrow \mathbb{N} \setminus \{0\}$ of inference nodes to positive integers (see Lemma 1 in~\cite{DBLP:conf/sat/AtseriasL19}).  The \textit{balance} of node $C\in J$ is the inflow minus the outflow,
$$b(C)= \sum_{R \in N^-(C)} f(R) - \sum_{R \in N^+(C)} f(R)$$
where $N^-(C)$ and $N^+(C)$ denote the set of in and out-neighbors of node $C\in J$, respectively. 

\begin{definition}
Given a SAT proof system $\textbf{S}$, a SAT circular proof under $\textbf{S}$ of clause $A$ from CNF formula $\F$ is a pre-proof $\Pi$ whose proof-graph $G(\Pi)$ admits a flow in which all clauses not in $\F$ have non-negative balance and $A$ has a strictly positive balance. 
\end{definition}

\begin{property} (see Section 3.5 in~\cite{DBLP:journals/corr/abs-1802-05266})
An inference rule satisfies the \emph{multiple consequence property} iff any truth assignment that falsifies one of its consequent formulas satisfies all other consequent formulas. 
\end{property}

\begin{theorem} (see Theorem 4 in~\cite{DBLP:journals/corr/abs-1802-05266})
Assuming a sound SAT proof system $\textbf{S}$ such that all its inference rules satisfy the multiple consequence property, if there is a SAT circular proof of clause $A$ from $\F$ under SAT proof system $\textbf{S}$ then $\F\models A$.
\end{theorem}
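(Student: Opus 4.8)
The plan is to exploit the flow assignment directly, converting it into a numerical inequality between the costs of the clauses involved. First I would take a circular proof $\Pi$ of $C$ from $\F$, together with its witnessing flow $f : I \to \mathbb{R}^+$, and I would interpret each clause node $D \in J$ as a weighted clause carrying weight equal to its balance $b(D)$ (which may be negative). The key structural fact is that for a single inference node $R \in I$ with antecedents $\mathcal{A}$ and consequents $\mathcal{B}$, the soundness of the rule plus the multiple consequence property gives, for every truth assignment $X$, the equality $\sum_{D \in \mathcal{A}} [X \not\models D] \geq \sum_{D' \in \mathcal{B}} [X \not\models D']$ when the rule has a single consequent, and more generally (using multiple consequence) that falsifying any consequent forces satisfaction of all the others, so the consequent side contributes at most one unit of falsification whenever the antecedent side contributes at least that. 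I would phrase this per-node as: for all $X$, $f(R) \cdot \big(\sum_{D \in N^+(R) \text{ in } J,\ \text{(as antecedent)}} [X \not\models D]\big) \geq f(R) \cdot \big(\sum_{D' \in N^-(R) \text{ in } J} [X \not\models D']\big)$, i.e. weighting the soundness inequality of rule $R$ by its flow value $f(R)$.

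Next I would sum these weighted per-node inequalities over all inference nodes $R \in I$. On the left we collect, for each clause node $D$, the total outflow $\sum_{R \in N^+(D)} f(R)$ times $[X \not\models D]$; on the right, the total inflow $\sum_{R \in N^-(D)} f(R)$ times $[X \not\models D]$. Rearranging, the sum telescopes into $\sum_{D \in J} b(D)\, [X \not\models D] \geq 0$ — the net effect is exactly the balance. Here I must be careful about clauses that appear multiple times as antecedents of the same rule or are shared; the bipartite graph bookkeeping of Section~\ref{Sec-SATPS} handles this because each occurrence is a distinct edge. Now split $J$ according to the definition of circular proof: clauses in $\F$ may have arbitrary (possibly negative) balance, all clauses not in $\F$ have $b(D) \geq 0$, and the target $C$ has $b(C) > 0$. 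For a model $X$ of $\F$, every term $[X \not\models D]$ with $D \in \F$ is zero, so those balances drop out; among the remaining non-negative-balance clauses, the inequality $\sum_{D \notin \F} b(D)[X \not\models D] \geq 0$ is automatically consistent, but I want the reverse conclusion about $C$. So instead I would run the argument as: $0 \leq \sum_{D \in J} b(D)[X \not\models D] = \sum_{D \in \F} b(D)[X\not\models D] + \sum_{D \notin \F, D \neq C} b(D)[X \not\models D] + b(C)[X \not\models C]$, and for a model $X$ of $\F$ the first sum vanishes while the middle sum is $\geq 0$; hence if $X \not\models C$ we would need... wait, that gives $0 \le (\text{non-negative}) + b(C)$, which is not yet a contradiction.

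The correct direction is the opposite sign: I would weight each rule's soundness inequality so that the \emph{consequents} contribute positively, giving per-node $f(R)(\text{antecedent falsifications}) \le f(R)(\text{consequent falsifications})$ is false in general, so the honest statement is the one-consequent bound antecedents $\ge$ consequents, and after summation one gets $\sum_D (\text{outflow}(D) - \text{inflow}(D))[X\not\models D] \ge 0$, i.e. $\sum_D -b(D)[X\not\models D] \ge 0$, i.e. $\sum_D b(D)[X \not\models D] \le 0$. Then for a model $X$ of $\F$: the $\F$-terms vanish, the non-$\F$ non-$C$ terms are $\ge 0$ (since their balances are $\ge 0$ and they are multiplied by an indicator $\ge 0$), so $b(C)[X \not\models C] \le -\sum_{D \notin \F, D\neq C} b(D)[X \not\models D] \le 0$; since $b(C) > 0$ this forces $[X \not\models C] = 0$, i.e. $X \models C$. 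Thus $\F \models C$. The main obstacle I anticipate is getting the multiple consequence property to correctly yield the per-node inequality when a rule has several consequents and several of them are simultaneously falsifiable-looking — the property guarantees at most one consequent is falsified by any $X$, so the consequent side of the per-node inequality is the indicator of falsifying \emph{some} consequent, which is bounded above by the indicator that \emph{at least one} antecedent is falsified (this is precisely soundness); I would state and verify this elementary fact carefully, since it is exactly where both hypotheses (soundness and multiple consequence) are used, and it is the linchpin that makes the telescoping sum valid.
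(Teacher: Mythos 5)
Your proposal is correct. Note that the paper itself states this theorem without giving a proof (it is imported from the circular-proofs literature it cites), so there is no in-paper argument to compare against; what you give is essentially the standard soundness argument for circular proofs, and it goes through. The linchpin you identify is exactly right: soundness (in contrapositive form) says that if some consequent of a rule is falsified by $X$ then some antecedent is, and the multiple consequence property caps the number of falsified consequents at one, so for each inference node $R$ the number of falsified antecedents is at least the number of falsified consequents; multiplying by $f(R)>0$, summing over $I$, and regrouping by clause nodes yields $\sum_{D\in J} b(D)\,[X\not\models D]\le 0$, from which a model of $\F$ forces $b(C)\,[X\not\models C]\le 0$ and hence $X\models C$ since $b(C)>0$. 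Two cosmetic points: in your per-node display you write $N^+(R)$ for the antecedents and $N^-(R)$ for the consequents, which is backwards relative to the paper's edge orientation (antecedents are in-neighbours of an inference node), though your verbal labels and the subsequent clause-node bookkeeping are consistent and correct; and the mid-proof detour where you first try the wrong sign should simply be deleted in a final write-up, since the corrected version that follows is the whole argument.
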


\begin{theorem} (see Theorem 4 in~\cite{DBLP:conf/sat/AtseriasL19})
There is a circular refutation of polynomial length of $\mathit{PHP}$ using the proof system with symmetric resolution and split.
\end{theorem}

 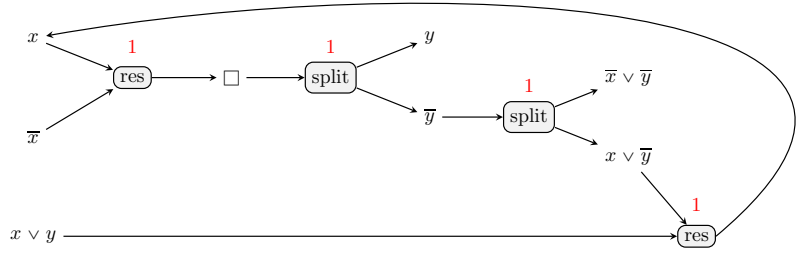
\begin{figure}
 \centering
\scalebox{0.75}{\begin{tikzpicture}
\tikzset{%
every node/.style={fill=white, rounded corners, node distance=5em},%
every path/.style={->, >=stealth, line width=.06em}%
}

\tikzstyle{op}=[draw, fill=black!5!white]

\node(dummy){};  
\node[right of=dummy, xshift=1.5cm](x){$x$};
\node[below of = x](-x){$\myneg x$};
\node[below of = -x](xy){$x \lor y$};

\node[op, right of= x, yshift=-0.7cm](res1){res};
\node[above of=res1, yshift=-1.2cm]{{\color{red}1}};
\draw (x) to (res1);
\draw (-x) to (res1);

\node[right of= res1](box){$\Box$};
\draw (res1) to (box);

\node[op, right of= box](split1){split};
\node[above of=split1, yshift=-1.2cm]{{\color{red}1}};
\draw (box) to (split1);
\node[right of= split1, yshift=0.7cm](y){$y$};
\node[right of= split1, yshift=-0.7cm](-y){$\myneg y$};
\draw (split1) to (y);
\draw (split1) to (-y);

\node[op, right of= -y](split2){split};
\node[above of=split2, yshift=-1.2cm]{{\color{red}1}};
\draw (-y) to (split2);
\node[right of= split2, yshift=0.7cm](-y-x){$\myneg x \lor \myneg y$};
\node[right of= split2, yshift=-0.7cm](-yx){$x\lor \myneg y $};
\draw (split2) to (-y-x);
\draw (split2) to (-yx);

\node[op, right of= xy, xshift=10cm](res2){res};
\node[above of=res2, yshift=-1.2cm]{{\color{red}1}};
\draw (xy) to (res2);
\draw (-yx) to (res2);

\draw (res2.east) to [out=40, in=10, looseness=1.4] (x);

\end{tikzpicture}}
\caption{Graph  of a circular proof of $\{y\}$ from $\{(x \lor y), (\myneg x)\}$. The certifying flow is indicated above each inference node.}
\label{fig:circular}
\end{figure}

Figure~\ref{fig:circular} shows the graph and certifying flow of a circular proof of $\{y\}$ from $\{x \lor y, \myneg x\}$ with symmetric resolution and split.

\subsection{Relation between ResSV and circular proofs}

Now we show that the MaxSAT \textbf{ResSV} proof system is an extension of circular proofs from SAT to MaxSAT. The following two theorems show that, when restricted to hard formulas, \textbf{ResSV} and SAT circular can simulate each other. Recall from Corollary~\ref{coro-1} that if $\F$ is a hard formula, then $\F \models \{(A,\infty)\}$ is equivalent to $MaxSAT(\F \cup \{(\myneg A,1)\})\geq 1$ which can be proved by a derivation $\F \cup  \{(\myneg A,1)\} \vdash \{(\Box,1)\}$. 

\begin{theorem}\label{th-circular2ressv}
Let $\Pi$ be a SAT circular proof of clause $A$ from formula $\F=\{C_1,\ldots,C_p\}$ using the proof system  symmetric resolution and split. There is a  proof $\{(C_1,\infty),\ldots, (C_p,\infty), (\myneg A,1)\} \vdash_{ResSV} \{(\Box,1)\}$ whose length is $O(|\Pi|)$.
\end{theorem}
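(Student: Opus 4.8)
The plan is to take the SAT circular proof $\Pi$ and translate it inference-step by inference-step into a \textbf{ResSV} derivation, using the flow $f$ that certifies $\Pi$ as a recipe for how much weight to push through each rule application. The key observation is that a flow assignment $f: I \to \mathbb{R}^+$ tells us exactly how many ``units'' of each inference to perform in the weighted setting: a SAT inference node $R$ carrying flow $f(R)$ becomes a \textbf{ResSV} application of the corresponding MaxSAT rule on clauses of weight $f(R)$. The clauses in $\B$ (the arbitrary clauses added to form the pre-proof) are precisely the clauses that need to be ``anticipated'', so I would introduce each $C \in \B$ via the virtual rule with the weight equal to its total outflow $\sum_{R \in N^+(C)} f(R)$, simultaneously creating a negative companion $(C, -\sum_{R \in N^+(C)} f(R))$ that the rest of the derivation must cancel.

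First I would set up the correspondence between the rules: SAT symmetric resolution $\frac{x \lor A \quad \myneg x \lor A}{A}$ corresponds to MaxSAT resolution applied to $(x \lor A, w)$ and $(\myneg x \lor A, w)$ with equal weights, which (after the tautological consequents vanish, since $B = A$ here and $A \lor \myneg A$ is a tautology) yields exactly $(A, w)$ while consuming both antecedents entirely; SAT split corresponds directly to the MaxSAT split rule. Then I would process the graph $G(\Pi)$: for every clause node, the flow conservation at non-$\F$, non-$A$ nodes (non-negative balance) guarantees that the inflow of weight is at least the outflow, so there is always enough weight of each intermediate clause to feed the inferences that consume it; any surplus just remains as extra positive-weight clauses in the final formula, which is harmless. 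For the original clauses $(C_i, \infty)$, infinite weight means they can feed arbitrarily many inferences (MaxSAT resolution on a hard clause leaves a copy behind, matching how $\Pi$ reuses original clauses). For $(\myneg A, 1)$: since $\Pi$ proves $A$ with strictly positive balance, say balance $\beta > 0$ at node $A$, I would scale the flow so that $\beta = 1$ (flows can be uniformly scaled), then resolve the derived $(A, 1)$ against $(\myneg A, 1)$ to obtain $(\Box, 1)$.

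The main obstacle will be bookkeeping the weights so that at the end \emph{all} negative-weight clauses have been cancelled, which is exactly what the $\vdash_{ResSV}$ definition demands. The negative clauses are the $(C, -w_C)$ introduced alongside each $C \in \B$ by the virtual rule, with $w_C$ its total outflow in the flow $f$. Each such $C$, being in $\B$, also appears as a \emph{derived} clause in $G(\Pi)$ with total inflow at least $w_C$ (that is precisely the non-negative balance condition at $C$, which holds because $C \notin \F$ and, if $C = A$, we have arranged the positive balance to be absorbed by the resolution with $\myneg A$ rather than left dangling). So after running all the inferences, the derived positive copies of $C$ merge with $(C, -w_C)$ and cancel, leaving only non-negative weights — plus $(\Box, 1)$ with positive weight. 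I would argue this merging is always possible by processing inference nodes in the topological order of the underlying \emph{acyclic} proof $\Pi$ (before compactation), so that whenever a node needs weight-$w$ copies of a clause as antecedent, those copies have already been produced. The length bound $O(|\Pi|)$ is then immediate: each SAT inference node contributes a constant number of \textbf{ResSV} steps (one rule application, possibly a merge), each $C \in \B$ contributes one virtual application, and Property~\ref{propNegCl}-style expansions of $\myneg C$ are not needed here since $C$ is a single clause whose negation only enters through the final resolution against $(\myneg A, 1)$.

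One subtlety I would flag and handle carefully: MaxSAT resolution, unlike SAT symmetric resolution, also emits the ``compensation'' clauses $(x \lor A \lor \myneg B, m)$ and $(\myneg x \lor B \lor \myneg A, m)$; in the symmetric case $B = A$ these are tautologies and disappear, so this is a non-issue for this theorem, but I would state it explicitly so the translation of symmetric resolution to MaxSAT resolution is clean. Similarly I would note that the restriction ``resolution and split act only on positive-weight antecedents'' is respected throughout, since we only ever resolve/split the positive copies, never the negative reminders.
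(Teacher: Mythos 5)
Your proposal is correct and follows essentially the same route as the paper's proof: translate the flow-certified circular proof into a \textbf{ResSV} derivation in which each SAT inference node carrying flow $f(R)$ becomes the corresponding MaxSAT rule moving weight $f(R)$, introduce the anticipated clauses with the virtual rule, and let the re-derived positive copies cancel the negative companions before the final resolution steps of the derived $(A,\cdot)$ against $(\myneg A,1)$. The only (immaterial) difference is bookkeeping: the paper virtually pre-loads \emph{every} clause of $J$ not in $\F$ with its total inflow so the inference steps can be replayed in any order, whereas you pre-load only the clauses of $\B$ with their total outflow and rely on the topological order of the uncompacted pre-proof for the remaining derived clauses --- both arguments are sound and yield the $O(|\Pi|)$ bound.
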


\begin{proof} Let $G(\Pi)=(J\cup I,E)$  be the proof graph and $f(\cdot)$ be the flow of $\Pi$. By definition of SAT circular proof, $A\in J$ and $b(A)>0$. Let $o_C =\sum_{R\in N^+(C)} f(R)$ denote the outflow of every clause $C\in J$ and $i_C =\sum_{R\in N^-(C)} f(R)$ denote the inflow of every clause $C \in J$.

First, we show that there is a proof (made exclusively of virtual and unmerge steps),
$$\emptyset; \ldots; \{(C,-o_C) \mid C\in J\} \cup \{(C,f(R)) \mid C\in J, R\in N^+(C)\}$$
This is obtained by considering each clause node $C\in J$, adding $\{(C,-o_C), (C,o_C)\}$ thanks to the virtual rule, and unmerging $(C,o_C)$ as needed. Note that, after these steps, all the antecedents of the inference nodes in $I$ are available.

Second, we show that there is a proof (made exclusively of splits and symmetric resolutions),
$$\{(C,f(R)) \mid C\in J, R\in N^+(C)\} \vdash_{ResS} \{(C,f(R)) \mid C\in J, R\in N^-(C)\}$$
This is obtained by considering each inference node $R\in I$ and transforming its antecedents into its consequences as follows. If $R$ is a SAT split $\{C\} \vdash \{C \lor x, C \lor \myneg x\}$ then the inference step is a MaxSAT split $\{(C, f(R))\} \vdash \{(C \lor x, f(R)), (C \lor \myneg x, f(R))\}$. If $R$ is a symmetric SAT resolution $\{C \lor x, C' \lor \myneg x\} \vdash \{C \lor C'\} $ then the inference step is a (symmetric) MaxSAT resolution $\{(C \lor x, f(R)), (C' \lor \myneg x, f(R))\} \vdash \{ (C \lor C', f(R))\}$.

From the previous two proofs, 
$$\F \cup (\myneg A,1); \ldots; \F \cup \{(\myneg A,1)\} \cup \{(C,-o_C) \mid C\in J\} \cup \{(C, i_C) \mid C \in J\};$$ $$ \ldots; \F \cup \{(\myneg A,1)\} \cup \{(C, b(C)) \mid C\in J\}\hspace{1.3cm} $$



Let $A = a_1 \lor a_2 \lor \ldots \lor a_q$. Since, $A \in J$, $b(A) > 0$ and $(\myneg A,1)$ is shorthand for $\{(\myneg a_1,1), (a_1 \lor \myneg a_2,1), \ldots, (a_1 \lor \ldots \lor a_{q-1} \lor \myneg a_q, 1)\}$, after $q$ MaxSAT resolutions $\{(\myneg A,1), (A, 1), (A, b(A) - 1)\} \vdash_{ResSV} \{(\Box, 1)\}$ which proves the Theorem.

\end{proof}

\begin{lemma}\label{lema-infinity}
Given a formula $\F=\{(C_1, \infty), \ldots, (C_j, \infty), (C_{j+1}, w_{j+1}), \ldots, (C_p, w_p)\}$ where $\forall j < k \leq p$, $w_k \not= \infty$, if there exists a  \textbf{ResSV} refutation $\F \vdash_{ResSV} \{(\Box, 1)\}$ of length $l$ then there exits a \textbf{ResSV} refutation $\F' \vdash_{ResSV} \{(\Box, 1)\}$ of  length $O(l)$ where $$\F'=\{(C_1, w_1), \ldots, (C_j, w_j), (C_{j+1}, w_{j+1}), \ldots, (C_p, w_p)\}$$
\noindent and $\forall 1 \leq k \leq j$, $w_k \not= \infty$.
\end{lemma}
\begin{proof}
For readability reasons, $\vdash$ denotes $\vdash_{ResSV}$. By Property~\ref{prop-symres}, any $\F \vdash \{(\Box, 1)\}$ of length $l$ can be rewritten into an equivalent refutation of length $e=O(l)$ in which resolution is restricted to its symmetric form. Let $\Pi = (\F_0; \F_1; \ldots; \F_e)$ be that refutation where $\F_0 = \F$ and $(\Box, 1)\in \F_e$.


We are going to prove that for each $\F_i= \{(C_1, \infty), \ldots, (C_j, \infty), (C_{j+1}, w_{j+1}),\ldots, (C_p, w_p)\}$ there is a $\F'_i= \{(C_1, w_1), \ldots, (C_j, w_j), (C_{j+1}, w_{j+1}),\ldots, (C_p, w_p)\}$ such that $\forall 1\leq k \leq j$ $w_k\not=\infty$ and
 $\F_i' \vdash \{(\Box, 1)\}$.  We prove it by induction on $i$ going in reverse order, from $i=e$ to $i=0$.






\underline{Base case} ($i=e$): we define,
$$\F_e' = \{(C_1, 1), \ldots, (C_j, 1), (C_{j+1}, w_{j+1}), \ldots, (C_p, w_p)\}$$ 
Since $(\Box, 1) \in \F_e$, then $(\Box, 1) \in \F'_e$. Thus, it
satisfies the conditions.

\underline{Inductive step}:
Let $\A \subseteq \F_{i - 1}$ and $\C \subseteq \F_{i}$ be the set of antecedent and consequent clauses in step $\F_{i - 1};\F_i$. Therefore,  $\A; \C$ and $\F_{i-1} = (\F_i \setminus \C) \cup \A$. By induction hypothesis, there is a $\C' = \{(C, w) \mid (C,u) \in \C\} \subseteq \F_i'$ such that if $u \not= \infty$ then $w = u$, else $w$ is finite. We  define $\F'_{i-1}$ as $(\F_i' \setminus \C') \cup \A'$ for some $\A'$ satisfying:
\begin{enumerate}
    \item $\A' = \{(C, w) \mid (C, u) \in \A\}$ such that if $u \not= \infty$ then $w = u$, else $w$ is finite, and
    \item there is a proof $\A'; \ldots; \C'$
\end{enumerate}
As a result, $\F_{i-1}'$ has the same clauses as $\F_{i-1}$ but with finite weight. Moreover, since there is a proof $\A'; \ldots; \C'$ where $\A' \in \F_{i-1}'$ and $\C'\in \F_i'$, then there is a proof $\F_{i-1}' ; \ldots; \F_i'$ and, since by induction $\F_i' \vdash \{(\Box, 1)\}$, then $\F_{i-1}' \vdash \{(\Box, 1)\}$.

Next, we show how to obtain such $\A'$ for the different cases. If all clauses in $\A$ have finite weight then all clauses in $\C$ have also finite weight. As a consequence, $\C' = \C$. Then, $\A'=\A$ trivially satisfies the conditions. If some clause in $\A$ has infinite weight then we analyze each possible inference rule that can happen in the $\F_{i - 1};\F_i$ step:

\begin{itemize}
    \item \emph{Split}: By definition of the split rule, $\A = \{(C, \infty)\}$, $\C = \{(C \lor x, \infty), (C \lor \myneg x, \infty)\}$). Besides, $\C'=\{(C \lor x, u), (C \lor \myneg x, v)\}$. Then, $\A'=\{(C, \max\{u, v\})\}$ satisfies the conditions.
    
    
    \item \emph{Symmetric resolution}: By definition of the symmetric resolution rule, $\A = \{(C \lor x, \infty), (C \lor \myneg x, \infty)\}$, $\C= \{(C, \infty), (C \lor x, \infty), (C \lor \myneg x, \infty)\}$. Besides, $\C' = \{(C, u), (C \lor x, v), (C \lor \myneg x, v')\}$. Then,  $\A'=\{(C \lor x, u + v), (C \lor \myneg x, u + v')\}$ satisfies the conditions. 
    
    

    
    \item \emph{Merge} with both antecedents having infinite weight: By definition of merge rule, $\A = \{(C, \infty),$ $(C, \infty)\}$, $\C = \{(C, \infty)\}$. Besides, $\C' = \{(C, v)\}$. Then, $\A'=\{(C, v), (C, v)\}$ satisfies the conditions. 
    
    \item \emph{Merge} with one of the antecedents having finite weight: By definition of merge rule, $\A = \{(C, \infty), (C, u)\}$, $\C = \{(C, \infty)\}$. Besides, $\C' = \{(C, v)\}$:
    \begin{itemize}
    \item if $0 < v \leq u$, then $\A' = \{(C, 1), (C, u)\}$ satisfies the conditions.
    
    \item otherwise, $\A' = \{(C, v - u), (C, u)\}$. Note that $u$ could be a negative weight coming from a virtual rule. In any case, $v - u > 0$ and $\A'$ satisfies the conditions. 
    
    
    
    \end{itemize}
    
    
    \item \emph{Unmerge}: By definition of unmerge rule, $\A =\{(C, \infty)\}$, $\C = \{(C, \infty), (C, \infty)\}$. Besides, $\C' = \{(C, u), (C, v)\}$. Then, $\A' = \{(C, u + v)\}$ satisfies the conditions.
    

\end{itemize}




\end{proof}

\begin{theorem}\label{th-ressv2circular}
Consider a hard formula ${\cal H}=\{(C_1,\infty),\ldots,(C_p,\infty)\}$ and a MaxSAT proof ${\cal H}  \cup \{(\myneg A,1)\} \vdash_{ResSV} \{(\Box,1)\}$ of length $e$. There is a SAT circular proof $\Pi$ of $A$ from ${\cal H}'=\{C_1,\ldots,C_p\}$ with proof system having symmetric resolution and split. The length of the circular proof is $O(e)$.
\end{theorem}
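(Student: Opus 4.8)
The plan is to run the construction of the previous theorem backwards, but preceded by one preparatory rewriting of the given refutation. The rewriting is forced on us: the clauses making up $(\myneg A,1)$ (its Property~\ref{propNegCl} normal form) occur in the hypotheses of the MaxSAT proof but are not members of $\F'$, so if we simply threw them into the arbitrary set $\B$ of a circular proof they would in general acquire negative balance, and since $\F'$ entails neither those unit clauses nor $A$ there is no way to re-derive them. So first I would turn the given proof into a \textbf{ResSV} derivation of $(A,1)$ \emph{directly from} $\F$, with $(\myneg A,1)$ removed from the hypotheses.

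For that I would use the identity $\{(\myneg A,1)\}\equiv\{(\Box,1),(A,-1)\}$, which holds once negative weights are allowed: both cost $1$ on assignments satisfying $A$ and $0$ on the others. Starting from $\F$: apply the virtual rule twice to introduce $(\Box,1),(\Box,-1),(A,1),(A,-1)$; split $(\Box,1)$ along the $|A|$ variables of $A$, keeping at each step the piece that is one of the Property~\ref{propNegCl} clauses of $(\myneg A,1)$, so that $(\Box,1)$ is replaced by that CNF together with a residual $(A,1)$, which is merged against $(A,-1)$; then replay the given proof verbatim on the $\F\cup\{(\myneg A,1)\}$ part, the floating $(\Box,-1),(A,1)$ being inert for resolution and split; finally merge the derived $(\Box,1)$ against $(\Box,-1)$. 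This leaves a \textbf{ResSV} proof $\F\vdash_{ResSV}\{(A,1)\}\cup\R$ with all final weights positive and length $e+O(|A|)$, whose starting formula is $\F$ alone.

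Now I would read off the circular proof, mirroring the previous theorem in reverse. Take $\F'=\{C_1,\dots,C_p\}$, declare every clause of the new proof other than the $C_i$ (in particular $A$) to belong to $\B$, and make $A$ the target. Each MaxSAT split becomes a SAT split carrying flow equal to its weight; a MaxSAT resolution of $(x\lor A',v)$ and $(\myneg x\lor B',w)$ with $m=\min\{v,w\}$ becomes a short chain of $|B'|$ SAT splits off $x\lor A'$ (which along the way produces exactly the Property~\ref{propNegCl} pieces of the compensation clause $x\lor A'\lor\myneg B'$), a symmetric chain of $|A'|$ SAT splits off $\myneg x\lor B'$, and one symmetric resolution yielding $A'\lor B'$, every new inference node carrying flow $m$; virtual introductions contribute no inference node (they only certify membership in $\B$), and merges are absorbed by the compactification that identifies the copies of each $\B$-clause. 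Soundness of the three \textbf{ResSV} rules gives weight conservation, so for every clause $C$ the inflow minus the outflow at its node equals its final weight minus its initial weight in the new proof; since every non-$\F'$ clause started with weight $0$, its balance equals its final weight $\ge 0$, and the balance of $A$ equals its final weight $\ge 1>0$. Hence the flow is certifying, giving a circular proof of $A$ from $\F'$ of length $O(e)$ (up to the width of the clauses).

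I expect the real content to be the rewriting of the second paragraph: recognizing that $(\myneg A,1)$ is, modulo negative weights, the pair $(\Box,1),(A,-1)$, and that the virtual rule lets \textbf{ResSV} actually carry out this trade inside a derivation — this is precisely what makes the last step's translation stay within $\F'$ and never need to re-derive a clause that $\F'$ does not entail. The last step is then essentially bookkeeping; the only subtlety is that a single MaxSAT resolution has several consequents and so must be unfolded into a short sequence of SAT splits followed by one symmetric resolution, which conveniently manufactures the compensation clauses with the correct weights on the way.
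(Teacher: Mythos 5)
Your overall route is the paper's: translate each \textbf{ResSV} step into SAT inference nodes whose flow is the weight it moves, unfold each MaxSAT resolution into splits followed by one symmetric resolution (manufacturing the compensation clauses on the way), let compaction absorb merges, and certify the flow by weight conservation. Your preprocessing step is a genuinely nice variant: the paper keeps $(\myneg A,1)$ as a hypothesis and patches its negative balance at the very end by splitting the derived $\Box$ (with balance $1$) into $A$ and $\myneg A$ and compacting; you perform the same trade inside the MaxSAT proof up front via $\{(\myneg A,1)\}\equiv\{(\Box,1),(A,-1)\}$, which makes the balance argument uniform (``balance of every non-$\F'$ clause equals its final weight'') instead of requiring the paper's case-by-case repairs in its Phase 3. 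That part of your argument is correct.

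The gap is in the flow assignment for \emph{hard} clauses. A flow is by definition a map $f:I\to\mathbb{R}^+$ into the positive reals, but your rule ``flow equal to its weight'' (resp.\ ``flow $m=\min\{v,w\}$'') assigns $f(i)=\infty$ to every split of a hard clause and to every resolution between two hard clauses — and since $\F$ consists entirely of hard clauses, such steps are unavoidable in general. Correspondingly, your conservation identity ``balance $=$ final weight $-$ initial weight'' cannot hold for a \emph{derived} hard clause not in $\F'$: its final weight is $\infty$, so it would need infinite inflow, yet it must receive a finite, non-negative balance from finite flows. Fixing this is not automatic: you must choose, for each infinite-weight inference, a finite flow large enough to cover all downstream consumption of its consequents, and these choices are interdependent. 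The paper resolves exactly this with its Phase 2 — a reverse (bottom-up) traversal of the inference nodes replacing each infinite flow by the minimum finite value keeping the balances of its out-neighbours non-negative, which is well defined precisely because the traversal order guarantees those out-neighbours already have finite out-going flow. Your proof needs this step (or an equivalent uniform bound on how much flow a hard clause can emit) to go through.
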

\begin{proof}
From derivation ${\cal H}  \cup \{(\myneg A,1)\} \vdash_{ResSV} \{(\Box,1)\}$ we need to build a pre-proof $\Pi$ with a (possibly cyclic) graph $G(\Pi)=(J\cup I, E)$ and a flow $f(\cdot)$ that certifies that the pre-proof is indeed a circular proof. The graph must satisfy that ${\cal H}' \subset J$, $A\in J$; its inference nodes must be consistent with either symmetric resolution or split. Also, the flow $f(\cdot)$ must satisfy the balance conditions including that $A$ has strictly positive balance.

First, by Lemma~\ref{lema-infinity}, there exists an $\F = \{(C_1,w_1),\ldots,(C_p,w_p)\}$ with $w_k \not= \infty$ forall $1 \leq k \leq p$, such that $\F \cup \{(\myneg A,1)\} \vdash_{ResSV} \{(\Box,1)\}$ with length $O(e)$ where resolution is restricted to its symmetric form. Moreover, since the virtual rule does not have antecedents all its applications can be done at the beginning of the derivation and all the cancellation of all the virtual clauses can be done at the end. 
Therefore, $\F  \cup \{(\myneg A,1)\} \vdash_{ResSV} \{(\Box,1)\}$ implies the existence of a derivation $\Gamma$, 
$$\F_0 = (\F \cup \{(\myneg A, 1)\} \cup \B); \F_1; \F_2; \ldots; (\G \cup \{(\Box, 1)\} \cup \B); ...;\F_{m-1}; (\G \cup \{(A, 1), (\myneg A, 1)\} \cup \B) = \F_m$$

\noindent where $\B$ is the set of clauses with positive weight added by the virtual rule in the original \textbf{ResSV} derivation, $m = O(e)$, and the only inference rules needed are split and symmetric resolution (along with the usual merge and unmerge).

First, we build the (acyclic) graph $G(\Gamma)$  along with a flow function $f(\cdot)$. 
Let $G_i(\Gamma) = (J_i \cup I_i, E_i)$ be the graph at step $i$, $b_i(C)$ be the balance of node $C \in J_i$ in $G_i(\Gamma)$, and let $merged(\F_i)$ be equivalent to $\F_i$ with no repeated clauses.

We will traverse the derivation $\Gamma$ from $\F_0$ to $\F_m$ ensuring that, at each step $i$, $G_i(\Gamma)$ satisfies:
\begin{enumerate}
    \item $\forall (C, w) \in \F_i$, $C \in J_i$
    \item $\forall (C, w) \in merged(\F_i)$, $w = b_i(C)$
    \item all nodes in $J_i$ are different
\end{enumerate}
We proceed by induction on the step $i$.

Base case ($i = 0$). Then:
\begin{itemize}
\item[$-$] $J_0 = \{C \mid (C, w) \in merged(\F_0)\}$
\item[$-$] $I_0 = \{d_C \mid C \in J_0\}$ (dummy inference nodes)
\item[$-$] $E_0 = \{(d_C, C) \mid C \in J_0, d_C \in I_0\}$
\item[$-$] $\forall {(C, w) \in merged(\F_0)}, f(d_C) = w$
\end{itemize}

Inductive step: Let $\A \subseteq \F_i$ and $\C \subseteq \F_{i+1}$ be the antecedents and consequents of $\F_i; \F_{i+1}$ respectively. Note that, by induction hypothesis for every clause $(C, w) \in \A$ there is a node $C \in J_i$. The construction of $G_{i+1}(\Gamma)$ depends on the inference rule used:

\begin{itemize}
    \item Split/Symmetric resolution: 
    
        \begin{itemize}
            \item $J_{i+1} = J_i \cup \{C \mid (C, w) \in \C \land C \not\in J_i\}$
            \item $I_{i+1} = I_i \cup \{i+1\}$
            \item $E_{i+1} = E_i \cup \{(C, i + 1) \mid (C, w) \in \A\} \cup \{(i+1, C) \mid (C,w)\in \C\}$
            \item $f(i+1) = w$, where $w$ is the common weight of all clauses in $\A$
        \end{itemize}
    As a result, $\forall (C, w) \in \A, b_{i+1}(C) = b_i(C) - w$ and $\forall (C, w) \in \C, b_{i+1}(C) = b_i(C) + w$. Since $\forall (C, w) \in \A$, its weight in $merged(\F_{i+1})$ is decreased by $w$ wrt its weight in $merged(\F_i)$, and $\forall (C, w) \in \C$, its weight in  $merged(\F_{i+1})$ is increased by $w$ wrt its weight in $merged(\F_i)$, we can guarantee that $G_{i+1}(\Gamma)$ satisfies (1), (2) and (3).

    \item Merge/Unmerge: since $merged(\F_{i+1}) = merged(\F_i)$, we define $G_{i + 1}(\Gamma)$ as  $G_i(\Gamma)$ which, by induction hypothesis, satisfies (1), (2) and (3).
\end{itemize}

The result is that there is a node in $G_m(\Gamma)$ for all clauses in $merged(\F_m)$ and the weight of each of them corresponds to its balance. In particular, ${\cal H} ' \in J_m$; $A \in J_m$ and $b_m(A) \geq 1$; $\forall (C, w) \in \B$, $b_m(C) \geq w$; and $\forall C \in \myneg A$, $C \in J_m$ and  $b_m(C) \geq 1$.

Let $G(\Pi) = (J_p \cup (I_p \setminus I_0), E_p \setminus E_0)$. Since $\forall (C, w) \in merged(\F_0)$, $d_C \in I_0$ and $f(d_C) = w$:
\begin{itemize}
\item $\forall (C, w) \in \B$,  $b(C) = b_m(C) - w \geq 0$;
\item $\forall C \in {\cal H}'$,  $b(C)$ may become negative (but they are the hard clauses);
\item $\forall C\in \myneg A$, $b(C) = b_m(C) - 1 \geq 0$. 
\end{itemize}

\noindent Moreover, balance $b(A)$ remains positive.
\end{proof}

\section{Related Work} \label{Sec-Related}
In this Section we review and discuss some works in chronological order that have influenced the research presented in this paper.

\subsection{Soft Probing}\label{Sec-softprobing}

\textbf{ResSV} contains three rules that provide increasing refutational power. While increasing the power is a desirable feature, having more rules to choose from makes the automatization more difficult. Therefore, one practical challenge is to use split and virtual in a controlled but potentially useful way. Soft Probing is a technique that was used as a pre-process in the MiniMaxSAT solver~\cite{DBLP:journals/jair/HerasLO08} to extract an initial lower bound from MaxSAT formulas. It can be seen as a simple, yet efficient implementation of this idea. In the original paper, the technique is presented algorithmically and very briefly. Next, we show how it fits into the context of this paper. 

Consider the following theorem,
\begin{theorem}
Let $\F$ be a weighted MaxSAT formula. If there is a unary (i.e, made exclusively of unit clauses) formula ${\cal U}$ such that if $(l,w)\in {\cal U}$ and $k=\sum_{(l,w)\in {\cal U}} w$ then $(\myneg l,u)\notin {\cal U}$, and,
\begin{enumerate}
    \item $\F \cup {\cal U} \vdash_{Res} \G \cup \{(\Box, k)\} $
    \item $\G \cup \myneg {\cal U} \vdash_{Res} \{(\Box, k')\}$ 
\end{enumerate}
Then, $\F \vdash_{ResSV} \{(\Box, k')\}$.
\end{theorem}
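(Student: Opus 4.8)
The plan is to realize the desired \textbf{ResSV} refutation of $\F$ as the concatenation of three pieces: first use the virtual rule to introduce the unit clauses of ${\cal U}$ together with their negative-weight companions; then run the two given \textbf{Res} derivations one after the other; and finally observe that the negative clauses introduced by virtual get cancelled, leaving a genuine (positive-weight) derivation of $(\Box,k')$.

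More concretely, for each $(l,w)\in {\cal U}$ apply the virtual rule to add $\{(l,w),(l,-w)\}$; after merging equal clauses this turns $\F$ into $\F\cup {\cal U}\cup {\cal U}^-$, where ${\cal U}^-=\{(l,-w)\mid (l,w)\in{\cal U}\}$ collects the negative companions and $\mathit{rf}({\cal U}^-)=-k$. The hypothesis that $(\myneg l,u)\notin{\cal U}$ whenever $(l,w)\in{\cal U}$ is what guarantees that no cancellation happens prematurely among the positive unit clauses, so ${\cal U}$ really appears intact. Now apply derivation (1), which gives $\F\cup{\cal U}\cup{\cal U}^- \vdash_{ResSV} \G\cup\{(\Box,k)\}\cup {\cal U}^-$ — the negative clauses are simply carried along untouched since no inference rule in \textbf{ResSV} uses negative-weight clauses as antecedents. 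Next, split the $(\Box,k)$ back onto the literals of ${\cal U}$: using a sequence of split steps, $(\Box,k)$ can be rewritten as exactly ${\cal U}' := \{(l,w)\mid (l,w)\in{\cal U}\}$; here I would spell out that distributing $k=\sum w$ unit by unit, each split $(\Box,w'); (l,w'),(\myneg l,w')$ — actually I need to be a bit more careful, since split on $\Box$ produces both $(l,w)$ and $(\myneg l,w)$. The right move is: the proof of Property~\ref{prop-Res-sphp1}-style bookkeeping shows $(\Box,k)$ can be re-expressed, after splitting, as $\myneg {\cal U}$ plus ${\cal U}$ in a way that then lets derivation (2) consume $\G\cup\myneg {\cal U}$. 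Re-examining: we want to feed derivation (2), which needs $\G\cup\myneg{\cal U}$, so from $(\Box,k)$ I should split to obtain the clauses of $\myneg {\cal U}$ (i.e., $(\myneg l,w)$ for each $(l,w)\in{\cal U}$) — this is possible precisely because $k=\sum_{(l,w)\in{\cal U}} w$, and each split also deposits a copy of $(l,w)$, which merges with... no. Let me restructure.

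The cleanest route: after step (1) we have $\G\cup\{(\Box,k)\}\cup{\cal U}^-$. Apply Lemma~\ref{l}-type moves, or more simply a sequence of splits, to convert $(\Box,k)$ into $\myneg{\cal U}$: for each $(l,w)\in{\cal U}$ peel off weight $w$ from the empty clause via the identity $\{(\Box,w)\}\equiv\{(\myneg l,w),(l,w)\}$... but that introduces $(l,w)$ too. However, $(l,-w)$ sits in ${\cal U}^-$, so $(l,w)$ and $(l,-w)$ merge and vanish. Thus the splits simultaneously produce $\myneg{\cal U}$ and annihilate ${\cal U}^-$, leaving exactly $\G\cup\myneg{\cal U}$ with all positive weights. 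Now derivation (2) yields $\{(\Box,k')\}$, and since everything is positive the final formula witnesses $\F\vdash_{ResSV}\{(\Box,k')\}$ per the refined definition of $\vdash$.

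The main obstacle I anticipate is the bookkeeping in the middle: making the split phase precise so that the positive byproducts $(l,w)$ of splitting $(\Box,k)$ are exactly matched and cancelled by the negative clauses ${\cal U}^-$ deposited by the virtual rule, and checking the weights balance ($\sum w = k$ on both sides). A secondary point to verify carefully is that carrying negative clauses through an arbitrary \textbf{Res} derivation (1) is legitimate — this is immediate because \textbf{Res} (and \textbf{ResS}) never inspect negative-weight clauses, so they are inert spectators — and that the final formula has no residual negative weights, which is what the refined $\vdash_{ResSV}$ requires for soundness.
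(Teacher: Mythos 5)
Your proposal is correct and follows essentially the same route as the paper's own proof: introduce ${\cal U}\cup{\cal U}^-$ via the virtual rule, run derivation (1), split $(\Box,k)$ into ${\cal U}\cup\myneg{\cal U}$ so that ${\cal U}$ cancels ${\cal U}^-$, and finish with derivation (2). The extra bookkeeping you work through (matching the positive byproducts of the splits against the negative companions) is exactly the cancellation step the paper performs.
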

\begin{proof}
From $\F$ we apply the virtual rule with every unit clause in ${\cal U}$ obtaining $\F \cup {\cal U} \cup {\cal U}^-$ with ${\cal U}^-=\{(l,-w) \mid (l,w)\in {\cal U}\}$. Then, we use the first proof in the theorem obtaining
$$\G \cup {\cal U}^- \cup \{(\Box, k)\}$$
Using the split rule, we transform $(\Box, k)$ into ${\cal U} \cup \myneg {\cal U}$ obtaining
$$\G \cup {\cal U}^- \cup {\cal U} \cup \myneg {\cal U}$$
Then we eliminate ${\cal U}^- \cup {\cal U}$ which cancel each other and use the second proof of the theorem to obtain
$$\{(\Box,k')\}$$
\end{proof}

This Theorem gives a \textbf{Res} condition to identify a \textbf{ResSV} derivation that produces an increment in the lower bound. 
Soft Probing applies this theorem iteratively for every literal $l$ in the formula. At each step, ${\cal U}$ is restricted to $\{(l,w)\}$ and it only considers unit propagation (which can be implemented efficiently) for the two derivations.


Now, a natural question arises: how powerful is \textbf{ResSV} when restricted to the use of this Theorem? Interestingly enough, it is sufficient for refuting the $PHP$ and $SPHP$ in polynomial time. The following property shows that both problems satisfy the conditions of the previous theorem.

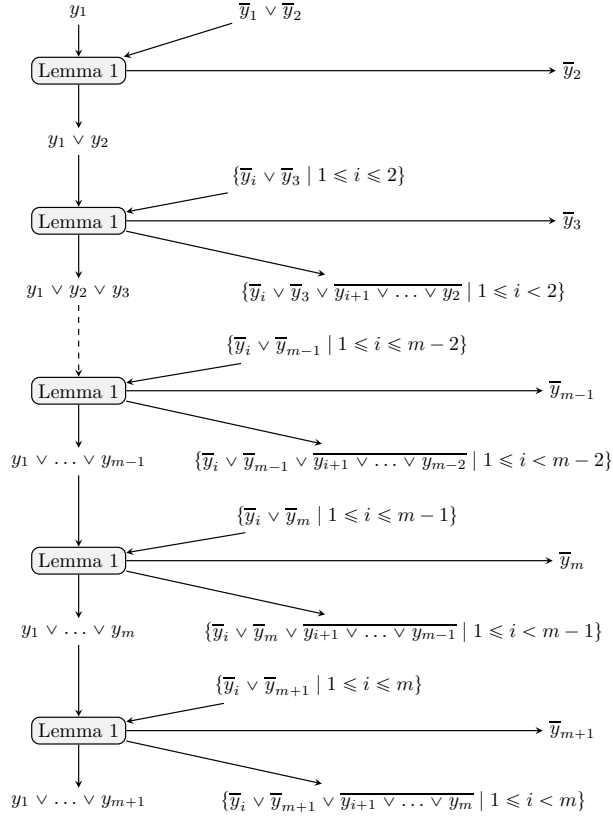
\begin{figure}[t]
    \centering
\scalebox{0.75}{\begin{tikzpicture}

\tikzset{%
every node/.style={fill=white, rounded corners, node distance=5em},%
every path/.style={->, >=stealth, line width=.06em}%
}

\tikzstyle{op}=[draw, fill=black!5!white]

\node(-x){$y_1\lor \ldots \lor y_{m+1}$};
\node[right of=-x,xshift=4cm](-nr){$\{ \myneg y_i \lor \myneg y_{m+1} \lor \myneg{y_{i+1} \lor \ldots \lor y_{m}} \mid 1 \leq i < m\} $};

\node[op, above of=-x, yshift=-0.5cm](op-x){Lemma~\ref{le1}};
\node[right of=op-x,xshift=7cm](-n){$\myneg y_{m+1} $};

\draw (op-x) to (-x);
\draw (op-x) to (-n);
\draw (op-x) to (-nr);

\node[right of=op-x, xshift=2.5cm, yshift=0.8cm](-k1-k){$\{ \myneg y_i \lor \myneg y_{m+1} \mid 1 \leq i \leq m\}$};
\node[above of=op-x](k1-k){$y_1 \lor \ldots \lor y_{m}$};
\node[right of=k1-k,xshift=4cm](k1-k-rr){$\{ \myneg y_i \lor \myneg y_{m} \lor \myneg{y_{i+1} \lor \ldots \lor y_{m-1}} \mid 1 \leq i < m- 1\} $};
\draw (-k1-k) to (op-x);
\draw (k1-k) to (op-x);

\node[op, above of=k1-k, yshift=-0.5cm](opk1-k-r){Lemma~\ref{le1}};
\node[right of=opk1-k-r,xshift=7cm](k1-k-r){$\myneg y_{m} $};

\draw (opk1-k-r) to (k1-k);
\draw (opk1-k-r) to (k1-k-r);
\draw (opk1-k-r) to (k1-k-rr);

\node[right of=opk1-k-r, xshift=3cm, yshift=0.8cm](-k2-k){$\{ \myneg y_i \lor \myneg y_{m} \mid 1 \leq i \leq m - 1\}$};
\node[above of=opk1-k-r](k2-k){$y_1 \lor \ldots \lor y_{m-1}$};
\node[right of=k2-k,xshift=4cm](k2-k-rr){$\{ \myneg y_i \lor \myneg y_{m-1} \lor \myneg{y_{i+1} \lor \ldots \lor y_{m-2}} \mid 1 \leq i < m- 2\} $};
\draw (k2-k) to (opk1-k-r);
\draw (-k2-k) to (opk1-k-r);

\node[op, above of=k2-k, yshift=-0.5cm](opk2-k){Lemma~\ref{le1}};
\node[right of=opk2-k, xshift=7cm](k2-k-r){$\myneg y_{m-1}$};

\draw (opk2-k) to (k2-k);
\draw (opk2-k) to (k2-k-r);
\draw (opk2-k) to (k2-k-rr);

\node[right of= opk2-k, xshift=3cm, yshift=0.8cm] (-k3-k) {$\{ \myneg y_i \lor \myneg y_{m-1} \mid 1 \leq i \leq m-2\}$};
\node[above of=opk2-k](3-k){$y_1 \lor y_2 \lor y_3$};
\node[right of=3-k,xshift=4cm](3-k-rr){$\{ \myneg y_i \lor \myneg y_{3} \lor \myneg{y_{i+1} \lor \ldots \lor y_{2}} \mid 1 \leq i < 2\} $};
\draw[dashed] (3-k) to (opk2-k);
\draw (-k3-k) to (opk2-k);

\node[op, above of=3-k, yshift=-0.5cm](op3-k){Lemma~\ref{le1}};
\node[right of=op3-k, xshift=7cm](3-k-r){$\myneg y_3$};

\draw (op3-k) to (3-k);
\draw (op3-k) to (3-k-r);
\draw (op3-k) to (3-k-rr);

\node[right of=op3-k, xshift=2.5cm, yshift=0.8cm](-2-k){$\{ \myneg y_i \lor \myneg y_3 \mid 1 \leq i \leq 2\}$};
\node[above of=op3-k, node distance=4em](2-k){$y_1 \lor y_2$};
\draw (-2-k) to (op3-k);
\draw (2-k) to (op3-k);

\node[op, above of=2-k, yshift=-0.5cm](op2-k){Lemma~\ref{le1}};
\node[right of=op2-k, xshift=7cm](2-k-r){$\myneg y_2$};

\draw (op2-k) to (2-k);
\draw (op2-k) to (2-k-r);

\node[above of=op2-k, node distance=3em](1k1){$y_{1}$};
\node[right of=1k1, node distance=4em, xshift=2cm](-1-k){$\myneg y_1 \lor \myneg y_2$};
\draw (-1-k) to (op2-k);
\draw (1k1) to (op2-k);

\end{tikzpicture}}
\caption{Derivation graph corresponding to hole $j$. For clarity purposes, we rename each variable $x_{ij}$, $1 \leq i \leq m + 1$ to $y_i$. All clauses have cost 1.}
    \label{fig:sphp1hole}
\end{figure}

 \begin{property}
 Consider the $PHP$ and $SPHP$ problems and let ${\cal U}=\{(x_{11},1), (x_{12},1), \ldots, (x_{1m},1)\}$.
 \begin{itemize}
     \item There is a proof $PHP \cup {\cal U} \vdash_{Res} PHP \cup \{(\Box,m)\}$
     \item There is a proof $PHP  \cup \myneg {\cal U} \vdash_{Res} \{(\Box,1)\}$
     \item There is a proof $SPHP \cup {\cal U} \vdash_{Res} \G \cup \{(\Box,m)\}$
     \item There is a proof $\G  \cup \myneg {\cal U} \vdash_{Res} \{(\Box,1)\}$
 \end{itemize}
 \end{property}
\begin{proof}
First, we prove the $SPHP$ case. The first refutation of $SPHP$ is as follows. First, for each hole $j$ and $\{(x_{1j},1)\}$ there is a derivation of $\{(\myneg x_{ij},1) \mid 2 \leq i \leq m+1\}$ (see Figure~\ref{fig:sphp1hole}). Then, for each pigeon $i>1$ and $\{(\myneg x_{ij}, 1) \mid 1 \leq j \leq m\}$, there is a derivation of $\{(\Box, 1)\}$ (see Figure~\ref{fig:sphp1} (left)). Therefore, concatenating the previous derivations we get,

$$SPHP \cup {\cal U} \vdash_{Res} {\cal G} \cup \{(\Box, m)\} $$

\noindent where clause $\{(x_{11} \lor x_{12} \lor \ldots \lor x_{1m}, 1)\} \in {\cal G}$. Figure~\ref{fig:sphp1} (left) shows the derivation graph of the second refutation,

$$\{(x_{11} \lor x_{12} \lor \ldots \lor x_{1m}, 1)\} \cup \myneg {\cal U} \vdash_{Res} \{(\Box, 1)\}$$

\noindent which completes the proof.

Let us now prove the $PHP$ case. Since $PHP \vdash_{Res} PHP \cup SPHP$ by unmerging each hard clause $(C, \infty) \in PHP$ into $(C, \infty), (C, 1)$ and we have proved that $SPHP \cup  {\cal U} \vdash_{Res} \G \cup \{(\Box,m)\}$, then $PHP \cup {\cal U} \vdash_{Res} PHP \cup \{(\Box,m)\}$. Since we have proved that $\{(x_{11} \lor x_{12} \lor \ldots \lor x_{1m}, 1)\} \cup \myneg {\cal U} \vdash_{Res} \{(\Box, 1)\}$, unmerging $(x_{11} \lor x_{12} \lor \ldots \lor x_{1m}, \infty) \in PHP$ into $(x_{11} \lor x_{12} \lor \ldots \lor x_{1m}, \infty), (x_{11} \lor x_{12} \lor \ldots \lor x_{1m}, 1)$ completes the proof. 


\end{proof}

\subsection{OSAC}
\textit{Weighted Constraint Satisfaction Problems} (WCSPs) are optimization problems defined by a network of local cost functions defined over discrete variables. Thus, MaxSAT can be seen as a particular type of WCSP where the local cost functions are the clauses and variables are boolean\cite{DBLP:conf/cp/GivryLMS03}. WCSP solvers compute lower bounds by enforcing \textit{local consistency}. This is achieved by moving costs around the network using two equivalence preserving operations: \textit{projection} and \textit{extension}. WCSP projection is similar to MaxSAT symmetric resolution and WCSP extension is similar to 
split. The main difference is that in the WCSPs movements are restricted to pre-defined subsets of variables (i.e, the scopes of the original cost functions), while in \textbf{ResSV} the proof system gives complete freedom on the variables involved in the clauses. This freedom is needed to guarantee completeness, which is not a problem in the WCSP context where local consistency is not used as a stand-alone algorithm, but only as a heuristic.

Optimal Soft Arc Consistency OSAC \cite{DBLP:journals/ai/CooperGSSZW10} introduced the idea of allowing weights to become negative during the process. As in our case, it is shown that the lower bound is valid (i.e, sound) as long as all the weights are positive at the end of the process. Interestingly,  OSAC can be enforced with a linear program. Solving the linear program produces the optimal lower bound is obtained (optimal with respect to the pre-defined scopes on which costs can be moved to). 

Thus, OSAC is reminiscent to a \textbf{ResSV} proof restricting new clauses to pre-defined (and of bounded size) sets of variables. Interestingly, the efficiency of \textbf{ResSV} on the SPHP problem does not rely on the size of the clauses which is as high as the number of pigeons and holes, and therefore unbounded.

\subsection{Dual Rail Encoding}\label{Sec-dualrail}
In their recent work \cite{DBLP:conf/sat/IgnatievMM17,DBLP:conf/aaai/BonetBIMM18} introduce the dual rail encoding which transforms a SAT formula ${\cal F}$ over variables $X=\{x_1,\ldots,x_s\}$ (i.e., all clauses are hard) into a MaxSAT formula ${\cal M}$ over variables $N=\{n_1,\ldots,n_s\}$ and $P=\{p_1,\ldots,p_s\}$. The dual encoding of clause $C\in {\cal F}$ is a hard clause in which each unnegated literal $x_i$ in $C$ is replaced by $\myneg n_i$, and each negated literal $\myneg x_i$ in $C$ is replaced by $\myneg p_i$. Additionally, for each variable $x_i$ the dual encoding adds three new clauses: $(p_i,1)$, $(n_i,1)$ and $(\myneg p_i \lor \myneg n_i,\infty)$. The resulting MaxSAT formula ${\cal M}$ is made exclusively of horn clauses, where only unit clauses are soft. 

It is shown that ${\cal F}$ is satisfiable iff $s=MaxSAT({\cal M})$. They also show that $s\leq MaxSAT({\cal M})$. Accordingly, a \textit{dual rail MaxSAT refutation}, which is a proof of ${\cal F}$ unsatisfiability, is defined as a proof of $MaxSAT({\cal M})\geq \{(\Box,s+1)\}$.

 They show that there is a polynomial size proof $MaxSAT({\cal M})\vdash_{Res} \{(\Box,s+1)\}$ which indicates that the dual rail encoding makes the $PHP$ tractable \footnote{the refutation is very similar to the proof of Property~\ref{prop-Res-sphp1}} and therefore dominates the SAT resolution proof system.
 In their work it is not clear which of the dual rail ingredients (e.g. horn clauses, unit cost soft clauses, renaming,...) if not all, are really needed for this domination.
 The following Theorem shows that \textbf{ResSV} is at least as powerful as the dual encoding, which indicates that the true power of the dual encoding comes only from the introduction of the unary costs.
 
 \begin{theorem} 
 \label{th3}
 \textbf{ResSV} with variable aliases can simulate the dual rail encoding.
\end{theorem}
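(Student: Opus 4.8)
The plan is to turn any dual rail refutation of a SAT formula ${\cal F}=\{C_1,\ldots,C_p\}$ over $X=\{x_1,\ldots,x_s\}$ — that is, a \textbf{Res} proof of $(\Box,s+1)$ from its dual rail encoding ${\cal M}$ — into a \textbf{ResSV} proof $\mathcal{F}\vdash_{ResSV}(\Box,1)$ (using variable aliases) of size polynomial in that of the given refutation; by Corollary~\ref{coro-1} such a proof certifies $MaxSAT({\cal F})\geq 1$, i.e.\ that ${\cal F}$ is unsatisfiable. By a \emph{variable alias} I mean the device of introducing, at any point in a derivation, a fresh variable $y$ together with the hard clauses that encode $y\leftrightarrow l$ for a literal $l$ over the variables already present; since $y$ is fresh this is a conservative extension (it does not change the MaxSAT value), so soundness is preserved.

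The simulation has four phases. \textbf{(1) Aliasing.} For each $i$ introduce two fresh variables $p_i$ and $n_i$ as aliases of $x_i$ and $\myneg x_i$, i.e.\ add the hard clauses $\myneg p_i\lor x_i$, $p_i\lor\myneg x_i$, $\myneg n_i\lor\myneg x_i$, $n_i\lor x_i$; resolving the appropriate pairs on $x_i$ yields the hard clauses $(\myneg p_i\lor\myneg n_i,\infty)$ and $(p_i\lor n_i,\infty)$. \textbf{(2) Re-encoding.} For each $C_j$, a chain of resolutions against the alias clauses replaces every positive occurrence of $x_i$ by $\myneg n_i$ (resolve $x_i\lor A$ against $\myneg x_i\lor\myneg n_i$) and every negative occurrence $\myneg x_i$ by $\myneg p_i$ (resolve $\myneg x_i\lor A$ against $x_i\lor\myneg p_i$); because MaxSAT resolution leaves hard clauses in place, the alias clauses remain available and we end up with all the hard clauses of ${\cal M}$ (the dual-encoded $C_j^{dr}$ and the clauses $(\myneg p_i\lor\myneg n_i,\infty)$) present, alongside harmless leftovers. \textbf{(3) Synthesizing the soft units.} The only missing ingredients of ${\cal M}$ are the soft clauses $(p_i,1),(n_i,1)$. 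Apply the virtual rule once to introduce $(\Box,s)$ and $(\Box,-s)$; split $(\Box,s)$ into one unit $(p_i,1),(\myneg p_i,1)$ per variable; and turn each $(\myneg p_i,1)$ into $(n_i,1)$ by resolving it against the hard clause $(p_i\lor n_i,\infty)$ from Phase~1. The formula now contains all of ${\cal M}$, plus $(\Box,-s)$, plus only hard or positive leftovers. \textbf{(4) Replay.} Finally replay the given \textbf{Res} refutation of $(\Box,s+1)$ from ${\cal M}$ verbatim — every step is an ordinary resolution, hence legal in \textbf{ResSV} — producing $(\Box,s+1)$; merging with the leftover $(\Box,-s)$ yields $(\Box,1)$, and since the negative weight is now cancelled while every remaining clause is positive or hard, this is a valid \textbf{ResSV} refutation. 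The step counts are $O(s)$ for phases~1 and~3, $O(\sum_j|C_j|)$ for phase~2, and the length of the dual rail refutation for phase~4, so the whole proof is polynomial.

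The main obstacle is Phase~3: the soft unit clauses $(p_i,1)$ and $(n_i,1)$ are axioms of the dual rail encoding but must be manufactured inside the \textbf{ResSV} proof, and doing so without being left with uncancellable negative weights is exactly what forces the combined use of the virtual rule and the hard clause $p_i\lor n_i$ obtained from the aliases; one must check that the residual $(\Box,-s)$ introduced there is precisely annihilated by the $(\Box,s+1)$ produced in Phase~4, so that the final formula meets the positivity requirement in the definition of $\vdash_{ResSV}$. The remaining verifications — that the alias clauses indeed yield a conservative extension, and that each re-encoding and replay step is a legitimate instance of resolution or split — are routine.
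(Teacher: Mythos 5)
Your proof is correct and follows essentially the same strategy as the paper's: introduce alias variables via hard equivalence clauses, re-encode the original clauses by resolution against them, use the virtual rule to inject $(\Box,s)$ and $(\Box,-s)$, split the former into unit literals and convert each $(\myneg p_i,1)$ into $(n_i,1)$ by resolving with a hard alias clause, then replay the dual rail refutation and cancel the negative weight. The only cosmetic difference is that you introduce two fresh aliases $p_i,n_i$ per variable, whereas the paper keeps $x_i$ itself in the role of $p_i$ and adds a single fresh $y_i$ with $x_i\leftrightarrow\myneg y_i$ in the role of $n_i$.
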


\begin{proof}
In the proof we allow \textbf{ResSV} to add for every original variable $x_i$ a new variable $y_i$ such that $x_i \leftrightarrow \myneg y_i$. Note that these fresh variables do not abbreviate formulas but only add variable aliases and, as a consequence, there is no gain in a proof system from adding them. 
In the following, we show that any SAT formula can be transformed to its dual rail encoding using \textbf{ResSV} inference only. 

Let  ${\cal F}$ be a SAT formula over $X=\{x_1,\ldots x_n\}$. For each variable $x_i$, we add hard clauses $x_i\lor y_i$ and $\myneg x_i \lor \myneg y_i$, where $y_i$ is a fresh variable. The clauses only indicate that $x_i$ and $\myneg y_i$ are equivalent (i.e, no new information is added). Now, resolve each clause $x_i\lor A \in {\cal F}$ with $\myneg x_i \lor \myneg y_i$ which means that a new clause $\myneg y_i \lor A$ is added to the formula. Clearly, at the end of this process we have for each original clause $C$, a new clause $C'$ where positive literals in $C$ have been replaced by their $\myneg y_i$ equivalent.  

Next, we apply $n$ virtual rules adding at each step two fresh clauses,
\begin{center}
 \[ \begin{array}{c} \hline
(\Box, 1) \hspace{0.5cm} (\Box, -1)
\end{array}\]    
\end{center}

\noindent and then split,

\begin{center}
 \[ \begin{array}{c} 
 (\Box,1) \\\hline
 (x_i, 1) \hspace{0.5cm} (\myneg x_i, 1)
\end{array}\]    
\end{center}

\noindent for each variable $x_i$. Next, we unmerge each $(x_i\lor y_i,\infty)$ into $(x_i\lor y_i,\infty), (x_i\lor y_i, 1)$, and then we resolve each $(\myneg x_i, 1)$ with $(x_i\lor y_i, 1)$,

\begin{center}
 \[ \begin{array}{c}
(\myneg x_i,1)\ \ \ (x_i\lor  y_i, 1)  \\
\hline
(y_i,1)\\ 
\end{array}\]    
\end{center}


\noindent The resulting formula contains all the clauses of the dual rail encoding, so we can simulate any dual rail refutation which, by definition, ends up generating $(\Box, n+1)$. The aggregation of the $n$ clauses $(\Box, -1)$ into $(\Box,-n)$, and then merging with $(\Box,n+1)$ produces $(\Box, 1)$. Using Corollary~\ref{coro-1} we know that this refutation proves unsatisfiability.
\end{proof}

\subsection{Equivalence between systems}\label{Sec-equiv}
In \cite{DBLP:conf/sat/BonetL20} Bonet and Levy study the equivalence between a proof system similar to \textbf{ResSV} and circular proofs. As in the dual rail encoding approach, they restrict their attention to the refutation of SAT formula. More precisely, they consider a SAT formula ${\cal F}$ and study MaxSAT refutations of the form ${\cal F}' \vdash \{(\Box, 1)\}$ where ${\cal F}'=\{(C,w) \mid C \in {\cal F}\}$, for some sufficiently large positive weight $w$.

In this setting, they independently showed that \textbf{ResSV} and SAT Circular Proofs are polynomially equivalent. This is the same as our Theorems~\ref{th-circular2ressv} and \ref{th-ressv2circular} because their choice of replacing infinities by sufficiently high weights does not affect the effectiveness of \textbf{ResSV}, as we shown in Lemma~\ref{lema-infinity}.

\section{Conclusions and Future Work}
\label{Sec-Conclusions}
Several approaches for MaxSAT solving have been proposed in the last years and most of the comparisons have been done empirically.
In this paper we set some basic definitions for a proof complexity approach, which we believe may be a very useful complement. From a descriptive point of view,
our theoretical approach provides a framework to explain under a common language some related work such as circular proofs (Section~\ref{Sec-Circular}), soft probing (Section~\ref{Sec-softprobing}) or dual rail (Section~\ref{Sec-dualrail}). Because proof systems break inferences into different rules, a proof complexity approach facilitates the understanding of the advantages and limitations of each different rule (the very recent work of \cite{DBLP:journals/eccc/FilmusMSV20} already gives some support to this claim). Our paper covers a first analysis of three inference rules: resolution, split and virtual, with split and virtual being original from our work. We show that the addition of each rule makes the proof system stronger. 

We expect this work to motivate other MaxSAT practitioners to use our framework to analyze their contributions. In particular we want to explore the relationship between certifying lower bounds with search algorithms and proof systems. This idea, which has shed so much light to the SAT case would be very beneficial also for MaxSAT.  

\bibliographystyle{plain}
\bibliography{mibiblio}

\end{document}